\documentclass[epj,nopacs]{svjour}

\usepackage{graphicx}
\usepackage{amsmath,amssymb,amsfonts,mathtools}
\usepackage{bm}

\usepackage{booktabs}
\usepackage{tabularx}
\usepackage{array}
\usepackage{enumitem}

\usepackage{microtype}
\usepackage{xcolor}
\usepackage{hyperref}

\hypersetup{
    colorlinks=true,
    linkcolor=blue,
    citecolor=blue,
    urlcolor=blue
}

\newcommand{\rank}{\operatorname{rank}}
\newcommand{\Ker}{\operatorname{Ker}}
\newcommand{\im}{\operatorname{Im}}
\newcommand{\cl}{\operatorname{cl}}
\newcommand{\R}{\mathbb{R}}
\newcommand{\CP}{\mathbb{CP}}
\newcommand{\Apr}{A_{\rm primary}}
\newcommand{\Xmicro}{X_{\rm micro}}
\newcommand{\Xprimary}{X_{\rm primary}}
\newcommand{\Lprimary}{L_{\rm primary}}
\newcommand{\Dprimary}{\mathcal D_{\rm primary}}
\newcommand{\Piobs}{\Pi_{\rm obs}}
\newcommand{\Madm}{\mathcal M_{\rm adm}}
\newcommand{\Vnull}{V_{\rm phys\text{-}null}}
\newcommand{\dop}{d_{\rm op}}
\newcommand{\deff}{d_{\rm eff}}

\begin{document}

\title{Compositional Operational Rank, Smooth Realizability, and Dimensional Reduction}

\author{
Nikolay D. Tulupov\inst{1}%
\thanks{ORCID: \href{https://orcid.org/0009-0004-2017-2288}
{0009-0004-2017-2288}}
\and
Alexander M. Tishin\inst{1,2}%
\thanks{ORCID: \href{https://orcid.org/0000-0003-2252-7279}
{0000-0003-2252-7279}}
}

\titlerunning{Compositional Operational Rank and Dimensional Reduction}
\authorrunning{N.D. Tulupov and A.M. Tishin}

\institute{
Phystech School of Radio Engineering and Computer Technology,
Moscow Institute of Physics and Technology,
Institutsky Lane 9, Dolgoprudny, 141700, Moscow Region, Russia
\and
Physics Faculty,
Lomonosov Moscow State University,
Leninskie Gory 1, Moscow, 119991, Russia
}

\mail{tulupov.nd@phystech.edu}

\date{}

\abstract{
Higher-dimensional theories distinguish the dimensionality of a microscopic
description from the number of directions accessible to a physical observer.
Relating these notions requires not only dimensional reduction, but also a
consistent preservation of operational independence across different levels
of description.
We establish sufficient conditions under which a compositional operational
rank is preserved first as the dimension of a smooth integral leaf and then
as an observable Fisher-information rank. The key requirement is
closure-faithfulness: the smooth realization must preserve the dependency
relations of the operational closure, rather than merely its total rank.
Together with real representability, constant rank, involutivity, and
observer/scale compatibility, this yields a local integral leaf whose
dimension equals the operational rank.
A further primary-identification condition is required at the measurement
level. If an independently specified observer image coincides locally with
the primary integral distribution and the admissible protocols provide local 
resolution, finite labeled-mixture closure, and locally constant outcome 
support, then the same rank can be realized by a finite composite 
Fisher-information protocol. The construction holds for arbitrary finite 
rank. At $r=4$ it provides a conditional operational criterion for 
dimensional reduction in higher-dimensional theories: a viable reduction 
must preserve the relevant operational dependency structure, not merely 
produce four effective coordinates. Quantum-hardware negative controls 
further show that the same restricted Fisher rank can conceal different 
larger ranks under a broader pre-specified measurement protocol. 
The framework therefore constrains the operational outcome of dimensional 
reduction without imposing a universal four-dimensional bound on microscopic 
state spaces.
\keywords{
Operational distinguishability,
compositional operational rank,
closure-faithfulness,
Fisher information,
dimensional reduction,
higher-dimensional theories
}
}

\maketitle

\section{Introduction}

Higher-dimensional theories distinguish two notions that need not coincide:
the dimensionality of a microscopic or formal description and the number of
independent directions accessible to a physical observer. In perturbative
string theory, internal quantum consistency selects $D=26$ for the bosonic
string and $D=10$ for superstrings \cite{polchinski,gsw}, while eleven
dimensions play a distinguished role in M-theory and in strong-coupling
relations among the ten-dimensional superstring theories
\cite{witten1995}. Standard approaches reconcile such higher-dimensional
descriptions with four-dimensional macroscopic physics through
compactification, localization, dynamical decoupling, or related mechanisms
\cite{coudarchet2024,apers2025}.

A reduction of geometric dimension, however, does not by itself determine
which independent physical distinctions survive in the reduced theory.
Microscopic directions may become inaccessible, redundant, or
indistinguishable for different physical reasons, while an incomplete
measurement may fail to resolve a direction that remains present.
Consequently, neither a four-dimensional effective description nor a measured
rank-four Fisher matrix is by itself sufficient to identify a
four-dimensional operational structure. The question addressed here is what
additional structural conditions are required for such an identification.

Three notions of rank enter this problem. An operational rank counts
elementary procedures that remain independent under finite composition. A
smooth rank counts independent tangent directions, for example the dimension
of an integral leaf of a distribution. An experiment probes the rank of a
Fisher information matrix, which counts parameter directions that produce
locally distinguishable changes in outcome statistics \cite{amari2016}.
These quantities are defined at different levels of description and need not
agree. In particular, a smooth realization can preserve total rank while
changing dependency relations among subsets of transitions, and a restricted
measurement can underestimate the locally accessible rank.

We establish sufficient conditions under which these three ranks can be
related without identifying them by assumption. The construction is valid for
an arbitrary finite operational rank $r$. Its central structural requirement
is closure-faithfulness: the smooth realization must preserve the
compositional dependency relations of the operational closure, rather than
merely its total rank. Together with real representability, regularity,
involutivity, and compatibility of the physical regime, this yields a local
integral leaf whose dimension equals the operational rank. We show explicitly
that total-rank agreement is insufficient, that closure-faithfulness and
involutivity are independent conditions, and that the criterion can be
preserved by some reductions while failing in non-Abelian compositional
realizations.

A further identification is required at the measurement level. We introduce
an observer architecture in which the microscopic description is mapped to a
primary observer sector before an admissible measurement protocol is applied.
The observer map is specified independently of the measured Fisher spectrum,
and its tangent image must coincide locally with the primary integral
distribution. Under this identification, local resolution by the admissible
protocol family, finite labeled-mixture closure, and locally constant outcome
support allow its resolving power to be represented by a finite Fisher matrix
of the same rank. Thus the operational, smooth, and Fisher ranks coincide only
under an explicit sequence of structural and measurement assumptions.

The physically relevant specialization considered here is $r=4$, but the
value four is not derived by the bridge. It is supplied by the pre-geometric
construction of Ref.~\cite{before}. There, requirements C1--C4 select the
representation dimension
\begin{equation}
 C1\text{--}C4 \Longrightarrow n=4,
 \label{eq:c1c4}
\end{equation}
whereas its identification with the closure-based operational rank requires
the additional Hypothesis H,
\begin{equation}
 H:\qquad \rank A([s])=n.
 \label{eq:H}
\end{equation}
This identification is Conjecture 3.1a of Ref.~\cite{before}, not a
consequence of C1--C4. With the additional requirement PR0 that the primary
observer sector itself satisfy the assumptions of that construction, the
rank-four specialization is therefore conditional:
\begin{equation}
 \mathrm{PR0}+H
 \Longrightarrow
 \rank \Apr([s])=4.
 \label{eq:chain2}
\end{equation}
The general bridge developed below does not depend on the validity of this
rank-four construction; it preserves whatever finite primary rank is supplied
to it.

The companion distinguishability-geometry framework \cite{dg} likewise does
not impose a four-dimensional microscopic state space. It introduces a smooth
distinguishability geometry on state spaces of arbitrary dimension, with an
effective distinguishability dimension that may depend on observational
scale. The representation dimension of Eq.~\eqref{eq:c1c4}, the primary
operational rank, the dimension of a smooth microscopic state space, and a
microscopic quantum Fisher rank must therefore remain distinct unless the
corresponding physical identifications are independently established.

This distinction changes the interpretation of dimensional reduction. A
$D$-dimensional microscopic or target-space theory is not inconsistent because
$D>4$. Rather, if it is to realize the conditional rank-four primary sector
above, it must provide a physically specified map to an observer-accessible
sector and preserve the operational dependency structure required by the
bridge. Compactification is one possible realization of such a map, but not
the unique one. Section~\ref{sec:operational-reduction} develops this criterion
for higher-dimensional theories. We show that the correct reduced dimension
does not by itself guarantee closure-faithfulness, examine a benign
Kaluza--Klein circle reduction, study surviving Wilson-line directions, and
exhibit non-Abelian compositional realizations in which the
operational-to-smooth bridge can fail.

A separate quantum-hardware negative control probes the measurement end of
the construction. Two closely related six-parameter circuit families have the
same restricted Fisher rank four, while the same pre-specified enlarged
measurement protocol reveals rank five in one family and rank six in the
other. The distinction is reproduced on three IBM Quantum processors and
remains robust under changes of shot count and calibration epoch. These
experiments do not measure primary operational rank and do not provide
evidence for fundamental four-dimensionality. Their role is to demonstrate
experimentally that measurement completeness cannot be inferred from a
restricted Fisher spectrum.

Figure~\ref{fig:conceptual-architecture} summarizes the architecture of the
argument. The compositional operational rank, its smooth realization, and the
observable Fisher rank belong to successive levels of description and coincide
only under the stated bridge and completeness assumptions. In parallel, a
higher-dimensional microscopic theory reaches an effective low-dimensional
sector through a separate physical reduction mechanism. Identifying that
sector with the primary operational sector is an additional physical step,
not a consequence of either construction alone.

\begin{figure*}[!t]
    \centering
    \includegraphics[width=\textwidth]{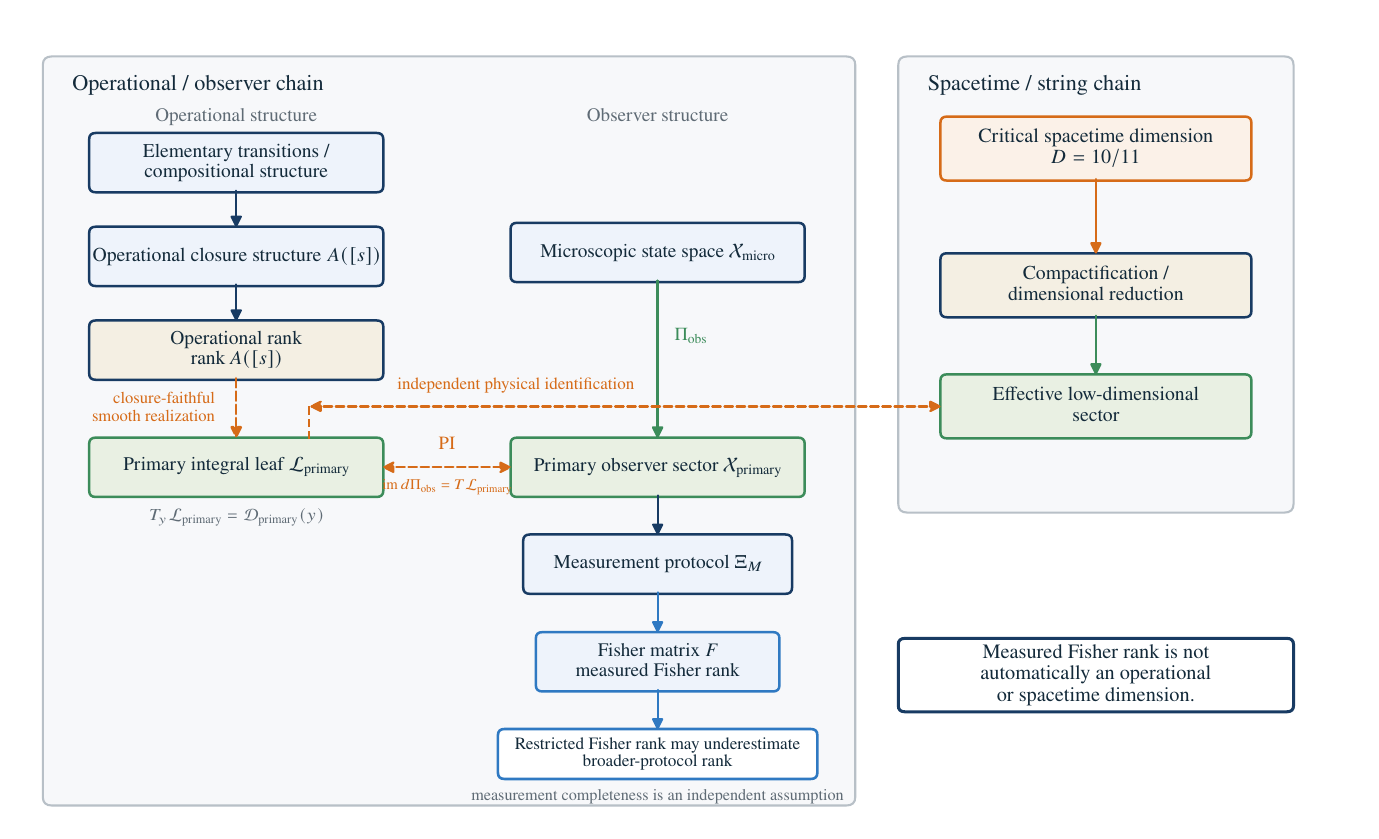}
    \caption{
    Conceptual architecture of the operational and spacetime reductions considered
    in this work. On the operational side, the compositional structure defines the
    operational closure structure $A([s])$ and its operational rank. A
    rank-preserving smooth realization requires real representability,
    preservation of the compositional dependency relations, constant rank,
    involutivity, and compatibility of the compared physical regime. An
    independently specified observer/pointer map and a locally resolving
    measurement protocol then determine the observable Fisher rank; incomplete
    measurements can only reduce the resolved rank. On the spacetime side, a
    critical microscopic dimension such as $D=10$ or $11$ is related to an
    effective low-dimensional sector through compactification or another
    dimensional-reduction mechanism. The identification of the resulting effective
    spacetime sector with the primary smooth operational sector is an independent
    physical assumption. Accordingly, a measured Fisher rank is not by itself an
    operational rank or a spacetime dimension
    }
    \label{fig:conceptual-architecture}
\end{figure*}

\section{Operational accessibility}
\label{sec:accessibility}

Let a smooth physical description be locally parametrized by coordinates
$q^A$, $A=1,\ldots,N$. The mere existence of $N$ formal coordinates does not
establish that all $N$ directions are independently observable. Operational
accessibility depends on what physical procedures are admissible and on which
distinctions survive the observer map.

This motivates the separation
\begin{equation}
 \Xmicro \xrightarrow{\Piobs} \Xprimary \xrightarrow{\Xi_M} \Delta_M,
 \label{eq:observer-architecture}
\end{equation}
where $\Xmicro$ is a microscopic or formal state space, $\Xprimary$ is the
observer/pointer sector relevant to the primary distinguishability structure,
and $\Delta_M$ is the probability simplex of the outcomes of protocol $M$.
The map $\Xi_M:\Xprimary\to\Delta_M$ assigns to each primary state its outcome
distribution under that protocol.
The resulting observable statistical map is
\begin{equation}
 \Phi_M:\Xmicro\longrightarrow\Delta_M,
 \qquad
 \Phi_M=\Xi_M\circ\Piobs.
 \label{eq:factorization}
\end{equation}

The central requirement is non-circularity. Neither the primary sector
$\Xprimary$ nor the observer map $\Piobs$ may be selected after examining the
Fisher spectrum in order to force a desired rank.
They must instead be fixed independently of that spectrum by the
physical observer structure, stable records, a specified coarse-graining, an
observable algebra, or an explicit open-system dynamics
\cite{breuer2016,schlosshauer2019}.

\section{Compositional operational rank}
\label{sec:operational-rank}

\subsection{The accessibility structure and its rank}

The definitions collected here are those of Ref.~\cite{before}, restated so
that
the remainder of the paper can be read independently of it. They define only a
discrete structure and its rank; no dimensional result or consistency
statement
from Ref.~\cite{before} is used in this subsection. 
Let $S$ denote the set of physically realizable states, with operational
indistinguishability relation $\sim$. For a local indistinguishability class
$[s]$, admissible procedures generate a local transition structure
\begin{equation}
 A([s])=(O([s]),T[s]).
 \label{eq:A}
\end{equation}
The operational dependence relation is compositional: a transition is
dependent on a set of other transitions if it can be reproduced by a finite
composition of the corresponding procedures.

For the present purpose it is useful to write this in closure form. If
$E([s])$ denotes a finite set of elementary transitions, let $\cl(X)=\langle
X\rangle$ denote the compositional closure of a subset $X\subseteq E([s])$,
that is, the set of transitions reproducible from $X$. The operational rank is
the maximal cardinality of a jointly non-redundant set with respect to this
closure.

\paragraph{Standing finite-rank assumption.}
For the general results of Sections~\ref{sec:smooth}--\ref{sec:observer} we
require only a finite set $E([s])$ of elementary transitions equipped with the
compositional closure $\cl$, of finite operational rank $r$.

Finiteness of the rank is then automatic, since a jointly non-redundant subset
of a finite set is finite. Nothing further about the origin of the pair
$(E([s]),\cl)$ is used, and nothing in this subsection fixes a value for $r$.
Ref.~\cite{before} obtains such a setting from two postulates, recalled in
Section~\ref{sec:rankfour} because they are needed for the instantiation
developed there and for comparison with that construction; they are not
assumptions of the general bridge once the pair is supplied.

\subsection{The instantiation at rank four}
\label{sec:rankfour}

Ref.~\cite{before} obtains the finite procedural setting used in its rank-four
construction from two additional postulates, recalled here for completeness.
They belong to that construction and are not assumptions of the general bridge
of Sections~\ref{sec:smooth}--\ref{sec:observer}, which needs only the
standing assumption above.

\paragraph{Axiom F (Finite resolution).}
For any indistinguishability class $[s]$ and any admissible procedure
$P\in\mathcal P$, the set of operationally distinguishable outcomes of $P$
applied to representatives of $[s]$ is finite.

\paragraph{Axiom G (Local finite generation).}
For any indistinguishability class $[s]$, the monoid $\mathcal P([s])$ of
admissible procedures is finitely generated: there exists a finite set
$G\subset\mathcal P([s])$ such that every element of $\mathcal P([s])$ is a
finite composition of elements of $G$.

Here $\mathcal P$ denotes the global set of physically admissible procedures,
of which a single procedure is written $P$, while $\mathcal P([s])$ is the
local monoid at $[s]$. In Ref.~\cite{before} these axioms supply the finite
procedural setting from which the rank-four construction is developed. Finite
generation by $G$ supplies a generating set rather than a basis. Whether the
minimal generating dimension agrees with the closure-based operational rank is
precisely the content of Hypothesis H.

Separately, a minimum-cardinality generating set $G$ of the local monoid
$\mathcal P([s])$ defines the associated free real vector space
\begin{equation}
 \mathbb V([s])=\R G,
 \qquad n=\dim\mathbb V([s])=|G|.
 \label{eq:V}
\end{equation}
Theorem 3.1 of Ref.~\cite{before} selects the minimal representation dimension
\begin{equation}
 n=4
 \label{eq:n4}
\end{equation}
under C1--C4. The identification of $n$ with $\rank A([s])$ is not
definitional and is not a consequence of C1--C4. It is Hypothesis H. In the
terminology of Ref.~\cite{before} the conditional layer is stated as
Conjecture 3.1a, yielding the emergent effective dimension $d_{\rm eff}=4$.

One inequality is available without H. Lemma A.5.2a of Ref.~\cite{before}
gives $\rank A([s])\ge n$ unconditionally, so C1--C4 already place a lower
bound of four on the operational rank, and Hypothesis H is what converts that
bound into an equality. The distinction propagates: without H, but under the
remaining smooth-realization assumptions, Section~\ref{sec:smooth-realization}
yields $\dim\Lprimary\ge4$, and under the corresponding observer and
measurement assumptions Corollary~\ref{cor:t1c} yields $\rank F_{\rm
primary}^{(M_{\rm comp})}\ge4$. Thus the falsifiable content of the
unconditional layer is a floor rather than a value.

The rank-four specialization depends on its premises in a way that matters
later. It requires $\Apr([s])$ itself to satisfy Axioms F and G and
C1--C4, and does not assert that a structure obtained by reduction from a
microscopic structure with those properties inherits them.
Section~\ref{sec:operational-reduction} takes this up as a separate
requirement on reduction mechanisms, stated there as PR0.

\begin{theorem}[T1a: conditional primary rank-four specialization]
\label{thm:t1a}
Let the primary local distinguishability structure $\Apr([s])$ satisfy Axioms
F and G, requirements C1--C4, and Hypothesis H. Then
\begin{equation}
 \rank \Apr([s])=4.
 \label{eq:t1a}
\end{equation}
\end{theorem}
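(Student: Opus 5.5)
The argument is a two-step chaining of results already recalled in Section~\ref{sec:rankfour}, applied to the primary structure $\Apr([s])$ rather than to a generic $A([s])$. First, since $\Apr([s])$ satisfies Axioms F and G, the local monoid $\mathcal P([s])$ of admissible procedures at $[s]$ is finitely generated. Hence a minimum-cardinality generating set $G$ exists, and the free real vector space $\mathbb V([s])=\R G$ of Eq.~\eqref{eq:V} is well defined with $n=|G|$ finite. Axiom F also guarantees that the elementary transition set $E([s])$ is finite. The standing finite-rank assumption is therefore met, and $\rank\Apr([s])$ is a well-defined finite number. This is the point to check carefully: the hypotheses of the theorem must place $\Apr([s])$ itself, not some microscopic ancestor, inside the setting of Ref.~\cite{before}. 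The statement assumes this directly, so no inheritance argument (the PR0 issue) is needed here.

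Second, I would invoke Theorem 3.1 of Ref.~\cite{before}. Under C1--C4 it selects the minimal representation dimension $n=4$, Eq.~\eqref{eq:n4}, for $\mathbb V([s])$ built from $\Apr([s])$. Hypothesis H, Eq.~\eqref{eq:H}, applied to the primary structure then gives $\rank\Apr([s])=n$, and substituting yields Eq.~\eqref{eq:t1a}. As a consistency check, I would note that Lemma A.5.2a of Ref.~\cite{before} already gives $\rank\Apr([s])\ge n=4$ without H. So H is used only to exclude the strict inequality, and the conclusion is compatible with the unconditional floor.

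I expect no computational obstacle; the proof is essentially a chain of citations. The only delicate step is bookkeeping. The integer $n$ in H must be the same minimum-cardinality generator count that Theorem 3.1 fixes. It must not be an arbitrary generating set, since finite generation supplies a generating set rather than a basis. I would make this explicit by fixing $G$ to be of minimum cardinality before invoking either result.
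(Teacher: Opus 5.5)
Your proposal is correct and is essentially the paper's own argument, which is given only implicitly in Section~\ref{sec:rankfour} and Appendix~A. There, $C1$--$C4\Rightarrow n=4$ (Theorem 3.1 of Ref.~\cite{before}) is combined with Hypothesis H, $\rank A([s])=n$, applied directly to $\Apr([s])$, and the PR0 inheritance issue does not arise because the hypotheses are imposed on the primary structure itself. One minor imprecision: finiteness of $E([s])$ comes from Axioms F and G jointly, not from F alone, but this step is not load-bearing, since H already forces the rank to equal the finite $n$.
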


\begin{remark}
T1a is not a statement about the dimension of a Hilbert space or the quantum
Fisher rank of an arbitrary family of states.
\end{remark}

\begin{remark}
The value four is used in what follows only as an input to the bridge. 
Lemma~\ref{lem:fisher}, Lemma~\ref{lem:composite}, Theorem~\ref{thm:t1b} and
Theorem~\ref{thm:observer} hold for an arbitrary primary rank $r$;
replacing $4$ by $r$ throughout Sections~\ref{sec:smooth}--\ref{sec:observer}
changes no argument.
The bridge conditions can therefore be assessed independently of how the
primary rank is fixed.
\end{remark}

Because Ref.~\cite{before} is presently a preprint, the logical dependence
should be explicit. Theorem~\ref{thm:t1a}, and with it the value four, is
conditional on that construction. By contrast, the bridge results developed
from Definition~\ref{def:closurefaithful} onward require only the standing
finite-rank assumption, except where the rank-four specialization is invoked
explicitly. They therefore remain valid for arbitrary finite rank $r$
independently of the rank-four instantiation.

\subsection{Physical status of Hypothesis H}
\label{sec:Hstatus}

Because every rank-four equality after Eq.~\eqref{eq:chain2} is conditional on
H, and because Ref.~\cite{before} establishes that H is neither provable nor
refutable from Axioms F and G and requirements C1--C5, it is worth setting out
here what H asserts physically, independently of the algebra that produced it.

The two numbers it identifies are constructed differently. The representation
dimension $n$ is the cardinality of a minimum-cardinality generating set $G$
of the local monoid $\mathcal P([s])$: the number of elementary procedures
from which every admissible local procedure can be composed. The operational
rank $\rank A([s])$ is the maximal cardinality of a jointly independent family
of elementary transitions under the closure $\cl$. For vector spaces the two
counts agree, and the habit of treating them as one number comes from that
setting; in more general algebraic structures they need not. For finitely
generated monoids they need not agree: irredundant generating sets of
different cardinalities can coexist, a phenomenon documented in the semigroup
literature \cite{howie1995} and used in Ref.~\cite{before} to prove that H is
not derivable. What does hold unconditionally is the inequality $\rank
A([s])\ge n$.

Physically, H excludes an accidental mismatch between procedural generation
and operational independence. What is established mathematically is only the
inequality above; one mechanism by which it can be strict is that a composite
of several elementary distinctions happens to coincide with another admissible
procedure for reasons unrelated to the physics being distinguished, so that
the procedures can be generated more economically than the distinctions they
draw can be counted. The corresponding failure mode is therefore strict
inequality, $\rank A([s])>n$: the closure structure contains more jointly
independent transitions than are represented by the cardinality of a minimal
generating set of procedures. A counterexample to H would be a physical system
whose local monoid of admissible procedures exhibits exactly this redundancy,
invisible at the level of individual transitions and appearing only in
composites.

A sufficient condition is known. If the lattice $L([s])$ of closed subsets of
$E([s])$ is geometric, that is, finite, atomistic, and semimodular
\cite{oxley}, then H holds; the record model of Ref.~\cite{before}, whose
additional assumptions make $L([s])$ Boolean, is the simplest realization.
Neither condition is claimed to be necessary, and neither is established for
physically realizable structures in general.

One consequence should be recorded explicitly, because it bears on
Section~\ref{sec:experiment}. Since $\rank A([s])\ge n$ holds without H, a
negative experiment can challenge the H-based rank-four identification without
challenging the unconditional layer $C1$--$C4\Rightarrow n=4$. For that
inference, however, it is not enough merely to observe a Fisher rank above
four somewhere in a larger observer image. At least five measured tangent
directions must have been specified independently as belonging to the same
primary operational realization, and the relevant operational-to-tangent
identification must be closure-faithful on that set. Under those conditions a
robust measured Fisher rank above four is incompatible with the equality
supplied by H.

Global informational completeness is not required for this one-sided
inference, because an incomplete protocol can only lower the measured rank.
What such an observation would not do is establish a value of the primary
rank. A positive identification requires the full bridge conditions of
Section~\ref{sec:smooth-realization}, PI, local resolution, closure under
finite labeled mixtures, and locally constant outcome support. The
falsification protocol proposed below is therefore a test of Hypothesis H
together with the operational-to-smooth identification used in the experiment,
not a test of the pre-geometric representation result itself.

\section{Smooth operational rank and Fisher information}
\label{sec:smooth}

\subsection{Protocol-independent null directions}

This section is formulated for an arbitrary smooth parameter manifold $Q$ and
is applied below with several instantiations. For $Q=\Xmicro$ we write $V_{\rm
null}^{\rm micro}$, whereas for $Q=\Xprimary$ or $Q=\Lprimary$ we write
$V_{\rm null}^{\rm primary}$. When a statement applies to any of these
instantiations, we use the unadorned notation $\Vnull(q)$ defined below.
Nothing in Sections~\ref{sec:smooth}--\ref{sec:observer} depends on the chosen
instantiation unless explicitly stated.

Let $Q$ be a smooth $N$-dimensional parameter manifold and let $\Madm$ be the
set of admissible protocols within one fixed observer sector, scale, and local
regime. Each protocol defines a smooth statistical map
\begin{equation}
 \Phi_M:Q\to\Delta_M.
 \label{eq:Phi}
\end{equation}
We define the fully operationally null tangent space by
\begin{equation}
 \Vnull(q):=\bigcap_{M\in\Madm}\Ker d\Phi_{M,q},
 \label{eq:vnull}
\end{equation}
and the protocol-independent smooth operational dimension by
\begin{equation}
 \dop(q):=\dim\!\left(T_qQ/\Vnull(q)\right).
 \label{eq:dop}
\end{equation}
This definition separates a physical null direction from a direction that is
invisible only to one incomplete measurement.

\subsection{Fisher rank and local identifiability}

Let an admissible protocol $M$ have outcome probabilities $p_i(q)$ with
locally constant support $I(q)=\{i:p_i(q)>0\}$. The classical Fisher
information matrix, defined on that support, is
\begin{equation}
 F^{(M)}_{AB}(q)
 =\sum_{i\in I(q)}\frac{1}{p_i(q)}
 \frac{\partial p_i}{\partial q^A}
 \frac{\partial p_i}{\partial q^B}.
 \label{eq:fisher}
\end{equation}
Writing $J=d\Phi_{M,q}$ and
$W=\operatorname{diag}(p_i(q)^{-1})$ gives
$F=J^{\mathsf T}WJ$.
The quantum counterpart of this construction and the associated statistical
distance are standard \cite{braunstein,liu2020,pezze2018}.

\begin{lemma}[Fisher rank and local identifiability]
\label{lem:fisher}
If the statistical weight is non-degenerate on $\im J$, then
\begin{equation}
\begin{aligned}
 \Ker F^{(M)}(q)
 &=\Ker d\Phi_{M,q},\\
 \rank F^{(M)}(q)
 &=\rank d\Phi_{M,q}.
\end{aligned}
\label{eq:fisher-rank}
\end{equation}

More generally, for positive semidefinite $W$,
\begin{equation}
\begin{aligned}
 &\rank(J^{\mathsf T}WJ)=\rank J\\
 &\qquad\Longleftrightarrow\quad
 \Ker W\cap\im J=\{0\}.
\end{aligned}
\end{equation}
\end{lemma}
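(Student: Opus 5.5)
The plan is to prove the general positive-semidefinite statement first and then obtain the Fisher-matrix identities as a special case. Throughout, $J=d\Phi_{M,q}$ is viewed as a linear map $T_qQ\to\R^{|I(q)|}$, and $W$ is a symmetric positive semidefinite matrix on the outcome space; in the Fisher case $W=\operatorname{diag}(p_i(q)^{-1})$ on the support $I(q)$. By rank--nullity, $\rank(J^{\mathsf T}WJ)=\rank J$ is equivalent to $\Ker(J^{\mathsf T}WJ)=\Ker J$. The inclusion $\Ker J\subseteq\Ker(J^{\mathsf T}WJ)$ is trivial, so everything reduces to identifying $\Ker(J^{\mathsf T}WJ)$.

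\textbf{Kernel identity.} If $J^{\mathsf T}WJv=0$, then $v^{\mathsf T}J^{\mathsf T}WJv=0$, that is, $(Jv)^{\mathsf T}W(Jv)=0$. For positive semidefinite $W$ one can write $W=W^{1/2}W^{1/2}$, so this becomes $\|W^{1/2}Jv\|^2=0$. Hence $W^{1/2}Jv=0$, and therefore $WJv=0$, i.e.\ $Jv\in\Ker W$. Conversely, $Jv\in\Ker W$ clearly gives $J^{\mathsf T}WJv=0$. This yields
\begin{equation*}
 \Ker(J^{\mathsf T}WJ)=J^{-1}(\Ker W)=\{v: Jv\in\Ker W\cap\im J\}.
\end{equation*}
This is the only step needing care: it is where positive semidefiniteness enters, via the square-root argument, and it fails for indefinite $W$.

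\textbf{The equivalence.} If $\Ker W\cap\im J=\{0\}$, then $Jv\in\Ker W$ forces $Jv=0$, so the two kernels coincide and the ranks agree. Conversely, suppose $\Ker W\cap\im J$ contains some $w\neq0$. Write $w=Jv$; then $v\notin\Ker J$ but $v\in\Ker(J^{\mathsf T}WJ)$. The kernel of $J^{\mathsf T}WJ$ is therefore strictly larger than $\Ker J$, and by rank--nullity the rank is strictly smaller. This proves the displayed equivalence.

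\textbf{Fisher case.} Non-degeneracy of the statistical weight on $\im J$ means exactly $\Ker W\cap\im J=\{0\}$. On the locally constant support, $W=\operatorname{diag}(p_i^{-1})$ is in fact positive definite, because every $p_i(q)>0$ there, so the condition holds automatically. The kernel identity then gives $\Ker F^{(M)}(q)=\Ker d\Phi_{M,q}$, and the rank equality follows by rank--nullity.

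One point to state explicitly is that derivatives of $p_i$ for $i\notin I(q)$ are excluded by the definition of $F^{(M)}$. This is legitimate because, with locally constant support, those $p_i$ vanish identically near $q$, so their derivatives are zero and they do not contribute to $\Ker d\Phi_{M,q}$ either.
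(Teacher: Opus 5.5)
Your proof is correct and is the standard argument the paper relies on: the paper states the lemma without a separate proof and adds only that on a locally constant support $W=\operatorname{diag}(p_i^{-1})$ is positive definite, which you also use. The quadratic-form step $v^{\mathsf T}J^{\mathsf T}WJv=\|W^{1/2}Jv\|^2$, the strict-kernel argument for the converse, and your remark that the off-support rows of $d\Phi_{M,q}$ vanish supply exactly the missing details; your one implicit step, equating non-degeneracy on $\im J$ with $\Ker W\cap\im J=\{0\}$, holds because a positive semidefinite form is non-degenerate on a subspace if and only if it is positive definite there.
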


With locally constant support, the non-degeneracy condition is automatic
rather than an additional physical assumption: restricting the sum in
Eq.~\eqref{eq:fisher} to $I(q)$ makes $W$ positive definite, so $\Ker W=\{0\}$
and the second condition holds trivially. Points at which the support changes
fall outside the scope of Lemma~\ref{lem:fisher}, and with it outside
Lemma~\ref{lem:composite} and Corollary~\ref{cor:t1c}; they are treated as an
experimental complication in Section~\ref{sec:experiment}.

Equation~\eqref{eq:fisher-rank} is the differential form of the standard
local-identifiability criterion. It contains no four-dimensional bound by
itself.

\subsection{Composite protocols}

A useful simplification occurs if $\Madm$ is closed under finite labeled
mixtures. A labeled mixture chooses protocol $M_i$ with positive probability
$\lambda_i$ and records the label $i$ together with the outcome.

For parameter-independent weights $\lambda_i>0$, the Fisher matrix of the
labeled mixture is
\begin{equation}
 F^{(M_{\rm comp})}(q)
 =\sum_i \lambda_i F^{(M_i)}(q),
 \label{eq:mixture-fisher}
\end{equation}
and, since each term is positive semidefinite,
\begin{equation}
 \Ker F^{(M_{\rm comp})}(q)
 =\bigcap_i \Ker F^{(M_i)}(q).
 \label{eq:mixture-kernel}
\end{equation}

\begin{lemma}[Composite protocol]
\label{lem:composite}
Suppose $\dim T_qQ=N<\infty$, $\Madm$ is closed under finite labeled
mixtures, and the non-degeneracy condition of Lemma~\ref{lem:fisher} holds at
$q$ for every admissible protocol used in the construction. In particular,
this condition is automatic on a region of locally constant outcome support.
Then there exists a finite composite protocol $M_{\rm comp}$ such that
\begin{equation}
 \Ker F^{(M_{\rm comp})}(q)=\Vnull(q),
 \label{eq:compkernel}
\end{equation}
and therefore
\begin{equation}
 \rank F^{(M_{\rm comp})}(q)
 =\dop(q)
 =\sup_{M\in\Madm}\rank F^{(M)}(q).
 \label{eq:comprank}
\end{equation}
The last equality follows because $\Vnull(q)\subseteq\Ker F^{(M)}(q)$ for
every admissible protocol, while closure under finite labeled mixtures makes
the rank-$\dop(q)$ composite itself an element of $\Madm$.

If $\dop(q)>0$, at most
\begin{equation}
 \dop(q)=N-\dim\Vnull(q)
\end{equation}
component protocols are required. 
If $\dop(q)=0$ every admissible protocol is locally null and the statement is
trivial.
\end{lemma}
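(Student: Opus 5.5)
The proof is a finite-dimensional linear-algebra argument carried out inside $T_qQ$. It combines Lemma~\ref{lem:fisher} with the kernel identity \eqref{eq:mixture-kernel} for labeled mixtures.

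First, at $q$ we apply Lemma~\ref{lem:fisher} to each admissible protocol. This is licensed by the non-degeneracy hypothesis, which is automatic on a region of locally constant support. It gives $\Ker F^{(M)}(q)=\Ker d\Phi_{M,q}$, hence
\[
\Vnull(q)=\bigcap_{M\in\Madm}\Ker F^{(M)}(q).
\]
The main obstacle is that $\Madm$ may be infinite, so we must reduce this intersection to a finite one. We do this by a descending-chain argument. If $\dop(q)=0$, every kernel equals $T_qQ$ and there is nothing to prove. Otherwise we choose $M_1\in\Madm$ with $\Ker F^{(M_1)}(q)\neq T_qQ$; such a protocol exists, since otherwise the intersection would be all of $T_qQ$. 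Set $K_1=\Ker F^{(M_1)}(q)$. Inductively, if $K_j\neq\Vnull(q)$, some vector of $K_j$ lies outside some $\Ker F^{(M_{j+1})}(q)$. We then set $K_{j+1}=K_j\cap\Ker F^{(M_{j+1})}(q)$, which satisfies $\dim K_{j+1}\le\dim K_j-1$. Since $\dim K_1\le N-1$ and every $K_j$ contains $\Vnull(q)$, the process stops after at most $N-\dim\Vnull(q)=\dop(q)$ steps. It stops at $K_m=\Vnull(q)$ with $m\le\dop(q)$.

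Next, we form the labeled mixture $M_{\rm comp}$ of $M_1,\dots,M_m$ with, say, equal parameter-independent weights $\lambda_i=1/m$. Closure of $\Madm$ under finite labeled mixtures makes $M_{\rm comp}$ admissible. By \eqref{eq:mixture-fisher}--\eqref{eq:mixture-kernel}, which rest on the positive semidefiniteness of each summand,
\[
\Ker F^{(M_{\rm comp})}(q)=\bigcap_{i=1}^m\Ker F^{(M_i)}(q)=K_m=\Vnull(q).
\]
This is \eqref{eq:compkernel}. Rank–nullity then gives $\rank F^{(M_{\rm comp})}(q)=N-\dim\Vnull(q)=\dop(q)$.

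Finally, for the supremum we note that every $M\in\Madm$ has $\Vnull(q)\subseteq\Ker F^{(M)}(q)$, again by Lemma~\ref{lem:fisher}. Hence $\rank F^{(M)}(q)\le\dop(q)$ for every admissible protocol. The bound is attained by the admissible protocol $M_{\rm comp}$, so the supremum is a maximum equal to $\dop(q)$. The count $m\le\dop(q)$ of component protocols was obtained in the descending-chain step.
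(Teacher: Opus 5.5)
Your proposal is correct and follows essentially the same route as the paper: the greedy descending chain of kernel intersections from Appendix~B, additivity of Fisher information under labeled mixtures, the positive-semidefinite kernel identity, and Lemma~\ref{lem:fisher} to pass between $\Ker F^{(M)}(q)$ and $\Ker d\Phi_{M,q}$. One small refinement: the chain needs non-degeneracy only for the protocols $M_j$ you actually select, and the inclusion $\Vnull(q)\subseteq\Ker F^{(M)}(q)$ used for the supremum holds for every $M$ without it, since $\Ker d\Phi_{M,q}\subseteq\Ker F^{(M)}(q)$ follows directly from $F=J^{\mathsf T}WJ$.
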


To see the bound on the number of component protocols, start from $T_qQ$.
Whenever the current kernel intersection is strictly larger than $\Vnull(q)$,
the definition of $\Vnull(q)$ guarantees an admissible protocol whose kernel
removes at least one remaining direction. Thus no more than $\dop(q)$ such
steps are required.

The scope of this result should be stated precisely, because it is easy to
overread. Closure under labeled mixtures creates no new distinguishable
direction: if every protocol in $\Madm$ is blind to a tangent vector $v$, so
is every labeled mixture of them. Lemma~\ref{lem:composite} therefore
compresses the resolving power already present in a family into a single
finite composite protocol with the same null space; it does not establish that
the family resolves the sector, and in particular it does not yield
$\Vnull(q)=0$. Informational completeness remains a separate assumption about
the family, imposed where it is needed in Section~\ref{sec:observer}.

\section{Why an arbitrary Lie-orbit bridge is insufficient}
\label{sec:lie}

A natural attempt is to represent elementary distinguishability transitions by
generators of a Lie algebra acting on a smooth state space. This is useful but
not rank preserving in general.

Let a connected Lie group $G_{\rm access}$ with Lie algebra
$\mathfrak g_{\rm access}$ act smoothly on a manifold $X$.
The infinitesimal action at $x$ is
\begin{equation}
 \alpha_x:\mathfrak g_{\rm access}\to T_xX.
 \label{eq:alpha}
\end{equation}
The orbit dimension is
\begin{equation}
 \dim(G_{\rm access}\cdot x)
 =\dim\mathfrak g_{\rm access}-\dim\mathfrak g_{\rm stab}(x),
 \label{eq:orbitdim}
\end{equation}
where $\mathfrak g_{\rm stab}(x)$ is the stabilizer algebra at $x$.
For a smooth map $\Psi:Q\to G_{\rm access}\cdot x$,
\begin{equation}
 \rank d\Psi_q\le \dim(G_{\rm access}\cdot\Psi(q)).
 \label{eq:maprank}
\end{equation}

Several distinct invariants may therefore be associated with the same
construction: the minimal number of Lie-algebra generators
$\mu(\mathfrak g)$, the algebra dimension $\dim\mathfrak g$, the orbit rank
$r_{\rm orbit}:=\dim(G_{\rm access}\!\cdot x)$, the differential rank
$r_{\rm map}:=\rank d\Psi_q$, the Fisher rank, and the compositional
operational rank:
\begin{equation}
\begin{aligned}
 &\mu(\mathfrak g),\quad
 \dim\mathfrak g,\quad
 r_{\rm orbit},\quad
 r_{\rm map},\\
 &\rank F^{(M)},\quad
 \rank A([s]).
\end{aligned}
\label{eq:invariants}
\end{equation}
There is no universal equality between them.

The obstruction is bracket growth. If elementary candidates are
$\xi_1,\ldots,\xi_r$, their Lie closure contains commutators such as
$[\xi_a,\xi_b]$ and higher brackets, which may create new tangent directions
even though the corresponding composite transition is operationally dependent
on the original generators. The three-dimensional Heisenberg algebra
\cite{hall2015} provides the simplest counterexample. It is generated as a Lie
algebra by two elements $X$ and $Y$, with $[X,Y]=Z$, so that $\mu(\mathfrak
h_3)=2$ while $\dim\mathfrak h_3=3$. For the corresponding Heisenberg group
acting on itself by left translations, the orbit is three-dimensional. The
same example does more than illustrate bracket growth, and it is taken up
again in Section~\ref{sec:nonabelian} as an explicit realization in which the
distribution is involutive while closure-faithfulness fails.

Thus the Lie algebra rank condition is a useful diagnostic of dynamical
accessibility \cite{dalessandro}, while a restricted Fisher or quantum Fisher
information matrix diagnoses local statistical distinguishability within the
chosen measurement or state family \cite{braunstein,liu2020,pezze2018}.
Neither diagnostic establishes that compositional closure is faithfully
represented by linear or Lie closure.

\section{Rank-preserving smooth realization}
\label{sec:smooth-realization}

\subsection{Closure-faithfulness and real representability}

Let
\begin{equation}
 \vartheta_x:E([s])\to T_x\Xprimary
 \label{eq:vartheta}
\end{equation}
map elementary operational transitions to tangent vectors of the primary
manifold. The realization is taken into the full tangent space and not into a
quotient of it. Passing to a quotient by an observer-null subspace before
closure-faithfulness has been imposed would allow two operationally
independent transitions to acquire dependent images, and the identification of
distribution rank with operational rank in Theorem~\ref{thm:t1b} would fail.
The observer quotient therefore enters later, in Section~\ref{sec:observer},
and no measurement structure is used in the present section.

\begin{definition}[Closure-faithful realization]
\label{def:closurefaithful}
The realization is closure-faithful if for every $X\subseteq E([s])$ and $e\in
E([s])$,
\begin{equation}
 e\in\cl(X)
 \quad\Longleftrightarrow\quad
 \vartheta_x(e)\in\operatorname{span}\vartheta_x(X).
 \label{eq:closurefaithful}
\end{equation}
\end{definition}

Closure-faithfulness is stronger than equality of total ranks. It requires the
same dependency relations for every subset of elementary transitions. Stated
that way the condition ranges over $2^m$ subsets; in fact far less is needed.

\begin{proposition}[Closure-faithfulness is checkable on small subsets]
\label{prop:smallsubsets}
Let the accessibility structure on $E([s])$ be matroidal of rank $r$, and let
$\vartheta_x$ be a realization whose images span an $r$-dimensional subspace.
If, for every subset $Y\subseteq E([s])$ with $|Y|\le r+1$, operational
dependence of $Y$ agrees with linear dependence of
$\vartheta_x(Y)$, then $\vartheta_x$ is closure-faithful on all of $E([s])$.
\end{proposition}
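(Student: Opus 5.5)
The plan is to use the fact that both structures are matroids on $E([s])$ and that a matroid is determined by its circuits. A circuit is a minimal dependent set, and every circuit of a rank-$r$ matroid has at most $r+1$ elements. The operational closure is a matroid by hypothesis. Linear dependence of $\vartheta_x(\cdot)$ also defines a matroid on $E([s])$: the representable matroid $M_\vartheta$, whose closure is $X\mapsto\{e:\vartheta_x(e)\in\operatorname{span}\vartheta_x(X)\}$. This is a matroid even if $\vartheta_x$ is not injective, with repeated or zero images giving parallel elements or loops. The rank of $M_\vartheta$ is $r$, because the images span an $r$-dimensional subspace.

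First, I show the two matroids have the same circuits. Let $C$ be a circuit of the operational matroid. Then $|C|\le r+1$, since any $r+1$ elements of a rank-$r$ matroid are dependent and every proper subset of a circuit is independent. By hypothesis $C$ is linearly dependent. Each proper subset of $C$ is operationally independent and also has size at most $r+1$, so each is linearly independent. Hence $C$ is a circuit of $M_\vartheta$. The converse runs symmetrically, using that $M_\vartheta$ also has rank $r$, so its circuits also have size at most $r+1$.

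Second, since a matroid is determined by its circuits, the two matroids coincide. Their closure operators therefore agree. Equivalently, in either matroid $e\in\cl(X)$ holds if and only if $e\in X$ or some circuit $C$ satisfies $e\in C\subseteq X\cup\{e\}$. Applying this characterization in both matroids gives Eq.~\eqref{eq:closurefaithful} for every $X$ and $e$.

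The main obstacle is making sure the rank bound on circuit size really applies to $M_\vartheta$. A concern is that $\vartheta_x$ might create spurious large linear dependencies. This cannot happen: a linearly dependent set with more than $r+1$ elements contains a linear circuit, and that circuit has at most $r+1$ elements because the span has dimension $r$. So every dependency in either matroid is witnessed by a circuit of size at most $r+1$, and the hypothesis on such small sets is enough. I will also note that independence of the empty set and loops (singleton dependencies) are covered by the $|Y|\le r+1$ condition, so no separate argument is needed for them.
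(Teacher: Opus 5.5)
Your proof is correct and follows the paper's argument. Both the accessibility matroid and the linear matroid of the images have circuits of size at most $r+1$, the small-subset hypothesis forces them to share the same circuits, and a matroid is determined by its circuits, so the two closure operators coincide. The only minor difference is that you bound circuit size in the linear matroid using the $r$-dimensional span hypothesis, whereas the paper derives $\rank M'\le r$ directly from the agreement on $(r+1)$-element subsets, so it does not need the span hypothesis at that step.
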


\begin{proof}
Write $M$ for the accessibility matroid and $M'$ for the linear matroid of the
images. Every circuit of $M$ has at most $r+1$ elements, since removing one
element of a circuit leaves an independent set, of size at most $r$.
By the hypothesis on Eq.~\eqref{eq:closurefaithful}, a subset $Y$ with
$|Y|\le r+1$ is dependent in $M$ if and only if its image is linearly
dependent in $M'$. Hence $M$ and $M'$ agree on dependency for all subsets
of cardinality at most $r+1$; in particular $M'$ has no independent set of
size $r+1$, so $\rank M'\le r$ and the circuits of $M'$ are likewise of size
at most $r+1$. The two matroids therefore have the same circuits. A set is
dependent exactly when it contains a circuit, so $M=M'$ and
Eq.~\eqref{eq:closurefaithful} holds throughout.
\end{proof}

At the rank of interest here this bound is five: closure-faithfulness of a
four-dimensional realization is decided by its behaviour on subsets of at most
five elementary transitions. The bound is not improvable in general, but
failures are often visible much earlier; in
Proposition~\ref{prop:rankdoesnotdiscriminate} below a single pair already
separates a faithful realization from an unfaithful one of the same rank.

\begin{corollary}[Uniform case, and a discriminant]
\label{cor:uniform}
If the accessibility structure is the uniform matroid $U_{r,m}$ and the
images of $\vartheta_x$ span an $r$-dimensional subspace, then
closure-faithfulness of $\vartheta_x$ is equivalent to linear independence
of the images of every $r$-element subset of $E([s])$. Equivalently, after
fixing the distinguishability metric $G$ of Ref.~\cite{dg}, the corresponding
Gram determinant is nonzero for every $r$-element subset.
For each $r$-element subset $S\subseteq E([s])$, define
\begin{equation}
c_S(x)=
\left\{
\begin{aligned}
&\displaystyle
\frac{
\sqrt{
\det\!\left[
G(\vartheta_x(e_a),\vartheta_x(e_b))
\right]_{a,b\in S}
}
}{
\displaystyle
\prod_{a\in S}\|\vartheta_x(e_a)\|_G
},
\\[-0.3ex]
&\hspace{1em}
\text{if }\displaystyle
\prod_{a\in S}\|\vartheta_x(e_a)\|_G>0,
\\[1ex]
&0,
\qquad
\text{if }\displaystyle
\prod_{a\in S}\|\vartheta_x(e_a)\|_G=0.
\end{aligned}
\right.
\label{eq:normalized-gram-volume}
\end{equation}
The closure-faithfulness discriminant is then
\begin{equation}
 \mathcal C(x)
 =
 \min_{\substack{S\subseteq E([s])\\ |S|=r}}
 c_S(x).
 \label{eq:closure-discriminant}
\end{equation}
By Hadamard's inequality,
\[
 0\le c_S(x)\le1,
 \qquad
 0\le\mathcal C(x)\le1.
\]
Under the hypotheses of the corollary, $\mathcal C(x)=0$ if and only if
at least one $r$-element subset has linearly dependent images, including
the case in which one of those images vanishes. Equivalently,
closure-faithfulness fails.

Proposition~\ref{prop:free} below shows that the structures arising in the
simplest commuting physical realizations are free matroids $U_{m,m}$ rather
than the non-trivial uniform structures $U_{r,m}$ with $r<m$ considered here.
\end{corollary}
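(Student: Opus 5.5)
The plan is to obtain the corollary from Proposition~\ref{prop:smallsubsets}, specialised to the uniform matroid $U_{r,m}$, and then translate linear independence into Gram-determinant language.

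In $U_{r,m}$ the independent sets are exactly the subsets of size at most $r$. Its circuits are exactly the $(r+1)$-element subsets. For the forward direction, assume closure-faithfulness. Then the operational matroid coincides with the linear matroid of the images, as in the proof of Proposition~\ref{prop:smallsubsets}. Hence every $r$-element subset, being operationally independent, has linearly independent images.

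For the converse, assume every $r$-subset has independent images. Every subset of an $r$-subset then also has independent images, so all operationally independent sets (size $\le r$) have linearly independent images. Every $(r+1)$-subset lies in the $r$-dimensional span, so its images are dependent, matching operational dependence. The hypothesis of Proposition~\ref{prop:smallsubsets} therefore holds for all $|Y|\le r+1$, which gives closure-faithfulness on all of $E([s])$. Only the degenerate case $m<r$ needs care, and there the $r$-dimensional span hypothesis is vacuous or contradictory.

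For the Gram reformulation, fix a positive-definite metric $G$ (the distinguishability metric of Ref.~\cite{dg}). Vectors $v_1,\dots,v_r$ are linearly independent if and only if $\det[G(v_a,v_b)]>0$; this is standard, since the Gram matrix is $B^{\mathsf T}GB$ with $G$ positive definite. The Gram matrix is positive semidefinite, so the square root in Eq.~\eqref{eq:normalized-gram-volume} is well defined.

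The bound $0\le c_S\le 1$ is Hadamard's inequality for the Gram matrix: $\det\le\prod_a G(v_a,v_a)$. Taking square roots and dividing by $\prod\|v_a\|_G$ gives the bound when all norms are positive. When some norm vanishes, $c_S=0$ by definition, and that subset is indeed dependent. Consequently $c_S(x)=0$ exactly when the images of $S$ are dependent. Since the minimum in Eq.~\eqref{eq:closure-discriminant} runs over a finite set, $\mathcal C(x)=0$ if and only if some $r$-subset is dependent, which is equivalent to failure of closure-faithfulness by the first part.

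The one step needing care is the converse direction for $(r+1)$-subsets. There one must use the $r$-dimensional span hypothesis explicitly to guarantee the linear dependence that matches the circuits of $U_{r,m}$. I would state this dimension count as the key observation. The remark about free matroids refers forward to Proposition~\ref{prop:free} and needs no proof here.
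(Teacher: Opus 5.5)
Your proposal is correct and follows the route the paper intends. The corollary is Proposition~\ref{prop:smallsubsets} specialized to $U_{r,m}$: independent sets have size $\le r$, circuits have size $r+1$, and the $r$-dimensional span forces every $(r+1)$-subset to be dependent. It is then combined with the Gram-determinant characterization of linear independence and Hadamard's inequality for the bounds on $c_S$ and $\mathcal C$. Your explicit assumption that $G$ is positive definite, at least on the span of the images, is the hidden hypothesis the Gram reformulation needs, because a merely positive-semidefinite $G$ (as allowed elsewhere in the paper) could annihilate linearly independent images.
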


The reference to $G$ is essential. The vanishing of the Gram determinant, and
hence the locus on which closure-faithfulness fails, is independent of
coordinates. The numerical value of $\mathcal C(x)$, however, depends on the
chosen distinguishability metric $G$ and therefore becomes meaningful only
after that metric has been specified.

There is also a representability issue. Even if the operational accessibility
lattice defines a matroid, it does not follow that this matroid is
representable over $\R$ \cite{oxley}. We therefore require:

\paragraph{B2-R (real representability).}
The closure structure of elementary transitions in the chosen primary sector
admits a real linear representation for which operational closure coincides
with linear-span closure.

B2-R and RB3 below play different logical roles. B2-R is an
existence/admissibility condition on the operational closure structure: it
excludes closure structures for which no real linear representation is
possible at all. RB3 is a condition on the particular smooth realization used
in the bridge, requiring that this chosen realization actually preserve the
operational closure. Thus RB3 exhibits a real representation in the region
where it holds, while B2-R records separately that real representability is a
genuine restriction on the class of operational structures under
consideration.

Logically, RB3 for a given real smooth realization already witnesses B2-R
locally. We retain B2-R separately because it is an admissibility condition on
the operational structure prior to the construction of any particular smooth
realization, whereas RB3 tests the realization actually used in the bridge.

Two sufficient classes are already available from the constructions discussed
in the companion works. Proposition~\ref{prop:free} below shows that a monoid
generated by commuting one-parameter subgroups under the stated independence
hypothesis has a free accessibility matroid, and a free matroid is
representable over $\R$, so B2-R holds for the whole of that class without
further assumption. The record model discussed in Ref.~\cite{before} is a
second such case: the additional record-model assumptions of
Ref.~\cite{before} make the accessibility lattice $L([s])$ the Boolean lattice
$B_n$ on $n$ atoms, again the lattice of flats of a free matroid.

\subsection{Primary distribution}

Let $X_1,\ldots,X_m$ be smooth vector fields realizing the elementary
transitions $e_1,\ldots,e_m$ of $E([s])$ on a smooth primary manifold, so that
the realization above is given pointwise by $\vartheta_x(e_i)=X_i(x)$.
Smoothness of the family $\vartheta_x$ is then inherited from the fields and
need not be assumed separately. Define
\begin{equation}
 \Dprimary(x)=\operatorname{span}\{X_1(x),\ldots,X_m(x)\}.
 \label{eq:Dprimary}
\end{equation}
We impose the following sufficient conditions:

\begin{description}
\item[RB1: constant rank.] $\Dprimary$ has constant rank $r$ in a regular
neighborhood.
\item[RB2: involutivity.] $[X,Y]\in\Gamma(\Dprimary)$ for all local sections
$X,Y$ of $\Dprimary$.
\item[RB3: closure-faithfulness.] Equation~\eqref{eq:closurefaithful} holds in
the regular region for the chosen smooth realization.
\item[RB4: observer/scale compatibility.]
The discrete structure and its smooth realization refer to the same primary
observer sector, physical scale, and local regime as the admissible protocols
to be introduced in the measurement stage.
\end{description}

These six conditions---a physical identification of the elementary transitions
with the chosen fields, B2-R, RB1, RB2, RB3 and RB4---are sufficient for the
bridge. They are not claimed to be necessary, and no attempt is made here to
show that the set is minimal. RB2 and RB3 in particular are independent of one
another despite standing adjacent: the first concerns integrability of the
distribution as a whole, the second whether compositional dependence among the
elementary transitions matches linear dependence of their images.
Section~\ref{sec:nonabelian} exhibits a realization satisfying RB1 and RB2 in
which RB3 fails.

\begin{theorem}[T1b: primary smooth bridge] 
\label{thm:t1b}
Assume that the primary operational structure $\Apr([s])$ has finite
rank $r$, and that its elementary transitions are physically identified with
the chosen smooth fields. Assume also B2-R and RB1--RB4. Then through every 
regular point there exists an integral leaf $\Lprimary$ satisfying
\begin{equation}
 \dim\Lprimary=\rank\Apr([s])=r.
 \label{eq:t1b}
\end{equation}
\end{theorem}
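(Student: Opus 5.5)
The argument splits into two equalities. One identifies the dimension of the distribution with the operational rank via closure-faithfulness. The other identifies the leaf dimension with the distribution rank via Frobenius.

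First, fix a regular point $x$ and the realization $\vartheta_x(e_i)=X_i(x)$. By RB3, for every $X\subseteq E([s])$ and $e\in E([s])$ we have $e\in\cl(X)$ if and only if $\vartheta_x(e)\in\operatorname{span}\vartheta_x(X)$. So a subset $Y\subseteq E([s])$ is jointly non-redundant under $\cl$ exactly when $\vartheta_x(Y)$ is a linearly independent family of distinct vectors. Here I would note that faithfulness with $X=\emptyset$ excludes $\vartheta_x(e)=0$ for any non-redundant $e$, and with $X=\{e'\}$ it excludes coincident images of independent pairs. The maximal cardinality of a non-redundant set, namely $\rank\Apr([s])=r$, then equals the maximal size of a linearly independent subfamily of $\{X_1(x),\ldots,X_m(x)\}$. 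That maximal size is $\dim\Dprimary(x)$. B2-R is used only to guarantee that this closure structure is the linear-span closure of some real representation, which is the setting in which RB3 is meaningful. This yields $\dim\Dprimary(x)=r$ at every regular point, consistent with RB1.

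Second, RB1 says $\Dprimary$ is a smooth distribution of constant rank $r$ on a regular neighborhood $U$. It is smooth because it is locally spanned by $r$ of the smooth fields $X_i$ that are independent at $x$, hence independent nearby. RB2 says it is involutive. The Frobenius theorem then gives, through every point of $U$, a local integral submanifold $\Lprimary$ with $T_y\Lprimary=\Dprimary(y)$. In particular $\dim\Lprimary=r$.

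RB4 and the physical identification are not needed for the mathematics. They ensure that the operational rank and the leaf refer to the same primary sector, so that equating them is not a comparison across regimes. I will state this explicitly but prove nothing there.

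The main obstacle is the first step, in making rigorous the passage from "same closure relations" to "same maximal independent-set size". This step requires checking that RB3 forces distinct non-redundant transitions to have nonzero, non-collinear images. I would handle this directly from the biconditional with $|X|\le1$, then argue inductively by adding one element at a time. Each addition preserves non-redundancy exactly when it preserves linear independence, by faithfulness applied to the current set.
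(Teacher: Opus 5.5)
Your proposal is correct and is essentially the paper's proof. RB1 and RB2 with Frobenius supply a leaf with $T_y\Lprimary=\Dprimary(y)$. RB3, taken on the full tangent space, identifies linear dependence among $X_1(x),\ldots,X_m(x)$ with operational closure among $e_1,\ldots,e_m$, so that $\dim\Dprimary(x)=\rank\Apr([s])$. RB4 only records that both sides refer to the same sector.

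Your added details, ruling out zero or coincident images and getting smoothness of $\Dprimary$ from $r$ locally independent fields, fill in steps the paper leaves implicit. No induction is needed, though: applying the biconditional with $X=Y\setminus\{e\}$ for each $e\in Y$ already shows that $Y$ is non-redundant exactly when $\vartheta_x|_Y$ is injective with linearly independent image.
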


\begin{proof}
By RB1 and RB2, the Frobenius theorem \cite{lee} integrates $\Dprimary$ to
local leaves whose tangent spaces equal the distribution. RB3 holds at every
point of the regular region and, because $\vartheta_x$ takes values in the
full tangent space, it identifies the linear independence relations among
$X_1(x),\ldots,X_m(x)$ with the operational closure relations among
$e_1,\ldots,e_m$; hence the rank of $\Dprimary(x)$ equals $\rank\Apr([s])$.
RB4 guarantees that the compared structures belong to the same physical
sector.
\end{proof}

For the T1a specialization, the theorem gives a four-dimensional primary
leaf, $\dim\Lprimary=4$. This is the case used in the dimensional-reduction
application of Section~\ref{sec:higher}. Nothing in the proof distinguishes 
the value four from any other finite rank.

\section{Observer/pointer quotient and Fisher realization}
\label{sec:observer}

The smooth bridge does not imply that the full microscopic state space is
four-dimensional. The physical interpretation adopted here is a two-stage
architecture
\begin{equation}
 \Xmicro\xrightarrow{\Piobs}\Xprimary\xrightarrow{\Xi_M}\Delta_M.
 \label{eq:observer2}
\end{equation}
The observer map may be realized, depending on the theory, as a smooth
coarse-graining,
a completely positive trace-preserving (CPTP) channel followed by
coordinates on its image, a conditional expectation onto an
accessible observable algebra, a map to stable pointer records, or an
effective channel induced
by a specified decoherence dynamics \cite{breuer2016,schlosshauer2019}.
No one of these mechanisms is built into the definition of operational rank.

Two comments on the status of this construction are needed before the theorem.
First, $\Piobs$ is a physical input and not a free mathematical projection.
Nothing in the framework licenses choosing it so as to produce a desired rank:
it must be fixed by the observer structure, the record algebra, a specified
coarse-graining, or an explicit open-system dynamics of the theory under
consideration, and fixed before any Fisher spectrum is examined. The theorem
below is conditional on such an input in exactly the sense that
Theorem~\ref{thm:t1b} is conditional on a physically justified identification
of the elementary transitions.

Second, Theorem~\ref{thm:observer} isolates the consequences of the observer
and local-resolution assumptions. Once all admissible protocols factor
through $\Piobs$ and locally resolve its image, the protocol-independent null
space coincides with the kernel of the observer differential. The substantive
rank-preservation step is therefore Theorem~\ref{thm:t1b}, while
Lemma~\ref{lem:composite} provides a finite compression of resolving power
already present in the protocol family. Local completeness remains an
independent assumption.

We say that the primary protocol family locally resolves the image of
$\Piobs$ at $y=\Piobs(x)$ if
\begin{equation}
 \bigcap_{M\in\Madm}
 \Ker\!\left(
 d_y\Xi_M\big|_{\im d_x\Piobs}
 \right)
 =
 \{0\}.
 \label{eq:localresolution}
\end{equation}

\begin{theorem}[Observer-quotient realization]
\label{thm:observer}
Let $\Piobs:\Xmicro\to\Xprimary$ have constant rank $r$ in a regular
neighborhood. Suppose all admissible protocols factor through $\Piobs$ as in
Eq.~\eqref{eq:factorization} and that the primary family $\{\Xi_M\}$ locally
resolves the image of $\Piobs$. Then, in the instantiation $Q=\Xmicro$ of
Section~\ref{sec:smooth},
\begin{equation}
 \bigcap_{M\in\Madm}\Ker d_x\Phi_M
 =\Ker d_x\Piobs,
 \label{eq:observerkernel}
\end{equation}
and
\begin{equation}
 \dop(x)=\rank d_x\Piobs=r.
 \label{eq:observerrank}
\end{equation}
\end{theorem}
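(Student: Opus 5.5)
The plan is to reduce everything to a pointwise linear-algebra statement via the chain rule, and then use the local-resolution hypothesis to identify the kernel. Fix a regular point $x$ and set $y=\Piobs(x)$. Because every admissible protocol factors as $\Phi_M=\Xi_M\circ\Piobs$ (Eq.~\eqref{eq:factorization}), the chain rule gives
\begin{equation*}
 d_x\Phi_M=d_y\Xi_M\circ d_x\Piobs
\end{equation*}
for each $M\in\Madm$.

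First, the inclusion $\Ker d_x\Piobs\subseteq\bigcap_M\Ker d_x\Phi_M$ is immediate. If $d_x\Piobs(v)=0$, then $d_x\Phi_M(v)=d_y\Xi_M(0)=0$ for every $M$.

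Second, for the reverse inclusion, take $v$ with $d_x\Phi_M(v)=0$ for all $M$, and put $w=d_x\Piobs(v)\in\im d_x\Piobs$. Then $d_y\Xi_M(w)=0$ for every $M$. So $w$ lies in $\bigcap_M\Ker\bigl(d_y\Xi_M|_{\im d_x\Piobs}\bigr)$, which is $\{0\}$ by the local-resolution condition Eq.~\eqref{eq:localresolution}. Hence $w=0$, that is, $v\in\Ker d_x\Piobs$. This establishes Eq.~\eqref{eq:observerkernel}, and it identifies the left-hand side with $\Vnull(x)$ in the instantiation $Q=\Xmicro$ of Eq.~\eqref{eq:vnull}.

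Third, for the rank statement, I would substitute this kernel into the definition Eq.~\eqref{eq:dop}:
\begin{equation*}
 \dop(x)=\dim\bigl(T_x\Xmicro/\Ker d_x\Piobs\bigr).
\end{equation*}
By rank--nullity (the first isomorphism theorem for the linear map $d_x\Piobs$), this quotient is isomorphic to $\im d_x\Piobs$, so its dimension is $\rank d_x\Piobs$. The constant-rank hypothesis then gives the value $r$ throughout the regular neighborhood, which is Eq.~\eqref{eq:observerrank}.

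No step is genuinely hard; the argument is essentially the chain rule plus rank--nullity. The only point needing care is the second step. The local-resolution condition constrains $d_y\Xi_M$ only on $\im d_x\Piobs$, not on all of $T_y\Xprimary$, so I must check that the vector produced by the factorization indeed lies in that image. It does by construction, and this is exactly why the condition is stated relative to $\im d_x\Piobs$. I would also remark that finite dimensionality of $T_x\Xmicro$ is implicit in the constant-rank hypothesis, so the quotient dimension is finite and rank--nullity applies without qualification.
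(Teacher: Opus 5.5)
Your proposal is correct and follows the paper's proof: the chain rule $d_x\Phi_M=d_y\Xi_M\circ d_x\Piobs$ gives one inclusion, local resolution applied to $d_x\Piobs(v)\in\im d_x\Piobs$ gives the converse, and taking codimensions yields $\dop(x)=\rank d_x\Piobs=r$. One minor correction to your closing remark: constant rank does not by itself imply that $T_x\Xmicro$ is finite-dimensional. That comes from the standing $N$-dimensional assumption on $Q$ in Section~\ref{sec:smooth}, and it is not needed anyway, since the first isomorphism theorem gives $\dim\bigl(T_x\Xmicro/\Ker d_x\Piobs\bigr)=\rank d_x\Piobs$ for any linear map.
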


\begin{proof}
Let $y=\Piobs(x)$. Since
\[
d_x\Phi_M
=
d_y\Xi_M\circ d_x\Piobs,
\]
factorization implies
\[
\Ker d_x\Piobs
\subseteq
\bigcap_{M\in\Madm}\Ker d_x\Phi_M.
\]
Conversely, local resolution of the image of $\Piobs$ implies that any
tangent vector whose image under $d_x\Piobs$ is nonzero is detected by at
least one admissible primary protocol. Hence
\[
\bigcap_{M\in\Madm}\Ker d_x\Phi_M
=
\Ker d_x\Piobs.
\]
Taking codimensions in $T_x\Xmicro$ gives
\[
\dop(x)=\rank d_x\Piobs=r.
\]
\end{proof}

\begin{remark}[Three uses of the word primary]
\label{rem:threeprimary}
The adjective is applied to three a priori distinct objects: the discrete
operational structure $\Apr([s])$ of Section~\ref{sec:operational-rank}, its
smooth integral realization $\Dprimary$ and $\Lprimary$ of
Section~\ref{sec:smooth-realization}, and the observer image
$\Xprimary=\Piobs(\Xmicro)$ of the present section. Theorem~\ref{thm:t1b}
connects the first two and Theorem~\ref{thm:observer} characterizes the third,
but nothing so far identifies the tangent sector resolved by the observer map
with the leaf produced by the Frobenius theorem. That identification is not
definitional. Nothing in the statements above prevents $\dim\Lprimary=4$ while
$\dim\Xprimary=6$ and $\rank d_x\Piobs=6$: each theorem remains true, a
closure-faithful four-dimensional leaf exists inside $\Xprimary$, and the
observer map nevertheless resolves two further directions, so the conclusion
that the observer-accessible primary rank is four does not follow. RB4 does
not close the gap, since it requires only that the compared structures belong
to one local regime. The missing condition is stated next.
\end{remark}

\paragraph{PI (primary identification).}
In a regular neighborhood, and with $y=\Piobs(x)$,
\begin{equation}
 \im d_x\Piobs=\Dprimary(y)=T_y\Lprimary.
 \label{eq:PI}
\end{equation}

Under the constant-rank hypothesis of Theorem~\ref{thm:observer}, this means
that $\Piobs$ is locally a submersion onto the primary integral leaf and that
the locally realized observer image is an open subset of that leaf. PI is a
physical requirement on the pairing of the observer map with the primary
distribution, not a consequence of either. It is the last of the
identifications on which the chain depends, and like $\Piobs$ itself it must
be established independently of the measured rank.

Combining Eqs.~\eqref{eq:t1b}, \eqref{eq:comprank}, and
\eqref{eq:observerrank} gives the Fisher realization once PI is imposed.
Without PI, the ranks appearing in Eqs.~\eqref{eq:t1b} and
\eqref{eq:observerrank} belong to different objects, and no equality between
them follows.

The route is worth stating explicitly. Theorem~\ref{thm:t1b} gives
$\dim\Lprimary=r$, and PI identifies this tangent sector with $\im d_x\Piobs$,
so that $\rank d_x\Piobs=r$. Local resolution of the admissible primary family
makes the protocol-independent null space on the leaf trivial.
Lemma~\ref{lem:composite} then supplies a finite composite protocol with the
same null space, and Lemma~\ref{lem:fisher}, on a region of locally constant
support, converts this statement into Fisher rank $r$. Local completeness
remains an independent assumption; closure under labeled mixtures guarantees
only that the resolving power already present in the family can be compressed
into a finite composite protocol.

\begin{corollary}[T1c: restricted Fisher rank]
\label{cor:t1c}
Assume Theorem~\ref{thm:t1b}, Theorem~\ref{thm:observer}, and PI.
Suppose the admissible primary protocols locally resolve the primary leaf,
are closed under finite labeled mixtures, and have locally constant outcome
support in the region considered. Then
\begin{equation}
 \rank F_{\rm primary}^{(M_{\rm comp})}=r.
 \label{eq:t1c}
\end{equation}

Every primary protocol satisfies
\begin{equation}
 \rank F_{\rm primary}^{(M)}\le r.
 \label{eq:t1c-incomplete}
\end{equation}
Strict inequality may occur for incomplete protocols.
At $r=4$ this is the statement used below. The microscopic quantum Fisher rank
may nevertheless exceed the primary rank, whatever its value.
\end{corollary}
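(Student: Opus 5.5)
The plan is to chain the three earlier results through PI, working on the primary leaf as the parameter manifold, i.e.\ in the instantiation $Q=\Lprimary$ (or, equivalently, $Q=\Xmicro$ pushed forward through $\Piobs$) of Section~\ref{sec:smooth}. First, Theorem~\ref{thm:t1b} gives $\dim\Lprimary=r$ at the regular point. Next, PI identifies $\im d_x\Piobs$ with $T_y\Lprimary$, so $\rank d_x\Piobs=r$. This agrees with the constant rank assumed in Theorem~\ref{thm:observer}, and it shows that the observer image is locally an open piece of the leaf.

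Then I would invoke the local-resolution hypothesis, Eq.~\eqref{eq:localresolution}, with $\im d_x\Piobs=T_y\Lprimary$. This gives $V_{\rm null}^{\rm primary}(y)=\bigcap_M\Ker d_y(\Xi_M|_{\Lprimary})=\{0\}$, hence $\dop(y)=r$ on the leaf. Equivalently, Theorem~\ref{thm:observer} gives $\dop(x)=r$ on $\Xmicro$, with null space $\Ker d_x\Piobs$. The two descriptions agree because $d_x\Phi_M=d_y\Xi_M\circ d_x\Piobs$. Since the leaf tangent space is finite-dimensional ($N=r$), Lemma~\ref{lem:composite} applies. Closure under finite labeled mixtures and locally constant support supply its hypotheses; the non-degeneracy condition of Lemma~\ref{lem:fisher} is automatic there, because $W$ is positive definite on the support. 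The lemma produces a finite composite $M_{\rm comp}$ with at most $r$ components such that $\Ker F^{(M_{\rm comp})}_{\rm primary}=V_{\rm null}^{\rm primary}=\{0\}$. Therefore $\rank F^{(M_{\rm comp})}_{\rm primary}=r$, which is Eq.~\eqref{eq:t1c}.

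For the inequality \eqref{eq:t1c-incomplete}, I would use Lemma~\ref{lem:fisher} on any single primary protocol $M$. It gives $\rank F^{(M)}_{\rm primary}=\rank d(\Xi_M|_{\Lprimary})\le\dim T_y\Lprimary=r$. Working on $\Xmicro$ instead, the factorization gives $\rank d_x\Phi_M\le\rank d_x\Piobs=r$. The same bound also follows from Eq.~\eqref{eq:comprank} as the supremum. To show that strict inequality can occur, a single example suffices: take a protocol whose map $\Xi_M$ is constant along one leaf direction. For the final remark on the microscopic quantum Fisher rank, I would note that it is computed on a larger state family whose measurements need not factor through $\Piobs$. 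None of the hypotheses constrain it, so no argument is needed beyond pointing this out, or citing the six-dimensional $\Xprimary$ scenario of Remark~\ref{rem:threeprimary}.

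The main obstacle I expect is bookkeeping between the two parameter manifolds. The Fisher matrix ``primary'' must be understood as computed in coordinates on $\Lprimary$ (or $\Xprimary$ restricted to the leaf), not on $\Xmicro$. Otherwise its rank equals $\rank d_x\Phi_{M_{\rm comp}}$ on the $N$-dimensional space, which by Theorem~\ref{thm:observer} is still $r$ but with a nontrivial kernel $\Ker d_x\Piobs$. I would handle this by stating both versions and checking that they agree via the chain rule. This is the one place where PI is genuinely used, since it ensures that the leaf and the observer image coincide.
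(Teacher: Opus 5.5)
Your proposal is correct and follows essentially the same route as the paper. Theorem~\ref{thm:t1b} gives $\dim\Lprimary=r$, PI transfers this to $\rank d_x\Piobs=r$, and local resolution makes the null space on the leaf trivial; Lemma~\ref{lem:composite}, with Lemma~\ref{lem:fisher} applied on the constant-support region, then gives composite Fisher rank $r$ and the upper bound for individual protocols. Your added check that the $\Xmicro$ and $\Lprimary$ instantiations agree via the chain rule is consistent with Theorem~\ref{thm:observer}, and it makes explicit a bookkeeping point the paper leaves implicit.
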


For an $N$-level pure quantum system the
ambient projective space is $\CP^{N-1}$ with real dimension $2N-2$
\cite{bengtsson2017}. Thus two-qubit and three-qubit pure-state manifolds have
real dimensions $6$ and $14$, respectively. Generic quantum families can
therefore have Fisher or quantum Fisher rank greater than four without
contradicting T1a--T1c.

\section{Operational constraints on dimensional reduction in higher-dimensional theories}
\label{sec:higher}

\subsection{Critical dimensions are not operational ranks}

Higher-dimensional theories naturally distinguish the dimensionality of a
microscopic or formal description from the number of independent directions
accessible within a specified physical observer sector. In perturbative string
theory the target-space dimension is fixed by internal quantum consistency:
cancellation of the world-sheet conformal anomaly selects $D=26$ for the
bosonic string and $D=10$ for superstrings \cite{polchinski,gsw}, while eleven
dimensions play a distinguished role in the non-perturbative framework whose
low-energy limit is eleven-dimensional supergravity \cite{witten1995}.
These dimensions characterize the microscopic formalism. By contrast, the
operational rank of Section~\ref{sec:operational-rank} characterizes the
independence structure of procedures available within an observer-accessible
sector. Neither invariant determines the other.

The problem considered here is therefore not to derive the critical dimension
of a higher-dimensional theory or to replace compactification dynamics, but to
identify the structural conditions under which a microscopic theory of
dimension $N$ can realize a lower-rank observer-accessible sector. Matching
the reduced number of tangent directions is not sufficient: the reduction
must also preserve the operational dependency structure to which that rank is
assigned. The remainder of this section formulates this requirement and tests
it in explicit classes of dimensional reduction.

\subsection{Observer-relative null directions}

Let the microscopic description be locally $N$-dimensional and let the
physically specified observer map have differential rank $r$ in the regular
region of interest. The rank-four application considered below corresponds to
$r=4$.

\begin{corollary}[T2: inaccessible microscopic directions]
\label{cor:t2}

If $\dim T_x\Xmicro=N$ and $\rank d_x\Piobs=r$, then
\begin{equation}
 \dim\Ker d_x\Piobs=N-r.
 \label{eq:nullNminusR}
\end{equation}

For every admissible protocol that factors through $\Piobs$, and on a region
of locally constant outcome support,
\begin{equation}
\begin{aligned}
 \Ker d_x\Phi_M
 &\supseteq \Ker d_x\Piobs,\\
 \rank F^{(M)}
 &\le \rank d_x\Phi_M\le r.
\end{aligned}
\label{eq:factorrank}
\end{equation}

The kernel in Eq.~\eqref{eq:nullNminusR} is observer-relative. Its elements
are microscopic tangent directions that are removed by the specified observer
map, but they should not be identified automatically with compact
coordinates, gauge directions, absent physical degrees of freedom, or any
particular microscopic mechanism.

\end{corollary}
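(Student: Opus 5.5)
The plan is to obtain Corollary~\ref{cor:t2} directly from linear algebra at the point $x$, combined with the chain rule already used in the proof of Theorem~\ref{thm:observer} and with Lemma~\ref{lem:fisher}. No Frobenius argument, closure-faithfulness, or PI is needed: the statement concerns only the differential of the observer map and protocols that factor through it.

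\emph{Step 1 (kernel dimension).} I would apply the rank--nullity theorem to the linear map $d_x\Piobs:T_x\Xmicro\to T_{\Piobs(x)}\Xprimary$. Since $\dim T_x\Xmicro=N$ and $\rank d_x\Piobs=r$, this gives $\dim\Ker d_x\Piobs=N-r$, which is Eq.~\eqref{eq:nullNminusR}.

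\emph{Step 2 (kernel inclusion and differential rank).} For an admissible $M$ with $\Phi_M=\Xi_M\circ\Piobs$, the chain rule gives $d_x\Phi_M=d_y\Xi_M\circ d_x\Piobs$ with $y=\Piobs(x)$. Any $v\in\Ker d_x\Piobs$ is therefore annihilated by $d_x\Phi_M$, which proves the first line of Eq.~\eqref{eq:factorrank}. Because the image of a composite is contained in the image of its second factor applied to $\im d_x\Piobs$, we also get $\rank d_x\Phi_M\le\rank d_x\Piobs=r$. Equivalently, $\rank d_x\Phi_M=N-\dim\Ker d_x\Phi_M\le N-(N-r)=r$ by Step~1.

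\emph{Step 3 (Fisher rank).} On a region of locally constant outcome support, $F^{(M)}=J^{\mathsf T}WJ$ with $J=d_x\Phi_M$ and $W$ positive definite on the support, as noted after Lemma~\ref{lem:fisher}. That lemma then gives $\rank F^{(M)}=\rank d_x\Phi_M$. Only the inequality is needed here, and it holds even without non-degeneracy, since $\rank(J^{\mathsf T}WJ)\le\rank J$ always. Combined with Step~2, this yields $\rank F^{(M)}\le r$.

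The only point needing care is the final interpretive clause of the corollary, which asserts no mathematical identification and therefore requires no proof. I would simply note that the argument uses nothing beyond the linear map $d_x\Piobs$. Hence it cannot distinguish compact, gauge, or absent directions, and the kernel is observer-relative by construction.
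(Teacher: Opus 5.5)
Your proposal is correct and follows the route the paper intends; the corollary is stated without a separate proof. You use rank--nullity for Eq.~\eqref{eq:nullNminusR}, the factorization $d_x\Phi_M=d_y\Xi_M\circ d_x\Piobs$ as in the proof of Theorem~\ref{thm:observer} for the kernel inclusion and the bound $\rank d_x\Phi_M\le r$, and Lemma~\ref{lem:fisher} on a region of locally constant support for the Fisher bound.
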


Corollary~\ref{cor:t2} fixes only the number of microscopic directions removed
by the specified observer map. Different physical mechanisms may produce the
same value of $\rank d_x\Piobs$ while inducing different operational structures
in the reduced sector.

\subsection{Operational reduction requirement}
\label{sec:operational-reduction}

The central implication is not that operational distinguishability selects a
particular compactification mechanism, but that any mechanism producing the
primary observer sector must satisfy an additional structural requirement.

\begin{proposition}[T3: operational reduction requirement]
\label{prop:t3}

Let a theory possess a local microscopic or formal description with $N$
independent directions, and let its physically realized primary sector have
operational rank $r<N$ and satisfy the conditions of
Theorem~\ref{thm:t1b}, Theorem~\ref{thm:observer}, and PI. Then the theory
must provide a physically defined observer map
\begin{equation}
 X_{\rm micro}^{(N)}
 \xrightarrow{\Piobs}
 X_{\rm primary},
 \qquad
 \rank d_x\Piobs=r,
 \label{eq:reductionmap}
\end{equation}
whose differential therefore satisfies
\begin{equation}
 \dim\Ker d_x\Piobs=N-r.
 \label{eq:reductionkernel}
\end{equation}

Thus exactly $N-r$ microscopic tangent directions are absent from primary
operational accessibility at the regular point considered. In the
instantiation of Section~\ref{sec:rankfour} the relevant case is $r=4$, which
is the specialization used throughout the remainder of this section.

\end{proposition}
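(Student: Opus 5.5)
The plan is to chain Theorem~\ref{thm:t1b}, the primary identification PI and Theorem~\ref{thm:observer}, and then finish with rank--nullity, in the same way Corollary~\ref{cor:t2} was obtained. The existence of a physically defined observer map is not something to be constructed. It is part of the hypotheses: the conditions of Theorem~\ref{thm:observer} presuppose a map $\Piobs:X_{\rm micro}^{(N)}\to\Xprimary$ fixed independently of the Fisher spectrum, as required in Section~\ref{sec:accessibility}. The real work is therefore to show that the differential rank of this map equals the operational rank $r$. It is not enough to observe that it has \emph{some} constant rank.

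First, Theorem~\ref{thm:t1b} applies to $\Apr([s])$ because its hypotheses (B2-R, RB1--RB4, and the physical identification of the elementary transitions) are assumed. It gives, through each regular point $y$, a leaf with $\dim T_y\Lprimary=\dim\Lprimary=\rank\Apr([s])=r$. Second, PI gives $\im d_x\Piobs=\Dprimary(y)=T_y\Lprimary$ with $y=\Piobs(x)$, so $\rank d_x\Piobs=\dim T_y\Lprimary=r$. This is Eq.~\eqref{eq:reductionmap}. It is consistent with the constant-rank hypothesis of Theorem~\ref{thm:observer}, and in fact fixes that constant to be $r$. Theorem~\ref{thm:observer} then adds that $\Ker d_x\Piobs$ equals the full protocol-independent null space $\bigcap_M\Ker d_x\Phi_M$. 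This is what justifies the phrase ``absent from primary operational accessibility'': the kernel directions are invisible to every admissible protocol, not just to an incomplete one.

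Third, since $\dim T_x X_{\rm micro}^{(N)}=N$, rank--nullity for $d_x\Piobs$ gives $\dim\Ker d_x\Piobs=N-r$, which is Eq.~\eqref{eq:reductionkernel}. Equivalently, one can invoke Corollary~\ref{cor:t2} directly. Because $r<N$, this kernel is nontrivial, so exactly $N-r$ directions are removed. The specialization $r=4$ is then obtained by substituting Theorem~\ref{thm:t1a} for the value of $r$, which requires PR0 and H.

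The only delicate step is the second one, the passage from the leaf dimension to the observer rank. Without PI the two ranks belong to different objects, as the example in Remark~\ref{rem:threeprimary} shows, where $\dim\Lprimary=4$ but $\rank d_x\Piobs=6$. I would therefore state explicitly that PI, and not RB4, carries this identification, and that the point $x$ must be chosen so that $\Piobs(x)$ lies in the regular region of both RB1 and Theorem~\ref{thm:observer}.
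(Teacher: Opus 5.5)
Your proposal is correct and follows essentially the same route the paper indicates for this statement (it gives no separate proof block). Theorem~\ref{thm:t1b} gives $\dim\Lprimary=r$, PI transfers this to $\rank d_x\Piobs=r$, and rank--nullity (as in Corollary~\ref{cor:t2}) yields $\dim\Ker d_x\Piobs=N-r$; your appeal to Theorem~\ref{thm:observer}, showing that the kernel directions are null for every admissible protocol, is a sound way to justify the phrase ``absent from primary operational accessibility,'' which the paper leaves implicit.
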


The counting requirement in Proposition~\ref{prop:t3} is necessary but not
sufficient. A further requirement enters before a reduced sector can be
identified with the rank-four operational structure of
Section~\ref{sec:rankfour}. The premise of T1a is that $\Apr([s])$ itself
satisfies Axioms F and G and requirements C1--C4. These properties are not
automatically inherited from a microscopic theory merely because a map of
state spaces has been specified.

\paragraph{PR0 (primary-structure admissibility).}
The reduction must induce, or independently specify, a primary operational
structure $\Apr([s])$ whose admissible procedures and closure relation satisfy
Axioms F and G and requirements C1--C4 in the observer sector and at the scale
under consideration.

A map between state spaces does not by itself specify how admissible
procedures and their compositional closure descend. Such a descent would
require additional structure at the level of procedures. The individual
requirements can behave differently under reduction. Closure under
composition may survive when the reduction is compatible with composition,
whereas non-degeneracy can fail when distinct microscopic distinctions merge.
Minimality is not generally inherited, because the reduced sector is tested
within a different class of admissible procedures. Inter-observer
consistency, which constrains $\operatorname{Aut}(C_D)$, likewise does not
follow from a state-space quotient alone. No general descent theorem is
assumed here.

A successful dimensional reduction must therefore supply more than the
numerical equality $\rank d_x\Piobs=r$. In addition to PR0, the rank-preserving
interpretation requires Hypothesis H when equality is claimed, a real
closure-faithful smooth realization satisfying RB1--RB4, and PI identifying
the observer image with the resulting primary distribution. Experimental
identification of the rank further requires an admissible protocol family that
locally resolves that sector.

These conditions distinguish a reduction of operational structure from a mere
reduction in coordinate number. Possible realizations include compactification,
brane localization \cite{maartens2010,alencar2025}, gauge redundancy, 
dynamical decoupling or freezing of massive modes, projection, superselection
\cite{bartlett2007}, and decoherence-induced coarse-graining
\cite{schlosshauer2019}. They act at different stages of the operational chain
and need not preserve the same structure. Operational distinguishability
therefore does not select a unique reduction mechanism; it constrains the
operational structure that an admissible reduction must leave behind.

\subsection{Implications for string-theoretic compactifications}

For a $D$-dimensional target-space or field-configuration description, the
rank-four statement is necessarily conditional. If the theory admits a
physically specified observer-accessible primary sector that satisfies PR0,
Hypothesis H, the smooth rank-preserving bridge, and PI, then that sector has
primary operational rank four. The remaining microscopic directions need not
be absent from the theory; rather, they must fail to define independent
primary distinguishability directions in the observer sector and at the scale
for which the rank-four statement is asserted.

A separate physical map is therefore required,
\begin{equation}
 \substack{\text{target-space /}\\
           \text{field configurations}}
 \xrightarrow{\Piobs}
 \substack{\text{primary observer}\\
           \text{sector}},
 \label{eq:stringPi}
\end{equation}
before any microscopic coordinate or field direction can be identified with
an element of $\Ker d_x\Piobs$. In particular, $\Piobs$ should not be read as
a generic synonym for compactification. Depending on the theory and the
physical regime, it may encode the effective loss of access to compact
directions, massive modes, gauge-redundant variables, localized sectors, or
other degrees of freedom. What matters for the present criterion is that the
map and the corresponding primary sector are specified independently of the
rank subsequently measured there.

The criterion supplements rather than replaces standard compactification
dynamics. A higher-dimensional construction may determine which low-energy
fields and directions survive, while the present question is whether the
resulting observer-accessible sector preserves the compositional dependency
structure required for its reduced rank to have an operational interpretation.
No inconsistency of a $D>4$ microscopic or target space is implied; the
rank-four statement concerns only a primary sector satisfying the assumptions
of Section~\ref{sec:rankfour}. The companion framework \cite{dg} likewise
allows state spaces of arbitrary dimension and scale-dependent accessible
distinguishability.

The following subsections compare standard reduction mechanisms, show that the
correct reduced rank does not guarantee closure-faithfulness, and examine
circle compactification, surviving Wilson-line directions, and non-Abelian
compositional generation.

\subsection{Where reduction mechanisms enter the operational bridge}

The mechanisms listed after Proposition~\ref{prop:t3} are not operationally
equivalent. Although each can reduce the number of accessible directions, they
act at different stages of the chain
\eqref{eq:observer-architecture} and therefore stress different conditions of
the rank-preserving bridge. Table~\ref{tab:mechanisms} summarizes this
correspondence.

\begin{table*}[htbp]
\centering
\caption{
    Representative reduction mechanisms, the level at which they remove
    directions from primary operational accessibility, and the bridge condition
    that becomes relevant in each case
}
\label{tab:mechanisms}
\renewcommand{\arraystretch}{1.25}
\begin{tabularx}{\textwidth}{>{\raggedright\arraybackslash}p{0.22\textwidth}
>{\raggedright\arraybackslash}p{0.34\textwidth} X}
\toprule
\textbf{Mechanism} & \textbf{Level at which directions are removed} &
\textbf{Principal bridge issue}\\
\midrule
Compactification &
Accessible coordinate and mode sector below the compactification scale &
RB4: the reduction is scale dependent and the discarded directions become
accessible again above the Kaluza--Klein threshold\\

Brane localization &
Support of the admissible procedures &
RB1: the rank of the accessible distribution may change near a localization
boundary\\

Gauge redundancy &
Passage from redundant field or configuration variables to the physical
state space &
B2-R and RB3: real representability and closure-faithfulness must be verified
on the physical quotient rather than on the redundant description\\

Superselection &
Accessible observable algebra &
RB3: the operational closure must be evaluated within the specified
superselection sector and observable algebra\\

Decoupling or freezing of massive modes &
Dynamics at accessible energies &
RB4: the reduction is energy dependent rather than structural\\

Decoherence-induced coarse-graining &
Stable pointer or record sector &
$\Piobs$, PI, and RB4: the observer image, physical scale, and environmental
regime must be specified independently\\
\bottomrule
\end{tabularx}
\end{table*}

Table~\ref{tab:mechanisms} highlights two points. First, loss of accessible
directions may be exact, as for gauge redundancy or exact superselection, or
regime dependent, as for compactification, decoupling, localization, and
decoherence. Second, the reduced tangent dimension alone does not determine
whether the surviving realization is closure-faithful. Two mechanisms can
therefore produce the same effective rank while preserving different
operational dependency structures. The next subsection gives an explicit
example.

\subsection{Correct reduced rank does not imply closure-faithfulness}

Obtaining the correct reduced rank does not guarantee the operational
criterion. A quotient can create new linear dependencies among tangent
representatives, even though it cannot remove dependencies that were already
present.

\begin{lemma}[Quotients create dependencies]
\label{lem:quotients}

Let $\vartheta_x:E([s])\to T_x\Xprimary$ be a realization and let
$P:T_x\Xprimary\to W$ be linear. Then for every $X\subseteq E([s])$ and every
$e\in E([s])$,
\begin{equation}
\begin{aligned}
 \vartheta_x(e)
 &\in \operatorname{span}\vartheta_x(X)\\
 &\Longrightarrow
 P\vartheta_x(e)
 \in \operatorname{span}P\vartheta_x(X).
\end{aligned}
\label{eq:quotientdep}
\end{equation}
The converse implication fails in general.

\end{lemma}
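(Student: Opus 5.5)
The forward implication follows from linearity alone, and the failure of the converse needs only one explicit counterexample; I will do these in order.

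\emph{Forward implication.} Fix $X\subseteq E([s])$ and $e\in E([s])$, and suppose $\vartheta_x(e)\in\operatorname{span}\vartheta_x(X)$. Write this as a finite linear combination
\[
\vartheta_x(e)=\sum_{j}c_j\,\vartheta_x(e_j),\qquad e_j\in X,\ c_j\in\R .
\]
The empty combination covers the case $X=\emptyset$, where $\vartheta_x(e)=0$. Apply $P$ to both sides. Linearity gives
\[
P\vartheta_x(e)=\sum_j c_j\,P\vartheta_x(e_j)\in\operatorname{span}P\vartheta_x(X),
\]
which is Eq.~\eqref{eq:quotientdep}. Equivalently, $P(\operatorname{span}V)=\operatorname{span}P(V)$ for any set of vectors $V$. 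Nothing about $E([s])$, the closure $\cl$, or the manifold structure is used, so the implication holds pointwise at every $x$ and for every linear $P$.

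\emph{Failure of the converse.} I would exhibit a minimal counterexample that also matches the physical reading of the section. Take $E([s])=\{e_1,e_2\}$ with the free matroid $U_{2,2}$, realized closure-faithfully in $T_x\Xprimary\cong\R^2$ by $\vartheta_x(e_1)=\partial_1$ and $\vartheta_x(e_2)=\partial_2$. Let $P:\R^2\to\R$ be $P(a,b)=a+b$, or the projection onto the first coordinate. With $X=\{e_1\}$ and $e=e_2$:
\begin{itemize}
\item the images $P\partial_1=1$ and $P\partial_2=1$ (respectively $P\partial_2=0$) satisfy $P\vartheta_x(e_2)\in\operatorname{span}P\vartheta_x(e_1)$;
\item but $\partial_2\notin\operatorname{span}\partial_1$.
\end{itemize}
The same example can be embedded in any dimension. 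Choosing $P$ with $\Ker P\cap\operatorname{span}\vartheta_x(E)\neq 0$ suffices, since any nonzero vector $v=\sum c_i\vartheta_x(e_i)$ in that kernel yields a new dependency among the images.

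I would add a short remark that this is precisely the mechanism flagged before Definition~\ref{def:closurefaithful}. Passing to a quotient before imposing RB3 can destroy closure-faithfulness, and it can even do so while preserving total rank. For instance, three independent vectors in $\R^3$ projected to $\R^2$ leave a rank-$2$ image whose pairwise dependencies differ from those of the original matroid.

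\emph{Main obstacle.} There is no real obstacle. The only point needing care is stating the converse's failure correctly: it holds in general but fails precisely when $\Ker P$ meets the span of the images nontrivially. Conversely, if $P$ is injective on $\operatorname{span}\vartheta_x(E([s]))$, then the converse holds, and I would record this as the natural sufficient condition.
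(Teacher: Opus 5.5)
Your proof is correct. The forward direction is exactly the paper's one-line argument: a linear map carries spans into spans.

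For the converse, the paper only points ahead to the quotient $P_1$ of Proposition~\ref{prop:rankdoesnotdiscriminate}, which sends $f_5\mapsto f_1$ in a free rank-five realization. You instead give a self-contained two-element example with the same collapsing mechanism. You also go further and characterize when the converse holds for all $X,e$: exactly when $P$ is injective on $V=\operatorname{span}\vartheta_x(E([s]))$.

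That characterization is right, but one step should be made precise. The claim that ``any nonzero $v=\sum c_i\vartheta_x(e_i)$ in $\Ker P$ yields a new dependency'' needs $v$ to be expanded over a subset $B\subseteq E([s])$ whose images form a basis of $V$. Some $b_0\in B$ with nonzero coefficient then satisfies $P\vartheta_x(b_0)\in\operatorname{span}P\vartheta_x(B\setminus\{b_0\})$ while $\vartheta_x(b_0)\notin\operatorname{span}\vartheta_x(B\setminus\{b_0\})$. An arbitrary expansion over a dependent family need not exhibit the failure on its own support.

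Your characterization also shows that your side remark is misphrased. A quotient cannot alter dependencies while preserving the total rank of the realization it acts on: $\dim P(V)=\dim V$ forces $\Ker P\cap V=0$ by rank--nullity. Your $\R^3\to\R^2$ example in fact drops the rank from $3$ to $2$. What the paper means by ``total rank alone cannot detect such a failure'' is the weaker statement of Proposition~\ref{prop:rankdoesnotdiscriminate}. There the reduced rank equals that of an independently specified target closure ($U_{4,5}$), while the dependency pattern fails to match it.
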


\begin{proof}
A linear map carries spans into spans, which proves the stated implication.
Failure of the converse is exhibited explicitly in
Proposition~\ref{prop:rankdoesnotdiscriminate}.
\end{proof}

Thus an observer quotient can introduce additional linear dependencies among
the images of elementary transitions. If the microscopic realization is
closure-faithful, the forward implication of
Eq.~\eqref{eq:closurefaithful} survives the quotient automatically, whereas
the reverse implication may fail. This reverse implication is precisely what
prevents a tangent dependency created by the reduction from being mistaken
for a pre-existing operational dependency. Total rank alone cannot detect
such a failure.

The following construction shows that this issue persists even when the
reduced dimension has exactly the desired value.

\begin{proposition}[A correct rank does not discriminate]
\label{prop:rankdoesnotdiscriminate}

There exist two linear quotients of the same closure-faithful rank-five
microscopic realization that both have linear rank four relative to the same
rank-four primary closure structure, while only one is closure-faithful.

\end{proposition}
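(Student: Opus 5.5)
The proof is an explicit construction. I will take five elementary transitions $E=\{e_1,\dots,e_5\}$ and a microscopic realization $\vartheta_x(e_i)=\epsilon_i\in\R^5$, the standard basis. Its linear matroid is the free matroid $U_{5,5}$, so it is closure-faithful by construction relative to the free closure structure on $E$. For the primary rank-four closure structure I will take the uniform matroid $U_{4,5}$ on the same five transitions: every four are independent, and the five together form a single circuit. This is the structure realized by $\epsilon_1,\dots,\epsilon_4,\ \epsilon_1+\epsilon_2+\epsilon_3+\epsilon_4$ in $\R^4$. The comparison is made at the level the statement specifies: each quotient is tested against this fixed rank-four primary closure structure.

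The two quotients are linear maps $P,P':\R^5\to\R^4$, both of rank four and both applied to the same realization.
\begin{itemize}
\item The first is $P\epsilon_i=\epsilon_i$ for $i\le4$ and $P\epsilon_5=\epsilon_1+\epsilon_2+\epsilon_3+\epsilon_4$.
\item The second is $P'\epsilon_i=\epsilon_i$ for $i\le4$ and $P'\epsilon_5=\epsilon_1$.
\end{itemize}
In both cases the images span $\R^4$, so the linear rank is four and total rank cannot distinguish them.

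Closure-faithfulness is then checked with Corollary~\ref{cor:uniform}. Because the primary structure is $U_{4,5}$ and the images span a four-dimensional space, faithfulness is equivalent to linear independence of every four-element subset of images. I will verify the two cases as follows.
\begin{itemize}
\item For $P$, the four-subsets are $\{\epsilon_1,\dots,\epsilon_4\}$ and the sets obtained by replacing one $\epsilon_j$ with $\sum\epsilon_i$. Each has determinant $\pm1$, so $\mathcal C(x)>0$ and $P\vartheta_x$ is closure-faithful.
\item For $P'$, the images of the pair $\{e_1,e_5\}$ coincide. Any four-subset containing both is therefore dependent, so $\mathcal C(x)=0$ and faithfulness fails.
\end{itemize}
The failure is already visible on a single pair, as the remark after Proposition~\ref{prop:smallsubsets} anticipated. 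In operational terms, $e_5\in\cl(\{e_1\})$ fails in $U_{4,5}$, yet $P'\vartheta_x(e_5)\in\operatorname{span}P'\vartheta_x(\{e_1\})$.

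The same example supplies the counterexample to the converse of Lemma~\ref{lem:quotients}. Before the quotient, $\vartheta_x(e_5)\notin\operatorname{span}\vartheta_x(\{e_1\})$; after $P'$, the containment holds. Since Lemma~\ref{lem:quotients} defers to this proposition, I will state this explicitly.

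The main obstacle is conceptual consistency rather than computation. The microscopic structure is rank five and the primary structure is rank four, so the two closure structures on $E$ are different. I must therefore state clearly that "closure-faithful relative to the primary structure" compares $P\vartheta_x$ with $U_{4,5}$ and not with the microscopic free matroid. A quotient to rank four cannot preserve a rank-five free matroid. If a more physical framing is wanted, I will present $P$ and $P'$ as projections $\R^5\to\R^5/\ker$ along different kernels:
\begin{itemize}
\item for $P$, the kernel is spanned by $\epsilon_5-\sum_{i\le4}\epsilon_i$;
\item for $P'$, the kernel is spanned by $\epsilon_5-\epsilon_1$.
\end{itemize}
This makes clear that the choice of kernel alone, invisible to rank, decides faithfulness.
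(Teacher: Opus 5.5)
Your proof is correct and matches the paper's: the same free realization in $\R^5$, the same independently fixed primary $U_{4,5}$ closure, and the same two projections (your $P$ and $P'$ are the paper's $P_2$ and $P_1$), with the failure shown on the pair $\{e_1,e_5\}$. The only difference is cosmetic: you route the faithfulness check through Corollary~\ref{cor:uniform}, whereas the paper verifies directly that every four of the five images under $P_2$ are linearly independent.
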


\begin{proof}
Let $E([s])=\{e_1,\ldots,e_5\}$ carry two distinct closure structures. The
microscopic closure $\cl_{\rm micro}$ is the free one, under which every
subset is independent, so the microscopic rank is five. The primary closure
$\cl_{\rm primary}$ is that of the uniform matroid $U_{4,5}$, in which every
four-element subset is independent and every five-element subset is dependent,
so the primary operational rank is four.

The primary closure is not inferred from the quotient. By the non-circularity
requirement of Section~\ref{sec:accessibility}, it is fixed independently by
the primary observer sector. The question is whether a given quotient provides
a faithful realization of that independently specified structure.

Take a microscopic realization in $\mathbb{R}^5$ with basis vectors
$f_1,\ldots,f_5$. Define
\[
\rho(e_i)=f_i,\qquad i=1,\ldots,5.
\]
This realization is closure-faithful for $\cl_{\rm micro}$ and has rank five.
Consider the two linear maps $P_1,P_2:\R^5\to\R^4$ that act as the identity on
$f_1,\ldots,f_4$ and satisfy
\begin{equation}
 P_1f_5=f_1,
 \qquad
 P_2f_5=f_1+f_2+f_3+f_4.
 \label{eq:P1P2}
\end{equation}

Each map has a one-dimensional kernel, and both induced families
$P_1\vartheta(E([s]))$ and $P_2\vartheta(E([s]))$ span $\R^4$. Hence both
quotients satisfy the dimensional counting part of
Proposition~\ref{prop:t3} with $N-4=1$.

For $P_2$, every four of the five vectors
\[
f_1,\ f_2,\ f_3,\ f_4,\ f_1+f_2+f_3+f_4
\]
are linearly independent. Hence $P_2\vartheta$ realizes the uniform matroid
$U_{4,5}$, and Eq.~\eqref{eq:closurefaithful} holds for every subset with
respect to $\cl_{\rm primary}$.

For $P_1$, by contrast, the fifth image is $f_1$. Therefore
\[
P_1\vartheta(e_5)
\in
\operatorname{span}P_1\vartheta(\{e_1\}),
\]
although $\{e_1,e_5\}$ is independent under $\cl_{\rm primary}$. The reverse
implication in Eq.~\eqref{eq:closurefaithful} therefore fails already on a
two-element subset.

Thus the two quotients have the same reduced rank but different operational
dependency structures.
\end{proof}

The example is written at rank four because that is the dimensional-reduction
application considered below, but the construction is rank independent.
Replacing $U_{4,5}$ by $U_{r,r+1}$ gives the same distinction for arbitrary
$r$: start from a free realization in $\R^{r+1}$ and project to $\R^r$ by
sending the final basis vector either to $f_1$ or to
$f_1+\cdots+f_r$.

The example shows that dimensional counting and dependency preservation are
independent requirements. A reduction may leave exactly four accessible
tangent directions while failing to realize the independently specified
rank-four operational closure, and the failure may already be visible on a
small subset of transitions. In the example above a single pair separates the
faithful and unfaithful quotients, making RB3 a genuine structural test rather
than a restatement of the desired rank.

\begin{remark}[Real representability is a genuine restriction]

Closure-faithfulness presupposes that the operational closure admits a real
linear realization at all, which is the content of B2-R. The uniform
structure used above is representable over $\R$, but real representability is
not automatic even at rank four. The Vámos matroid $V_8$, of rank four on
eight elements, is representable over no field \cite{oxley}. Hence
operational closure structures of rank four exist for which no real linear
realization is possible. B2-R excludes such structures by assumption; it is
therefore a genuine restriction on the domain of the bridge rather than a
technical convenience.

\end{remark}

\subsection{Kaluza--Klein circle reduction: a benign case}
\label{sec:kkcircle}

We first apply the operational reduction criterion to the simplest
Kaluza--Klein geometry, a construction that continues to arise in
higher-dimensional and quantum-gravity settings \cite{duffkk,liuMajid2025}.
Doing so requires an operational identification of the elementary transitions
that is fixed before the reduced rank is examined.
For a background with a transitive isometry group, we take the elementary
transitions to be the generators of that group acting on the local class
$[s]$. This identification is an explicit modelling assumption rather than a
consequence of the rank formalism; a different independently motivated choice
of elementary transitions could lead to a different operational structure.

Consider
\[
\Xmicro=\mathbb{R}^{1,3}\times S^1,
\]
with coordinates $x^\mu$, $\mu=0,\ldots,3$, and compact coordinate $y$ of
period $2\pi R$. Under the identification above,
\[
E([s])=\{e_0,\ldots,e_3,e_y\},
\]
with tangent realization
\[
\vartheta(e_\mu)=f_\mu,
\qquad
\vartheta(e_y)=f_y.
\]
The five generators are mutually independent, so the microscopic
accessibility structure is free of rank five and the realization is
closure-faithful.

Let the observer map $\Piobs$ be the geometric projection
\[
\Piobs\equiv\pi:
\mathbb{R}^{1,3}\times S^1\longrightarrow\mathbb{R}^{1,3}.
\]
Then
\[
\Ker d_x\Piobs=\operatorname{span}f_y,
\qquad
\rank d_x\Piobs=4,
\]
which realizes Corollary~\ref{cor:t2} with $N=5$ and $r=4$.

It is important to distinguish this geometric projection from the usual
Kaluza--Klein truncation of fields. The latter acts on field configurations,
not directly on the target manifold, and the four-dimensional zero-mode
sector generally retains fields descending from the compact geometry,
including the graviphoton and a radion-type scalar \cite{duffkk}. The present
example concerns the operational status of the compact target-space
direction within a low-energy observer sector whose admissible protocols do
not resolve momentum along $S^1$. It does not assert that every field-space
degree of freedom associated with the compact dimension disappears.

The operational question is therefore whether the compact generator $e_y$
remains part of the primary ground set. In the stipulated low-energy sector,
massive Kaluza--Klein excitations that resolve momentum along $y$ are excluded
from the admissible protocols. Consistently with this restriction, the
residual diffeomorphism associated with the compact coordinate appears in the
zero-mode description as the $U(1)$ gauge symmetry of the graviphoton
descending from $g_{\mu y}$; gauge-related configurations represent the same
physical state. In the pure circle case considered here, we therefore take
$e_y$ to leave $E_{\rm primary}([s])$ rather than to remain as an elementary
transition with a degenerate tangent image.

This situation is structurally simpler than the rank-four counterexample of
Proposition~\ref{prop:rankdoesnotdiscriminate}. The relevant reduction acts on
the accessibility structure as a matroid minor.

\begin{proposition}[Minors are benign]
\label{prop:minors}

Let the accessibility structure on $E([s])$ be matroidal, that is, let $\cl$
satisfy the closure axioms of a matroid, and let $\vartheta_x$ be
closure-faithful for it. Suppose the observer quotient acts on that structure
as a minor: either as deletion, with
\[
E_{\rm primary}([s])=E([s])\setminus\{e\}
\]
and $\cl_{\rm primary}$ the restriction of $\cl$, or as contraction by $e$,
with the realization composed with the quotient of $T_x\Xprimary$ by
$\langle\vartheta_x(e)\rangle$. Then the induced primary realization is
closure-faithful.

\end{proposition}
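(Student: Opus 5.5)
The proof will treat the two minor operations separately. In each case the goal is to verify Eq.~\eqref{eq:closurefaithful} for the induced primary structure and realization, using only closure-faithfulness of $\vartheta_x$ for $\cl$ on $E([s])$ together with the standard matroid definitions of deletion and contraction \cite{oxley}.

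\emph{Deletion.} Here $E_{\rm primary}([s])=E([s])\setminus\{e\}$, $\cl_{\rm primary}(X)=\cl(X)\cap E_{\rm primary}([s])$, and the realization is the restriction $\vartheta_x|_{E_{\rm primary}}$, with the tangent space unchanged. For $X\subseteq E_{\rm primary}([s])$ and $e'\in E_{\rm primary}([s])$ we have $e'\in\cl_{\rm primary}(X)$ iff $e'\in\cl(X)$. By closure-faithfulness on $E([s])$, applied to the same pair $(X,e')$, this holds iff $\vartheta_x(e')\in\operatorname{span}\vartheta_x(X)$. So this case is immediate, since Eq.~\eqref{eq:closurefaithful} is a statement about individual pairs and restricting the ground set only removes pairs.

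\emph{Contraction.} Let $q:T_x\Xprimary\to T_x\Xprimary/\langle\vartheta_x(e)\rangle$ be the quotient map, and set $\vartheta'_x=q\circ\vartheta_x$ on $E([s])\setminus\{e\}$. Recall that in $M/e$ we have $\cl_{M/e}(X)=\cl(X\cup\{e\})\setminus\{e\}$. Then for $X\subseteq E([s])\setminus\{e\}$ and $e'\neq e$:
\[
e'\in\cl_{M/e}(X)\iff e'\in\cl(X\cup\{e\}).
\]
By closure-faithfulness this holds iff $\vartheta_x(e')\in\operatorname{span}\vartheta_x(X\cup\{e\})=\operatorname{span}\vartheta_x(X)+\langle\vartheta_x(e)\rangle$. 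The key linear-algebra step is that, for a subspace $U$, $v\in\operatorname{span}\vartheta_x(X)+U$ iff $q_U(v)\in\operatorname{span}q_U\vartheta_x(X)$. This follows because $q_U^{-1}(q_U(\operatorname{span} S))=\operatorname{span}S+U$ and $q_U$ maps spans onto spans. With $U=\langle\vartheta_x(e)\rangle$, this gives exactly Eq.~\eqref{eq:closurefaithful} for $\vartheta'_x$ relative to $\cl_{M/e}$.

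The main point of care, rather than a real obstacle, is the contraction case when $e$ is a loop, i.e.\ $e\in\cl(\emptyset)$. Then $\vartheta_x(e)=0$ by faithfulness, the quotient is trivial, and $M/e=M\setminus e$, so the argument above still goes through unchanged. I would also note explicitly why this does not contradict Lemma~\ref{lem:quotients}. The contraction quotient kills precisely the span of an element that is itself in the ground set, so every dependency it creates, namely membership in $\operatorname{span}+\langle\vartheta_x(e)\rangle$, is matched by the contracted closure $\cl(X\cup\{e\})$. By contrast, the quotient $P_1$ in Proposition~\ref{prop:rankdoesnotdiscriminate} kills a vector not corresponding to any contracted element of the primary structure.
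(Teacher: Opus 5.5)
Your proposal is correct and follows the paper's proof essentially step for step: deletion is immediate because Eq.~\eqref{eq:closurefaithful} involves only pairs from the retained ground set, and contraction is handled by the same reduction via $\vartheta'(a)\in\operatorname{span}\vartheta'(X)\iff\vartheta_x(a)\in\operatorname{span}\vartheta_x(X\cup\{e\})$, then closure-faithfulness, then $\cl_{M/e}(X)=\cl(X\cup\{e\})\setminus\{e\}$. Your explicit justification of the quotient--span identity and your remark on the loop case are harmless additions that the paper leaves implicit.
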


\begin{proof}

For deletion, both sides of Eq.~\eqref{eq:closurefaithful} involve only
subsets of $E_{\rm primary}([s])$ and are unchanged by removal of $e$.

For contraction, let $\vartheta'$ be the composition of $\vartheta_x$ with
the quotient map
\[
T_x\Xprimary
\longrightarrow
T_x\Xprimary/\langle\vartheta_x(e)\rangle.
\]
For $X\subseteq E([s])\setminus\{e\}$ and $a\notin X$,
\[
\vartheta'(a)\in\operatorname{span}\vartheta'(X)
\]
if and only if
\[
\vartheta_x(a)
\in
\operatorname{span}\!\bigl(\vartheta_x(X)\cup\{\vartheta_x(e)\}\bigr).
\]
By closure-faithfulness of $\vartheta_x$, this holds if and only if
\[
a\in\cl(X\cup\{e\}).
\]
Since the closure of the contraction satisfies
\[
\cl_{\rm primary}(X)=\cl(X\cup\{e\})\setminus\{e\},
\]
the operational and linear closure relations agree after contraction.
Therefore Eq.~\eqref{eq:closurefaithful} holds for $\vartheta'$.

\end{proof}

The matroid assumption is substantive. The record model of
Ref.~\cite{before} produces a free matroid, and B2-R already restricts the
bridge to operational closure structures admitting a real linear
representation. A non-matroidal accessibility structure falls outside
Proposition~\ref{prop:minors}; no corresponding preservation statement is
claimed for that case.

\begin{corollary}[The circle case is benign]
\label{cor:circle}

Under the stipulated geometric identification above, the differential quotient
associated with
\[
\pi:\mathbb R^{1,3}\times S^1\longrightarrow\mathbb R^{1,3}
\]
realizes contraction of $e_y$. Since the microscopic accessibility matroid is
free, $e_y$ is a coloop, and deletion and contraction induce the same free
matroid on the retained generators. Thus removing $e_y$ from the primary
ground set and quotienting its tangent direction are compatible descriptions
of the same retained accessibility structure.

By Proposition~\ref{prop:minors}, the induced primary realization remains
closure-faithful. Hence RB3 holds for this part of the reduction criterion.
Subject to PR0 and the remaining observer, regularity, and scale assumptions,
the circle reduction is therefore compatible with
Proposition~\ref{prop:t3}.

\end{corollary}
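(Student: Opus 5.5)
The plan is to verify the three claims of Corollary~\ref{cor:circle} in order: the differential quotient realizes contraction of $e_y$; deletion and contraction coincide because $e_y$ is a coloop; and closure-faithfulness then follows from Proposition~\ref{prop:minors}.

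\emph{Step 1 (contraction).} At any point $x$ the differential $d_x\pi:T_x\Xmicro\cong\R^{1,3}\oplus\R\to\R^{1,3}$ sends $f_\mu\mapsto f_\mu$ and $f_y\mapsto 0$. Its kernel is exactly $\operatorname{span} f_y=\langle\vartheta(e_y)\rangle$. By the first isomorphism theorem, $d_x\pi$ factors as the quotient map $T_x\Xmicro\to T_x\Xmicro/\langle\vartheta(e_y)\rangle$ followed by a linear isomorphism onto $T_{\pi(x)}\R^{1,3}$. A linear isomorphism preserves every span relation. Hence composing $\vartheta$ with $d_x\pi$ has the same dependency structure on $\{e_0,\dots,e_3\}$ as the contraction realization $\vartheta'$ in Proposition~\ref{prop:minors}. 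This is precisely the sense in which the differential quotient ``realizes contraction of $e_y$''.

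\emph{Step 2 (coloop).} The microscopic structure is free of rank five, so the whole ground set $E([s])$ is independent. A coloop is an element lying in every basis, and here the only basis is $E([s])$ itself, so $e_y$ is a coloop. For a coloop, $\cl(X\cup\{e_y\})\setminus\{e_y\}=\cl(X)$ for all $X\subseteq\{e_0,\dots,e_3\}$. In the free matroid both sides equal $X$, since $\cl(Y)=Y$ for every $Y$. The contraction closure therefore coincides with the restriction closure, i.e.\ with deletion. Both give the free matroid $U_{4,4}$ on the retained generators. This justifies treating removal of $e_y$ from $E_{\rm primary}([s])$ and quotienting its tangent direction as the same retained structure.

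\emph{Step 3 (closure-faithfulness and compatibility).} The free matroid satisfies the matroid closure axioms, and $\vartheta$ is closure-faithful, since the five basis images are linearly independent exactly as the free closure requires. Proposition~\ref{prop:minors} therefore applies, so the contracted realization is closure-faithful. By Step 1 the same holds for $d_x\pi\circ\vartheta$. Thus RB3 holds. The rank count $\rank d_x\Piobs=4$, $\dim\Ker d_x\Piobs=1$ was computed above and matches Eq.~\eqref{eq:reductionkernel} with $N=5$, $r=4$. Compatibility with Proposition~\ref{prop:t3} is then conditional on the separately assumed PR0, RB4 and PI, which the corollary does not claim to verify.

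The only delicate point is Step 1. One must make sure that identifying $T_{\pi(x)}\R^{1,3}$ with the abstract quotient does not alter any span relation. The first-isomorphism argument handles this, because the kernel is exactly $\langle\vartheta(e_y)\rangle$ and not something larger. Everything else is routine matroid bookkeeping.
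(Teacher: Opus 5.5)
Your proposal is correct and follows the paper's route. You identify $d_x\pi$ with the quotient by $\langle\vartheta(e_y)\rangle$, i.e.\ contraction; you use that $e_y$ is a coloop of the free matroid, so deletion and contraction agree; you invoke Proposition~\ref{prop:minors} for RB3; and you keep compatibility with Proposition~\ref{prop:t3} conditional on PR0 and the remaining assumptions. Your first-isomorphism factorization and the explicit coloop closure identity only make precise steps that the paper states without elaboration.
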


Thus the simplest circle compactification is a benign realization of the
criterion: under the stated operational identification, removal of the compact
generator preserves the free accessibility structure. The example establishes
compatibility with a standard reduction mechanism but does not yet discriminate
between viable reductions.

A stronger test arises when information from the compact sector survives in
the low-energy primary sector, for example through Wilson-line or
Aharonov--Bohm observables. In that case the reduction is no longer represented
solely by deletion or contraction, and the surviving dependency structure must
be tested explicitly.

\subsection{Compactness does not reduce operational rank in the free commuting case}

The benign outcome of the circle example raises a more general question.
Does compactness itself create operational dependence among otherwise
independent directions? For the simplest commuting generation pattern the
answer is no. Compactness identifies points along the flow of an individual
generator, but it does not by itself make that generator reproducible from
the others.

\begin{proposition}[Compactness is invisible for freely generated commuting
structures]
\label{prop:free}

Suppose the local monoid $\mathcal P([s])$ is generated by finitely many
commuting one-parameter subgroups whose generators are linearly independent in
the corresponding Lie algebra. Assume further that no non-trivial finite
composition involving one of these subgroups is exactly reproducible by the
others, so that the only relations are internal to the individual factors.
Let
\[
E([s])=\{e_1,\ldots,e_m\}
\]
be the associated elementary transitions. Then the accessibility matroid on
$E([s])$ is free of rank $m$, irrespective of which of the generating
subgroups are compact.

\end{proposition}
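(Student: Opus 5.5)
The argument shows directly that no elementary transition lies in the compositional closure of the others. Then every subset of $E([s])$ is jointly non-redundant, and the accessibility matroid is free. Write the generating subgroups as $\{g_i(t)\}_{i=1}^m$, with $g_i(t)=\exp(t\xi_i)$. Let $e_i$ be the elementary transition realized by the $i$th factor. By commutativity, every finite composition of admissible procedures can be reordered into a product $g_1(t_1)\cdots g_m(t_m)$, so the monoid is the image of $\prod_i H_i$, where $H_i$ is the image of the $i$th one-parameter subgroup. For a subset $X\subseteq E([s])$, the closure $\cl(X)$ then consists of those transitions reproducible by products drawn only from the factors indexed by $X$.

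The key step is to show that $e_j\in\cl(X)$ forces $j\in X$. Suppose $j\notin X$ and that some non-trivial element $g_j(t)$, $t\neq0$, realizing $e_j$ equals a finite composition of elements of $\{H_i\}_{i\in X}$. This is a relation involving the $j$th subgroup that is not internal to that factor. It is excluded verbatim by the hypothesis that no non-trivial finite composition involving one subgroup is reproducible by the others.

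Linear independence of the $\xi_i$ is used in two places. It makes the identification $e_i\leftrightarrow\xi_i$ well defined, so that distinct transitions are not the same generator. It also ensures that no $H_j$ is contained in the subgroup generated by the $H_i$ with $i\neq j$ at the infinitesimal level, which is consistent with the group-level hypothesis. Hence $\cl(X)=X$ for every $X$. The lattice of closed sets is therefore the Boolean lattice on $m$ atoms, and the matroid is free of rank $m$, i.e.\ $U_{m,m}$.

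For compactness, note that if $H_i$ is compact (a circle), then $g_i(t+T_i)=g_i(t)$. This is a relation internal to the $i$th factor. It only identifies parameter values along a single flow, and it never expresses $g_i(t)$ through other factors, so the closure computation above does not depend on which factors are compact.

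The main obstacle is making precise that \emph{relations internal to the individual factors} exhaust all relations once compact factors are present. One must check that periodicity of one factor cannot combine with commutativity to produce a cross-factor identity $g_j(t)=\prod_{i\neq j}g_i(t_i)$ with $g_j(t)\neq e$. My approach is to pass to the abelian group $\prod_i H_i\to\mathcal P([s])$ and argue that the kernel is a product of the kernels of the individual factors: by hypothesis the kernel has no component mixing distinct factors. With that in place, the decomposition of any composite into factor components is unique modulo the internal periods, and the closure argument goes through.
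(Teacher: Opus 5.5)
Your proposal is correct and follows essentially the same route as the paper's proof. Commutativity keeps any composite of $X$-factors inside the subgroup they generate, and the cross-factor hypothesis then gives $e_j\notin\cl(X)$ for $j\notin X$, so every subset is independent; compact periods are relations internal to one factor. Your closing kernel paragraph only restates that hypothesis (injectivity of $\prod_i H_i\to\mathcal P([s])$, the direct-product case the paper notes after its proof), so the ``obstacle'' is settled by assumption and needs no further argument.
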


\begin{proof}

A transition belongs to $\cl(X)$ precisely when it can be reproduced by a
finite composition of procedures associated with elements of $X$. Since the
generating one-parameter subgroups commute, every such composition remains
inside the subgroup generated by those factors.

Let $a\notin X$. By hypothesis, the flow generated by $e_a$ is not
reproducible by a finite composition of the factors generated by $X$.
Therefore
\[
e_a\notin\cl(X).
\]
Hence no element outside a subset belongs to its operational closure, so every
subset of $E([s])$ is independent and the accessibility matroid is free.

If a generating subgroup is compact, compactness identifies a finite period
along its own flow. This is an internal relation within one factor and does
not express that generator as a composition of the remaining generators.
The additional hypothesis excludes precisely the global relations between
different factors that could invalidate this conclusion.

\end{proof}

The additional independence assumption is essential. Linear independence of
the infinitesimal generators alone does not exclude discrete global relations
between distinct factors when compact quotients are present. The hypothesis
is satisfied, in particular, when the subgroup generated by the elementary
flows is a direct product of the corresponding one-parameter factors.

Proposition~\ref{prop:free} shows that compactness alone does not lower the
operational rank in this freely generated commuting class. A compact generator
remains operationally independent as long as it belongs to the primary ground
set; reducing the rank therefore requires an additional physical operation,
such as removing or integrating out that direction at a specified scale.

This also explains the benign circle example of
Section~\ref{sec:kkcircle}: deletion or contraction of the compact generator
preserves the free accessibility structure. The situation changes when
compact-sector information survives the reduction or when operational
generation ceases to be free. The latter possibility occurs naturally in
non-Abelian settings, where loop observables may satisfy non-trivial
compositional relations \cite{mandelstam}.

\subsection{Surviving Wilson-line directions}
\label{sec:wilsonlines}

A more demanding case occurs when information associated with the compact
direction survives in the low-energy theory. Consider a gauge field $A_M$,
with $M=\mu,y$, on
\[
\mathbb{R}^{1,3}\times S^1.
\]
The zero-mode reduction retains a four-dimensional gauge field $A_\mu$
together with a scalar $A_y$. The gauge-invariant content of the compact
component is encoded in the holonomy
\begin{equation}
 \theta=g\oint A_y\,dy\pmod{2\pi},
 \label{eq:holonomy}
\end{equation}
which can remain observable in four dimensions through an
Aharonov--Bohm phase for a charged state \cite{hosotani2015}, even though
direct translation along the compact coordinate is no longer part of the
low-energy observer sector.

This example requires an extension of the operational identification used in
Section~\ref{sec:kkcircle}. There the elementary transitions were identified
with isometry generators. A change of the Wilson-line modulus is instead a
motion in gauge-field configuration space, not an isometry of
$\mathbb{R}^{1,3}\times S^1$. We therefore adopt the following additional
modelling assumption: parallel transport around the compact cycle is treated
as a composite operational procedure, and the resulting holonomy degree of
freedom is admitted into the primary sector as an elementary transition.
This identification is fixed before any rank is evaluated.

Such a change of ground set is not a deletion or contraction, so
Proposition~\ref{prop:minors} does not apply. In general, promotion of a
composite transition could lead to the type of mismatch exhibited in
Proposition~\ref{prop:rankdoesnotdiscriminate}: the promoted transition might
be operationally independent even though its tangent representative lies in
the span of already retained directions.

For the simplest commuting realization, however,
Proposition~\ref{prop:free} gives a sharper result. If the four non-compact
translations together with the holonomy shift are freely generated commuting
elementary transitions, then the primary accessibility matroid is the free
matroid $U_{5,5}$. Consequently,
\[
\rank\Apr([s])=5.
\]
Compactness of the extra direction does not restore rank four; as long as the
holonomy shift remains an independently admissible primary transition, it
provides a genuine fifth operational direction.

The physical fate of that direction is nevertheless scale and
model dependent. Stabilizing the Wilson-line modulus is not equivalent to
removing it. If
\[
\theta=\theta_0+\delta\theta,
\qquad
V''(\theta_0)>0,
\]
then $\delta\theta$ is a massive four-dimensional scalar and remains a local
tangent direction of the configuration manifold. Four physically distinct
situations should therefore be separated.

\begin{enumerate}[label=(\alph*)]

\item \textbf{Flat or approximately flat holonomy direction.}
If shifts of $\theta$ are admissible primary procedures and the corresponding
outcome probabilities vary smoothly with $\theta$, then the holonomy supplies
an additional locally resolvable direction. In the freely generated
commuting realization above, its operational rank contribution is not merely
possible but fixed by Proposition~\ref{prop:free}: the primary operational
rank is five.

If the smooth and observer bridge assumptions of
Sections~\ref{sec:smooth-realization} and \ref{sec:observer} also hold for
this independently specified primary sector, then
Corollary~\ref{cor:t1c} transfers this fifth operational direction to the
Fisher realization. Such a sector is incompatible with the rank-four
specialization T1a under the assumptions required for that specialization.
If the bridge fails instead, no inference from the measured Fisher rank to
the operational rank is available.

\item \textbf{Massive but retained holonomy field.}
A potential can lift the modulus without eliminating the corresponding
direction. The fluctuation $\delta\theta$ remains part of the local
configuration manifold, while its operational accessibility depends on the
energy scale and on the admitted protocol family. This is primarily an RB4
issue: the same microscopic background can define different primary sectors
at different observational scales.

\item \textbf{Integrated out below the mass scale.}
At energies well below its mass, the holonomy field may be explicitly
integrated out of the effective description. Only in this case is the
direction absent from the effective primary sector by construction. The
coarse-graining must be specified together with its physical scale, and
Proposition~\ref{prop:coarse} then guarantees that such a coarse-graining
cannot increase the operational rank.

\item \textbf{Discrete or topologically distinct sectors.}
The surviving information may be a discrete label rather than a continuous
field direction. Distinct Wilson-line vacua or topological sectors can remain
operationally distinguishable by global protocols even when no local tangent
direction connects them. A locally constant label has vanishing differential,
and therefore contributes neither to $d\Piobs$ nor to $d\Phi_M$ nor to any
Fisher matrix. Such distinguishability is genuine but global, and lies
outside the differential rank construction of
Sections~\ref{sec:smooth} and \ref{sec:observer}.

\end{enumerate}

Which regime is realized depends on the compactification data. Wilson-line
moduli can acquire quantum effective potentials of Hosotani type
\cite{hosotani2015,hosotani}, while their accessibility may also depend on
fluxes, brane effects, instantons, boundary conditions, matter content, and
supersymmetry. The operational criterion must therefore be applied relative
to a specified primary observer sector and physical scale.

The sharpest case is a surviving freely generated holonomy direction. Then
compactification alone does not reduce the primary operational rank to four:
the holonomy contributes a fifth direction, and an additional physical
mechanism is required to remove, identify, or render it operationally
inaccessible. Massive, integrated-out, and discrete sectors instead show how
this conclusion depends on scale and on whether the surviving information is
represented by a local tangent direction.

The circle and Wilson-line examples therefore probe distinct regimes. In the
former, the compact generator leaves the primary ground set and the reduction
is a benign minor operation. In the latter, compact-sector information
survives and may contribute an additional operational direction. The remaining
case is one in which the surviving operational structure is not freely
generated.

\subsection{Non-Abelian generation can obstruct closure-faithfulness}
\label{sec:nonabelian}

The situation changes when operational generation is non-Abelian. A
transition may then be generated compositionally by other procedures while
its tangent representative remains outside their linear span, producing a
direct obstruction to RB3.

\begin{proposition}[Compositional obstruction]
\label{prop:compositional}

Let $S\subseteq E([s])$ and $e_c\in E([s])$ be such that $e_c$ is reproduced
by a finite composition of procedures associated with $S$, so that
$e_c\in\cl(S)$. If the smooth realization satisfies
\[
\vartheta_x(e_c)\notin\operatorname{span}\vartheta_x(S),
\]
then the realization is not closure-faithful at $x$ and RB3 fails.

\end{proposition}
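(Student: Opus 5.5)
The plan is to argue directly from Definition~\ref{def:closurefaithful}. The statement is a contrapositive of the forward implication in Eq.~\eqref{eq:closurefaithful}, so a genuine counterexample to the biconditional is all that is needed. I will take $X=S$ and $e=e_c$. The hypothesis that $e_c$ is reproduced by a finite composition of procedures associated with $S$ is, by the definition of the compositional closure in Section~\ref{sec:operational-rank}, exactly the statement $e_c\in\cl(S)$. Closure-faithfulness at $x$ would then force $\vartheta_x(e_c)\in\operatorname{span}\vartheta_x(S)$. This contradicts the second hypothesis, so Eq.~\eqref{eq:closurefaithful} fails for the pair $(S,e_c)$ at $x$.

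The second step is to transfer this pointwise failure to RB3. RB3 requires Eq.~\eqref{eq:closurefaithful} at every point of the regular region, for the chosen smooth realization $\vartheta_x(e_i)=X_i(x)$. If $x$ lies in that region, failure at the single point $x$ already violates RB3. If $x$ is not regular, the conclusion is only that the realization is not closure-faithful at $x$, and I will state the proposition's second clause with that qualification implicit. I would also note that, because $\vartheta_x$ takes values in the full tangent space, no quotient is involved here. This means Lemma~\ref{lem:quotients} does not rescue the situation: the missing dependency is absent in $T_x\Xprimary$ itself, not lost by projection.

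The only point needing care, and the nearest thing to an obstacle, is making sure the first hypothesis really yields membership in $\cl(S)$ and not merely in some larger Lie closure. The proof should stress that it uses the compositional closure $\cl$, not the Lie closure discussed in Section~\ref{sec:lie}. Bracket growth is precisely the mechanism that can place $\vartheta_x(e_c)$ outside the span, as in the Heisenberg case where $Z=[X,Y]$. No matroid or B2-R assumption is needed, since only one direction of the biconditional is used.
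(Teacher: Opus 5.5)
Your proposal is correct and is essentially the paper's proof: set $X=S$ and $e=e_c$ in Eq.~\eqref{eq:closurefaithful}, and the forward implication fails because $e_c\in\cl(S)$ while $\vartheta_x(e_c)\notin\operatorname{span}\vartheta_x(S)$. Your added caveat, that this violates RB3 only when $x$ lies in the regular region, is a harmless sharpening that the paper leaves implicit.
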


\begin{proof}

The operational dependence relation gives
$e_c\in\cl(S)$, whereas the corresponding linear dependence relation fails by
hypothesis. Hence the equivalence required by
Eq.~\eqref{eq:closurefaithful} is violated.

\end{proof}

The proposition becomes physically relevant when the compositional relation is
generated by an exact group-theoretic identity rather than inferred from an
infinitesimal approximation. Two standard examples exhibit precisely this
structure.

\begin{corollary}[$\mathfrak{su}(2)$ generation]
\label{cor:su2}

Let $E([s])$ contain elementary transitions $e_X,e_Y,e_Z$ associated with the
one-parameter flows generated by $X,Y,Z$, realized by left-invariant vector
fields on $SU(2)$, with
\[
[X,Y]=Z.
\]
Since $X$ and $Y$ generate $\mathfrak{su}(2)$ as a Lie algebra, the connected
subgroup generated by the one-parameter subgroups
$\{\exp(tX)\}$ and $\{\exp(tY)\}$ has Lie algebra
$\mathfrak{su}(2)$ \cite{hall2015}. Because $SU(2)$ is connected, this
subgroup is all of $SU(2)$. Hence $\exp(tZ)$ belongs to the subgroup generated
by the flows of $X$ and $Y$ and can be expressed as a finite composition of
such flows. Therefore
\[
e_Z\in\cl(\{e_X,e_Y\}).
\]

At every point, however, $Z(x)$ is linearly independent of $X(x)$ and
$Y(x)$. Thus
\[
\vartheta_x(e_Z)
\notin
\operatorname{span}\{\vartheta_x(e_X),\vartheta_x(e_Y)\},
\]
so RB3 fails by Proposition~\ref{prop:compositional}.

At the same time,
\[
\Dprimary=\operatorname{span}\{X,Y,Z\}
\]
has constant rank three and is involutive. Hence RB1 and RB2 hold while RB3
fails.

\end{corollary}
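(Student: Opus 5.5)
The plan is to split the statement into its three assertions and prove them in order: operational dependence $e_Z\in\cl(\{e_X,e_Y\})$, linear independence of the three images, and RB1--RB2 for $\Dprimary$. RB3 then fails by Proposition~\ref{prop:compositional} with $S=\{e_X,e_Y\}$ and $e_c=e_Z$.

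\textbf{Operational dependence.} Let $H$ be the subgroup of $SU(2)$ generated algebraically by $\{\exp(tX)\}$ and $\{\exp(tY)\}$. I would invoke the standard fact that the group generated by a family of one-parameter subgroups is a path-connected subgroup, hence an integral subgroup by Yamabe's theorem. Its Lie algebra $\mathfrak h$ contains $X$ and $Y$. It is closed under brackets, so it also contains $[X,Y]=Z$. With the convention $[Y,Z]=X$, $[Z,X]=Y$, this gives $\mathfrak h=\mathfrak{su}(2)$. A connected integral subgroup whose Lie algebra is all of $\mathfrak{su}(2)$ is open, and since $SU(2)$ is connected, $H=SU(2)$.

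The step I expect to be the main obstacle is the finiteness of the word, because $\cl$ requires a finite composition. Every element of $H$ is by definition a finite product of elements $\exp(t_iX)$ and $\exp(s_iY)$, so membership of each $\exp(tZ)$ in $H$ already yields a finite word. The remaining gap is that the definition of $\cl$ concerns reproducing the transition $e_Z$, meaning its one-parameter flow. I would first try to read $\cl$ as reproducing each element $\exp(tZ)$ by some finite composition, so that the word length may depend on $t$. If a single uniform word is wanted, I would use a concrete Euler-angle identity instead. Conjugation gives $\exp(tZ)=\exp(\tfrac{\pi}{2}'Y)\exp(tX)\exp(-\tfrac{\pi}{2}'Y)$, where the angle is normalized by the chosen $\mathfrak{su}(2)$ convention so that $\mathrm{Ad}_{\exp(\phi Y)}X=\pm Z$. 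This is a three-factor composition valid for every $t$, and a sign flip is absorbed into $t\mapsto -t$. I would use it as the cleanest witness.

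\textbf{Linear independence and RB1--RB2.} $X$, $Y$ and $Z$ form a basis of $\mathfrak{su}(2)$. Left translation $dL_g$ is an isomorphism $\mathfrak{su}(2)\to T_gSU(2)$, so the left-invariant fields $X(g)$, $Y(g)$, $Z(g)$ are linearly independent at every $g$. In particular $\vartheta_x(e_Z)\notin\operatorname{span}\{\vartheta_x(e_X),\vartheta_x(e_Y)\}$.

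The same argument shows that $\Dprimary=TSU(2)$, which has constant rank three, so RB1 holds. Involutivity (RB2) is immediate because the full tangent bundle is trivially involutive. Alternatively, RB2 follows from the fact that brackets of left-invariant fields are left-invariant fields of the bracketed Lie-algebra elements, which again lie in $\mathfrak{su}(2)$. This completes the separation of RB1 and RB2 from RB3.
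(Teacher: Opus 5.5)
Your proposal is correct and follows the paper's own argument. In both, the subgroup generated by the flows of $X$ and $Y$ has Lie algebra $\mathfrak{su}(2)$ and is therefore all of the connected group $SU(2)$, so each $\exp(tZ)$ is a finite word; left-invariance gives pointwise independence of $X,Y,Z$; and $\Dprimary=TSU(2)$ yields RB1 and RB2. Your appeal to Yamabe's theorem makes explicit what the paper takes from its cited Lie-theoretic fact, and your conjugation identity $\exp(tZ)=\exp(\tfrac{\pi}{2}Y)\exp(-tX)\exp(-\tfrac{\pi}{2}Y)$, valid in the standard basis with $[Y,Z]=X$ and $[Z,X]=Y$, is a useful witness uniform in $t$, analogous to the Heisenberg commutator identity, though the paper does not need it.
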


\begin{remark}[The Heisenberg example revisited]

The same obstruction appears in the three-dimensional Heisenberg group. The
exact identity
\begin{equation}
 \exp(aX)\exp(bY)\exp(-aX)\exp(-bY)=\exp(abZ)
\end{equation}
implies
\[
e_Z\in\cl(\{e_X,e_Y\}),
\]
while $\vartheta(e_Z)$ lies outside the linear span of the images of $e_X$ and
$e_Y$. Thus the example of Section~\ref{sec:lie}, introduced there to show
that a count of generators does not determine an orbit dimension, is also an
explicit realization satisfying RB1 and RB2 but violating RB3
\cite{hall2015}.

\end{remark}

These examples separate integrability from closure-faithfulness. RB2 concerns
whether the chosen distribution is closed under Lie brackets and therefore
integrable as a smooth distribution. RB3 instead asks whether operational
dependence under finite composition agrees with linear dependence of the
tangent representatives. The two conditions are logically independent.

Corollary~\ref{cor:su2} gives one direction of this independence: RB1 and RB2
hold while RB3 fails. The converse can be exhibited explicitly. On $\R^3$,
take
\[
E([s])=\{e_X,e_Y\}
\]
with free operational closure and realize the two elementary transitions by
\begin{equation}
 X=\partial_x+y\partial_z,
 \qquad
 Y=\partial_y,
 \qquad
 [X,Y]=-\partial_z.
 \label{eq:noninvolutive}
\end{equation}
The fields $X$ and $Y$ are linearly independent everywhere, and neither
elementary transition is reproducible from the other alone. Hence
Eq.~\eqref{eq:closurefaithful} holds on every subset, so RB3 is satisfied, as
is the constant-rank condition RB1.

However,
\[
aX+bY
=
a\partial_x+b\partial_y+ay\partial_z
\]
cannot equal $-\partial_z$ for any functions $a$ and $b$. The bracket
therefore leaves the span of the distribution and RB2 fails. Together with
Corollary~\ref{cor:su2}, this establishes that neither RB2 nor RB3 implies the
other.

For dimensional reduction this differs qualitatively from the freely
generated commuting case. A non-Abelian realization may possess a regular
involutive smooth distribution and still fail the operational-to-smooth
bridge because finite compositional dependence does not agree with linear
tangent dependence. The obstruction therefore concerns the identification of
operational and smooth directions, not the existence of a smooth reduced
geometry.

This is not a general no-go theorem for non-Abelian compactifications or
Wilson-line sectors. Whether the obstruction occurs depends on the chosen
primary ground set, the finite compositional relations in its operational
closure, and the tangent realization of those transitions.
Corollary~\ref{cor:su2} establishes that such an obstruction can occur, not
that it must occur in every non-Abelian realization.

The examples of Sections~\ref{sec:kkcircle}--\ref{sec:nonabelian} therefore
separate three regimes of dimensional reduction. In freely generated
commuting reductions, removing compact directions by deletion or contraction
is structurally benign. If compact-sector observables such as Wilson-line
moduli survive, they can contribute additional primary operational
directions. In non-Abelian compositional structures, the rank-preserving
bridge can fail altogether because finite operational generation and linear
tangent dependence need not coincide.

\section{Quantum-hardware negative control for Fisher-rank completeness}
\label{sec:calibration}

The observer-to-Fisher step of the rank-preserving bridge requires a
completeness assumption: a restricted measurement protocol can fail to resolve
directions that remain physically present in the parametrized family. The
purpose of this section is to isolate this failure mode experimentally. We
therefore construct a controlled quantum-circuit negative control in which the
underlying local Fisher rank is known independently, while deliberately
restricted measurements return a smaller rank.

This experiment is not intended to realize the primary operational structure
of Sections~\ref{sec:operational-rank}--\ref{sec:smooth-realization}, nor to
test the rank-four dimensional-reduction hypothesis. It tests only the
measurement layer of the logical chain: whether a stable restricted Fisher
rank can be identified with the full locally resolvable rank without an
independent completeness argument.

Consider a three-qubit six-parameter circuit,
\begin{equation}
\begin{aligned}
|000\rangle
&\to R_x^{(0)}(q_0)R_y^{(1)}(q_1)\\
&\to R_z^{(2)}(q_2)
   \to CX_{0\to1}
   \to CX_{1\to2}\\
&\to R_y^{(0)}(q_3)R_z^{(1)}(q_4)\\
&\to R_x^{(2)}(q_5)
   \to Z\text{-measurement}.
\end{aligned}
\label{eq:circuit}
\end{equation}

The parameter $q_2$ acts through $R_z^{(2)}$ on qubit 2 while that qubit is
still in $|0\rangle$, so at this stage it contributes only an overall phase
to the full state. Consequently $q_2$ is an exact null direction of the
original parametrized family for all physical measurements. The parameter
$q_4$, by contrast, changes a relative phase of the state and is
therefore physically non-null, but the final computational-basis measurement
is insensitive to that phase at the chosen circuit output. Consequently the
restricted $Z$ protocol has
\begin{equation}
 \rank F_Z\le4
 \label{eq:FZ4}
\end{equation}
by construction.

These relations, and the participation ratios they carry, can be computed
exactly. Writing the derivatives of the state analytically and diagonalizing
at a representative regular parameter point
\begin{equation}
 q=(0.37,0.83,1.21,0.59,1.07,1.43),
\end{equation}
one obtains the spectra of Table~\ref{tab:fisher}, in which $\deff$ denotes
the participation ratio of Section~\ref{sec:scale} evaluated on the Fisher
matrix at hand,
\begin{equation}
 \deff=\frac{(\operatorname{tr}F)^2}{\operatorname{tr}(F^2)}
 =\frac{(\sum_i\lambda_i)^2}{\sum_i\lambda_i^2}.
 \label{eq:deff}
\end{equation}

For reproducibility, the tomographically complete (IC) Pauli protocol used here
contains all 27 settings in $\{X,Y,Z\}^{\otimes 3}$ with equal weight
$1/27$, so
\begin{equation}
 F_{\rm IC}=\frac1{27}\sum_{s\in\{X,Y,Z\}^{\otimes3}}F_s.
 \label{eq:FICweight}
\end{equation}
The numerical script uses exact analytic state derivatives, support threshold
$p_i>10^{-12}$ in the classical Fisher sum, and counts an eigenvalue as
non-zero when it exceeds
\begin{equation}
 10^{-8}\max\{1,\lambda_{\max}\}.
 \label{eq:ranktol}
\end{equation}
The genericity check samples $200$ points from the uniform box
$q_i\in[0.2,2.9]$ with NumPy random seed \texttt{20260815}. These conventions
affect the reported numerical spectra and reproducibility details but not the
analytic rank statements in
Eqs.~\eqref{eq:ranks-original}--\eqref{eq:ranks-corrected}.

\begin{table*}[htbp]
\centering
\caption{
    Fisher ranks and participation ratios for the six-parameter circuit
    \eqref{eq:circuit} at a representative regular parameter point, for the
    restricted computational-basis protocol, for a tomographically complete
    product-Pauli protocol of twenty-seven equally weighted settings, and for the
    quantum Fisher information. The corrected family replaces the first-layer
    $R_z^{(2)}(q_2)$ by $R_y^{(2)}(q_2)$. Ranks are stable over two hundred
    randomly sampled parameter points; the participation ratios are not, and their
    observed ranges are given in the last column
}
\label{tab:fisher}
\renewcommand{\arraystretch}{1.2}
\begin{tabularx}{\textwidth}{>{\raggedright\arraybackslash}p{0.25\textwidth} X
c c c}
\toprule
\textbf{Family and protocol} & \textbf{Fisher eigenvalues} & \textbf{rank} &
$\boldsymbol{d_{\rm eff}}$ & $\boldsymbol{d_{\rm eff}}$ \textbf{range}\\
\midrule
Original, $Z$ basis & 1.000, 0.922, 0.876, 0.374 & 4 & 3.649 & 1.24--3.97\\
Original, IC Pauli & 0.695, 0.558, 0.488, 0.465, 0.171 & 5 & 4.421 &
3.72--4.84\\
Original, quantum Fisher & 1.000, 1.000, 1.000, 1.000, 0.533 & 5 & 4.796 &
4.04--5.00\\
Corrected, $Z$ basis & 0.922, 0.876, 0.374, 0.140 & 4 & 3.007 & 1.55--3.92\\
Corrected, IC Pauli & 0.664, 0.589, 0.544, 0.464, 0.246, 0.043 & 6 & 4.777 &
4.03--5.68\\
Corrected, quantum Fisher & 1.000, 1.000, 1.000, 1.000, 0.533, 0.125 & 6 &
5.045 & 4.08--5.99\\
\bottomrule
\end{tabularx}
\end{table*}

The rank entries of Table~\ref{tab:fisher} are the relations quoted below. The
tomographically complete product-Pauli protocol reveals the $q_4$ direction,
giving, at the reference point and throughout the sampled regular region,
\begin{equation}
 \rank F_Z=4,
 \qquad
 \rank F_{\rm IC}=5,
 \qquad
 \rank F_Q=5
 \label{eq:ranks-original}
\end{equation}
for the original family. 
Replacing the initial $R_z^{(2)}(q_2)$ with $R_y^{(2)}(q_2)$ removes the
global-phase redundancy, and the corrected family attains
\begin{equation}
 \rank F_{\rm IC}=\rank F_Q=6
 \label{eq:ranks-corrected}
\end{equation}
while a restricted $Z$ measurement may still return rank four.

The two families therefore provide a paired negative control for rank
interpretation: the same restricted Fisher rank four is compatible with two
different larger locally resolvable ranks. The issue is measurement
completeness, not the numerical value four itself.

The relation between the two circuit families and the restricted and revealing
measurement protocols is summarized schematically in
Fig.~\ref{fig:circuit-protocol}. The two families have the same three-qubit
architecture and differ only by the replacement $R_z^{(2)}(q_2)\to
R_y^{(2)}(q_2)$ in the first local-rotation layer. Under the restricted $ZZZ$
protocol both families exhibit rank four, whereas adding the revealing $YXX$
setting exposes a fifth direction in the original family and a sixth direction
in the corrected family.

\begin{figure*}[t!]
    \centering
    \includegraphics[width=\textwidth]{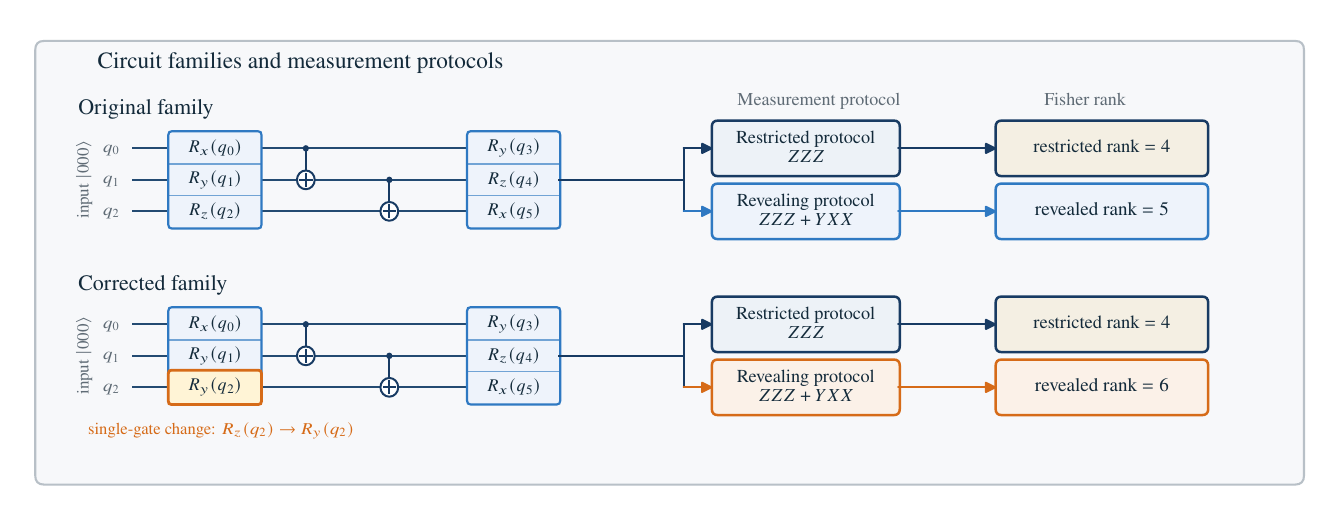}
    \caption{
    Circuit families and measurement protocols used in the paired
    Fisher-rank negative control. The corrected family differs from the
    original family only by the single-gate replacement
    $R_z^{(2)}(q_2)\to R_y^{(2)}(q_2)$. The restricted protocol
    $\{ZZZ\}$ returns rank four for both families, while the revealing
    protocol $\{ZZZ,YXX\}$ exposes rank five for the original family
    and rank six for the corrected family
    }
    \label{fig:circuit-protocol}
\end{figure*}

The participation ratios provide an additional illustration of the same
incompleteness problem. Every row of Table~\ref{tab:fisher} satisfies the bound of
Proposition~\ref{prop:deffrank}, as it must. But $\deff$ is a strictly weaker
diagnostic than the rank: over the sampled parameter points the restricted
protocol, whose rank is four everywhere, returns participation ratios anywhere
between $1.24$ and $3.97$, while the tomographically complete product-Pauli
protocol, whose rank is five, returns values from $3.72$ upwards. The two
ranges overlap. A measured participation ratio therefore does not determine
the rank, does not distinguish a complete protocol from an incomplete one, and
cannot be read as an effective dimension of the underlying sector. An observed
value near unity is compatible with rank four; a value near four is compatible
with rank six.

\subsection{Pre-specified revealing measurement protocol}

The full 27-setting product-Pauli protocol establishes the numerical reference
rank, but informational completeness is not required for the negative-control
statement. To disprove the inference from restricted rank to full locally
resolvable rank, it is sufficient to pre-specify one admissible extension of
the restricted protocol that reveals an additional independent direction.
The reduced hardware workload was therefore designed and frozen numerically
before any hardware data were acquired.

The resulting rank pattern is summarized in Fig.~\ref{fig:rank-summary}. The 
important point is that the restricted protocol assigns rank four to both
circuit families, whereas the pre-specified revealing protocol distinguishes
their larger locally resolvable ranks.

\begin{figure*}[htbp]
    \centering
    \includegraphics[width=\textwidth]{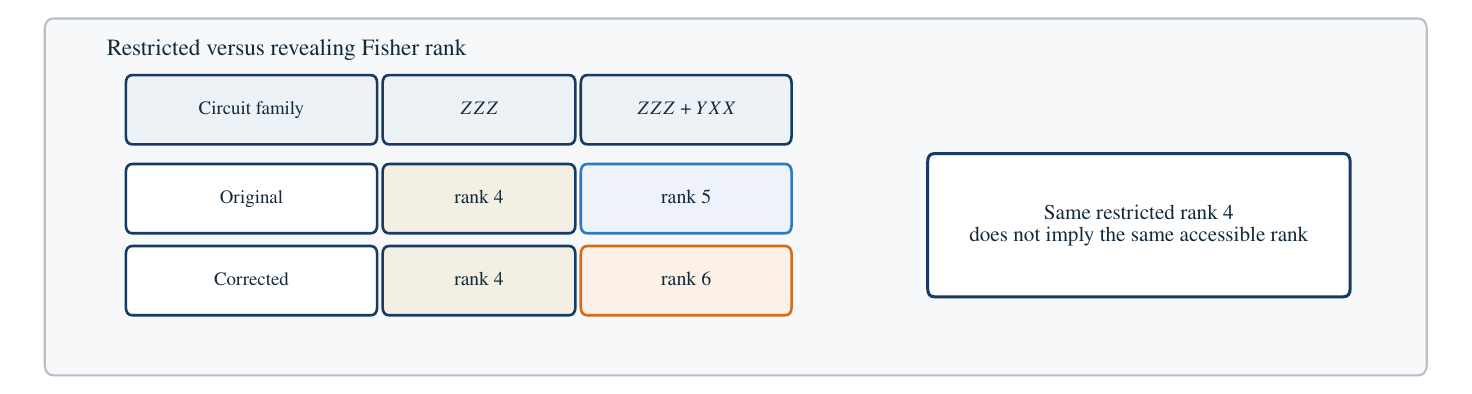}
    \caption{
    Restricted versus revealing Fisher rank for the two circuit
    families. The same restricted rank four is compatible with
    different larger accessible ranks: five for the original family
    and six for the corrected family
    }
    \label{fig:rank-summary}
\end{figure*}

Starting from the original family, we searched over the 26 Pauli-product
settings additional to $ZZZ$. At the reference point used above, exhaustive
search showed that a single additional setting is sufficient: the two-setting
protocol
\begin{equation}
 \mathcal M_{\rm reveal}=\{ZZZ,YXX\}
 \label{eq:revealprotocol}
\end{equation}
has Fisher rank five, whereas the restricted protocol
\begin{equation}
 \mathcal M_{\rm restricted}=\{ZZZ\}
\end{equation}
has rank four. Thus informational completeness is not required for the
negative-control statement.

The two settings are combined with equal, parameter-independent weights
$1/2$, so that
\begin{equation}
 F_{\rm reveal}
 =
 \frac12 F_{ZZZ}
 +
 \frac12 F_{YXX}.
 \label{eq:revealweights}
\end{equation}

To maximize the separation of the fifth Fisher direction from the null
spectrum before hardware acquisition, the working point was chosen by a
numerical search over the six gate-angle coordinates. The selected point was
\begin{equation}
\begin{aligned}
q_* = (&1.550560,\,
        2.890920,\,
        1.961725,\\
      &1.785682,\,
        2.259350,\,
        1.814126).
\end{aligned}
\label{eq:qstar}
\end{equation}
For the exact two-setting Fisher matrix at this point,
\begin{equation}
 \frac{\lambda_5}{\lambda_1}\simeq0.878,
\end{equation}
so the additional direction is not a weak near-null mode.

All choices in Eqs.~\eqref{eq:revealprotocol}--\eqref{eq:qstar} were fixed
before the hardware run. The same design was then tested locally with
finite-shot parameter-shift estimation and with a device-noise simulation.
Exact derivatives were evaluated by the parameter-shift rule,
\begin{equation}
 \frac{\partial p_i}{\partial q_a}
 =
 \frac{1}{2}
 \left[
 p_i\!\left(q_a+\frac{\pi}{2}\right)
 -
 p_i\!\left(q_a-\frac{\pi}{2}\right)
 \right].
 \label{eq:paramshift}
\end{equation}

The corresponding estimator chain is shown in
Fig.~\ref{fig:estimator-pipeline}. The circuit parameters determine the state
family $U(q)$, the chosen measurement protocol produces outcome probabilities
$p(x\mid q)$, parameter-shift derivatives form the local Jacobian $J$, and the
Fisher matrix $F=J^{\mathsf T}WJ$ is finally diagonalized to obtain the
spectrum used in the rank test.

\begin{figure*}[htbp]
    \centering
    \includegraphics[width=\textwidth]{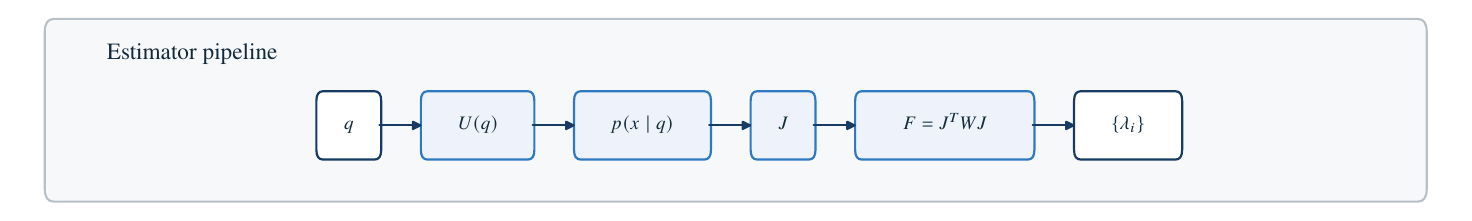}
    \caption{
    Estimator pipeline used in the circuit calibration. The parameter
    vector $q$ defines the circuit family $U(q)$; the measurement
    protocol determines $p(x\mid q)$; local derivatives form the
    Jacobian $J$; and the Fisher matrix
    $F=J^{\mathsf T}WJ$ yields the eigenvalue spectrum
    $\{\lambda_i\}$ used for rank inference
    }
    \label{fig:estimator-pipeline}
\end{figure*}

Because finite sampling generically lifts exact zero eigenvalues, experimental
rank was not assigned by a fixed numerical eigenvalue cutoff. Instead, the
fifth eigenvalue of the revealing protocol was compared with a null
distribution constructed from two known finite-shot null channels: the fifth
eigenvalue of the rank-four $ZZZ$ protocol and the sixth eigenvalue of the
rank-five revealing protocol. At $\alpha=0.01$, the conservative null threshold 
was defined as
\begin{equation}
\begin{aligned}
\lambda_{\rm null}^{(1-\alpha)}
={}&
\max\!\Bigl\{
Q_{1-\alpha}\!\left[\lambda_5(F_{ZZZ})\right],
\\
&\qquad
Q_{1-\alpha}\!\left[\lambda_6(F_{\rm reveal})\right]
\Bigr\}.
\end{aligned}
\label{eq:nullthreshold}
\end{equation}
where $Q_{1-\alpha}$ denotes the corresponding bootstrap quantile. The
pre-declared hardware criterion required the revealing fifth eigenvalue to
exceed this threshold robustly; no post-hardware optimization of $q_*$, the
measurement settings, or the threshold was performed.

\subsection{IBM Quantum hardware result}

The negative control was executed independently on three
156-qubit IBM Quantum processors. The selected backends were
\texttt{ibm\_marrakesh}, \texttt{ibm\_fez}, and \texttt{ibm\_kingston}. 
The measurement design, working point $q_*$, parameter-shift rule, 
number of shots, and statistical decision criterion were fixed before 
any hardware data were acquired and were kept unchanged across all three 
processors.

Calibration data were inspected immediately before each submission. Connected
physical-qubit chains were selected independently for each backend:
\begin{equation}
\begin{aligned}
 \texttt{ibm\_marrakesh}:&\quad (13,14,15),\\
 \texttt{ibm\_fez}:&\quad (97,107,108),\\
 \texttt{ibm\_kingston}:&\quad (140,141,142).
\end{aligned}
\label{eq:hardware-qubits}
\end{equation}
For every backend the compiled circuits used only the selected three physical
qubits and contained exactly two native two-qubit gates, with no routing
SWAPs.

For each of the two measurement settings, the central point $q_*$ and the
twelve parameter-shift configurations $q_a\pm\pi/2$, $a=0,\ldots,5$, were
executed. Each backend therefore received the same workload,
\begin{equation}
 2(1+2\times6)=26
\end{equation}
circuits, each sampled with $1024$ shots, for a total of $26624$ shot
executions per processor. All 26 circuits for a given processor were submitted
in a single IBM Runtime Sampler job. The reported quantum processing unit (QPU) 
charge time was $9\,\mathrm{s}$ on each processor, giving $27\,\mathrm{s}$ in total for 
the three hardware realizations.

The revealing protocol reproduced the same qualitative and quantitative rank
separation on all three processors. The observed fifth Fisher eigenvalues were
\begin{equation}
\begin{aligned}
 \lambda_{5,\rm Marrakesh}^{\rm reveal}&=0.38110,\\
 \lambda_{5,\rm Fez}^{\rm reveal}&=0.40081,\\
 \lambda_{5,\rm Kingston}^{\rm reveal}&=0.41242.
\end{aligned}
\label{eq:hardwarelambda5}
\end{equation}
The corresponding conservative $99\%$ finite-shot null thresholds defined in
Eq.~\eqref{eq:nullthreshold} were
\begin{equation}
\begin{aligned}
 \lambda_{{\rm null},\rm Marrakesh}^{99\%}&=0.00893,\\
 \lambda_{{\rm null},\rm Fez}^{99\%}&=0.00912,\\
 \lambda_{{\rm null},\rm Kingston}^{99\%}&=0.01396.
\end{aligned}
\label{eq:hardwarenull}
\end{equation}
Thus the observed signal-to-null-threshold ratios were
\begin{equation}
\begin{aligned}
 R_{\rm Marrakesh}&=42.7,\\
 R_{\rm Fez}&=43.9,\\
 R_{\rm Kingston}&=29.5.
\end{aligned}
\label{eq:hardwareratio}
\end{equation}

A parametric bootstrap with $10^4$ resamples conditioned on the observed
hardware probabilities gave a bootstrap-estimated detection power of $1.000$
on all three processors under the pre-declared criterion. In each case no null
bootstrap replicate exceeded the observed statistic, so the empirical
null-tail probability reached the resolution limit of the $10^4$-sample
bootstrap. The fifth direction therefore remained strongly separated from the
empirical finite-shot null spectrum on three independent hardware
realizations.

The full revealing Fisher spectra also showed the same structure across
\begin{equation}
\begin{aligned}
\lambda(F_{\rm reveal})\simeq
(&0.5285,\,
  0.4869,\,
  0.4744,\\
 &0.4530,\,
  0.4008,\,
  0.00134).
\end{aligned}
\end{equation}
\begin{equation}
\begin{aligned}
\lambda(F_{\rm reveal})\simeq
(&0.5368,\,
  0.4655,\,
  0.4486,\\
 &0.4421,\,
  0.4124,\,
  0.00505).
\end{aligned}
\end{equation}
Together with the Marrakesh result, these spectra exhibit five strong
sensitivity directions and a sixth direction remaining near the null floor.

The known-null/hidden-direction diagnostic is likewise reproduced across
processors. For \texttt{ibm\_fez},
\begin{equation}
 F_{q_2q_2}\simeq0.00233,
 \qquad
 F_{q_4q_4}\simeq0.48775,
\end{equation}
with a bootstrap median ratio $F_{q_4q_4}/F_{q_2q_2}\simeq86.1$. For
\texttt{ibm\_kingston},
\begin{equation}
 F_{q_2q_2}\simeq0.00594,
 \qquad
 F_{q_4q_4}\simeq0.46015,
\end{equation}
with the corresponding median ratio $\simeq50.4$. Thus the direction revealed
by the additional $YXX$ setting remains far above the known global-phase null
direction despite differences in backend calibration.

For this calibration, let $\mathcal M_{\rm cal}$ denote the pre-specified
measurement family containing the restricted and revealing protocols. The
hardware result establishes
\begin{equation}
\begin{aligned}
\rank F_{ZZZ}=4
&\quad\text{does not imply}\\
&\sup_{M\in\mathcal M_{\rm cal}}
\rank F^{(M)}=4.
\end{aligned}
\label{eq:hardware-negative}
\end{equation}

Adding the single pre-specified $YXX$ setting reveals a fifth sensitivity
direction far above the finite-shot null spectrum on all three processors.
Thus the apparent rank reduction produced by the restricted protocol is a
measurement-blindness effect rather than a property of the full locally
resolvable circuit family.

The scope of this statement is deliberately limited. The experiment concerns
the Fisher geometry of a controlled parametrized quantum family. It does not
identify a primary operational sector, establish PI or closure-faithfulness,
or test Hypothesis H. Its role in the present work is therefore a
hardware-validated negative control for the Fisher-completeness step of the
rank-preserving bridge, not evidence for fundamental four-dimensionality.

The cross-backend results are summarized in Table~\ref{tab:hardware}.

\begin{table*}[htbp]
\centering
\caption{
    Cross-backend IBM Quantum realization of the pre-specified
    negative-control protocol. The same working point, measurement settings,
    parameter-shift configurations, shot count, and statistical decision rule were
    used on all three processors
}
\label{tab:hardware}
\renewcommand{\arraystretch}{1.2}
\begin{tabular}{lccc}
\toprule
\textbf{Quantity}
& \textbf{Marrakesh}
& \textbf{Fez}
& \textbf{Kingston}\\
\midrule
Physical qubits
& $(13,14,15)$
& $(97,107,108)$
& $(140,141,142)$\\
Circuits
& $26$ & $26$ & $26$\\
Shots per circuit
& $1024$ & $1024$ & $1024$\\
Shot executions
& $26624$ & $26624$ & $26624$\\
$\lambda_5^{\rm reveal}$
& $0.38110$ & $0.40081$ & $0.41242$\\
$99\%$ null threshold
& $0.00893$ & $0.00912$ & $0.01396$\\
Signal/null ratio
& $42.7$ & $43.9$ & $29.5$\\
Bootstrap trials
& $10^4$ & $10^4$ & $10^4$\\
Detection power
& $1.000$ & $1.000$ & $1.000$\\
Null-tail result
& No exceedance & No exceedance & No exceedance\\
QPU charge time
& $9\,\mathrm{s}$ & $9\,\mathrm{s}$ & $9\,\mathrm{s}$\\
\bottomrule
\end{tabular}
\end{table*}

\subsection{Paired hardware control: the same restricted rank hides ranks five and six}

The first hardware experiment establishes that restricted rank four can conceal
a fifth locally resolvable direction. A stronger negative control asks whether
the same restricted rank can be compatible with \emph{different} larger
ranks. To test this, we compare two closely related six-parameter circuit
families under exactly the same restricted and revealing measurement
protocols.

The original family is the circuit of Eq.~\eqref{eq:circuit}. In the corrected
family the first-layer gate
\begin{equation}
 R_z^{(2)}(q_2)
\end{equation}
is replaced by
\begin{equation}
 R_y^{(2)}(q_2).
\end{equation}
The replacement removes the exact global-phase redundancy of $q_2$ while
leaving the six-parameter architecture otherwise unchanged. The exact
benchmark of Table~\ref{tab:fisher} already predicts the qualitative
distinction:
\begin{equation}
\begin{aligned}
 \text{original:}\qquad
 &\rank F_{ZZZ}=4,
 &\qquad \rank F_{\rm IC}=5,\\
 \text{corrected:}\qquad
 &\rank F_{ZZZ}=4,
 &\qquad \rank F_{\rm IC}=6.
\end{aligned}
\label{eq:paired-exact-pattern}
\end{equation}

For the hardware test the working point and revealing protocol were selected
entirely before hardware acquisition by an exact search followed by
finite-shot and device-noise rehearsal. The frozen working point was
\begin{equation}
\begin{aligned}
q_*^{(4/5/6)} =
(&2.4162013977,\,
  2.4162013977,\\
 &0.5081592473,\,
  2.4683765918,\\
 &1.5707963268,\,
  2.0656323341).
\end{aligned}
\label{eq:qstar456}
\end{equation}
The restricted protocol was $ZZZ$, and the revealing protocol was the same
two-setting family
\begin{equation}
 \mathcal M_{\rm reveal}=\{ZZZ,YXX\}.
\label{eq:paired-reveal}
\end{equation}

The paired finite-shot analysis used separate null calibrations for the fifth
and sixth revealed directions. For the $4\to5$ test, both restricted circuit
families have exact Fisher rank four, so at $\alpha=0.01$ the conservative
null threshold was defined by
\begin{equation}
\begin{aligned}
\lambda_{{\rm null},5}^{(1-\alpha)}
={}&
\max\!\Bigl\{
Q_{1-\alpha}\!\left[
\lambda_5(F_{ZZZ}^{\rm orig})
\right],\\
&\qquad
Q_{1-\alpha}\!\left[
\lambda_5(F_{ZZZ}^{\rm corr})
\right]
\Bigr\}.
\end{aligned}
\label{eq:paired-null5}
\end{equation}

For the $5\to6$ test, the original revealing family has exact Fisher rank
five and therefore provides the corresponding sixth-direction finite-shot
null channel:
\begin{equation}
 \lambda_{{\rm null},6}^{(1-\alpha)}
 =
 Q_{1-\alpha}
 \!\left[
 \lambda_6(F_{\rm reveal}^{\rm orig})
 \right],
 \qquad
 \alpha=0.01.
 \label{eq:paired-null6}
\end{equation}
The corrected-family sixth eigenvalue
$\lambda_6(F_{\rm reveal}^{\rm corr})$ is the signal statistic for the
$5\to6$ test. These definitions are used throughout the paired cross-backend,
shot-count, and temporal-replication analyses below.

The design criterion explicitly prevented the rank-four restricted spectrum
from becoming numerically marginal. At the exact level the restricted fourth
eigenvalues satisfy
\begin{equation}
 \frac{\lambda_4^{\rm orig,ZZZ}}{\lambda_1^{\rm orig,ZZZ}}
 =0.542,
 \qquad
 \frac{\lambda_4^{\rm corr,ZZZ}}{\lambda_1^{\rm corr,ZZZ}}
 =0.640,
\end{equation}
while the revealed fifth and sixth directions satisfy
\begin{equation}
 \frac{\lambda_5^{\rm orig,reveal}}{\lambda_1^{\rm orig,reveal}}
 =0.441,
 \qquad
 \frac{\lambda_6^{\rm corr,reveal}}{\lambda_1^{\rm corr,reveal}}
 =0.441.
\end{equation}
The exact spectra are therefore strongly separated from numerical rank
thresholds:
\begin{equation}
 \lambda(F_{ZZZ}^{\rm orig})
 \simeq
 (1.000,\,
 0.846,\,
 0.611,\,
 0.542,\,
 0,\,
 0),
\end{equation}
\begin{equation}
 \lambda(F_{\rm reveal}^{\rm orig})
 \simeq
 (0.694,\,
 0.500,\,
 0.306,\,
 0.306,\,
 0.306,\,
 0),
\end{equation}
and
\begin{equation}
 \lambda(F_{ZZZ}^{\rm corr})
 \simeq
 (0.846,\,
 0.779,\,
 0.611,\,
 0.542,\,
 0,\,
 0),
\end{equation}
\begin{equation}
\begin{aligned}
\lambda(F_{\rm reveal}^{\rm corr})
\simeq{}&
(0.694,\,
 0.583,\,
 0.306,\\
&0.306,\,
 0.306,\,
 0.306).
\end{aligned}
\end{equation}

Before submission the complete $52$-circuit workload was rehearsed with an IBM
FakeBackend noise model. At $1024$ shots per circuit the simulated $4\to5$
signal had median $\lambda_5=0.173$ against a $99\%$ null threshold $0.00879$,
while the simulated $5\to6$ signal had median $\lambda_6=0.130$ against a
$99\%$ null threshold $0.00528$. Both detection powers were $1.000$. The
fifth-percentile restricted fourth eigenvalues remained $0.361$ and $0.213$,
respectively, far above the common rank-four null floor.

The paired experiment was then executed without changing the frozen working
point, measurement protocols, parameter-shift rule, shot count, or statistical
decision criterion on three distinct 156-qubit IBM Quantum Heron processors:
\texttt{ibm\_kingston}, \texttt{ibm\_marrakesh}, and \texttt{ibm\_fez}. For
each backend a connected three-qubit chain was selected from the immediately
preceding calibration snapshot and then fixed for both circuit families. The
physical chains were \[ (140,141,142),\qquad (53,54,55),\qquad (21,22,23), \]
respectively.

Each family required
\begin{equation}
 2(1+2\times6)=26
\end{equation}
circuits, so every backend received a single $52$-circuit IBM Runtime Sampler
job with $1024$ shots per circuit, corresponding to $53248$ shot executions.
On every processor all compiled circuits used only the selected physical
three-qubit chain and exactly two native two-qubit gates, with no routing
SWAPs. The reported QPU charge time was $16\,\mathrm{s}$ for each paired job.

The original family reproduced the $4\to5$ pattern on all three processors.
The observed fifth revealing eigenvalues were
\begin{equation}
 \lambda_5^{\rm orig,reveal}
 =
 \begin{cases}
 0.27956, & \text{\texttt{ibm\_kingston}},\\
 0.20496, & \text{\texttt{ibm\_marrakesh}},\\
 0.24712, & \text{\texttt{ibm\_fez}},
 \end{cases}
\end{equation}
while the corresponding conservative $99\%$ fifth-direction null thresholds
defined by Eq.~\eqref{eq:paired-null5} were
\begin{equation}
 \lambda_{{\rm null},5}^{99\%}
 =
 \begin{cases}
 0.02175, & \text{\texttt{ibm\_kingston}},\\
 0.01234, & \text{\texttt{ibm\_marrakesh}},\\
 0.01804, & \text{\texttt{ibm\_fez}}.
 \end{cases}
\end{equation}
The bootstrap detection power was $1.000$ on every backend.

The corrected family simultaneously reproduced the $4\to6$ pattern on all
three processors. The observed sixth revealing eigenvalues were
\begin{equation}
 \lambda_6^{\rm corr,reveal}
 =
 \begin{cases}
 0.23317, & \text{\texttt{ibm\_kingston}},\\
 0.20297, & \text{\texttt{ibm\_marrakesh}},\\
 0.23250, & \text{\texttt{ibm\_fez}},
 \end{cases}
\end{equation}
against the sixth-direction $99\%$ null thresholds defined by
Eq.~\eqref{eq:paired-null6}
\begin{equation}
 \lambda_{{\rm null},6}^{99\%}
 =
 \begin{cases}
 0.008281, & \text{\texttt{ibm\_kingston}},\\
 0.008533, & \text{\texttt{ibm\_marrakesh}},\\
 0.01079, & \text{\texttt{ibm\_fez}}.
 \end{cases}
\end{equation}
Again, the bootstrap detection power was $1.000$ on every backend.

The restricted rank-four structure itself remained strongly resolved
throughout the cross-backend experiment. The observed fourth restricted
eigenvalues were
\begin{equation}
 \lambda_4^{\rm orig,ZZZ}
 =
 (0.56020,\;0.40368,\;0.53480)
\end{equation}
and
\begin{equation}
 \lambda_4^{\rm corr,ZZZ}
 =
 (0.41341,\;0.38581,\;0.47794)
\end{equation}
for Kingston, Marrakesh and Fez, respectively. In every case these values, and
their fifth-percentile bootstrap counterparts, remained far above the
associated fifth-eigenvalue null floor.

The observed hardware spectra illustrate that the effect is not produced by a
marginal eigenvalue. For the corrected family the revealing spectra were
\begin{equation}
\begin{aligned}
\lambda_{\rm Kingston}
  (F_{\rm reveal}^{\rm corr})
&\simeq
(0.719,\,0.543,\,0.325,\\
&\qquad 0.288,\,0.272,\,0.233),
\end{aligned}
\end{equation}

\begin{equation}
\begin{aligned}
\lambda_{\rm Marrakesh}
  (F_{\rm reveal}^{\rm corr})
&\simeq
(0.552,\,0.515,\,0.292,\\
&\qquad 0.262,\,0.233,\,0.203),
\end{aligned}
\end{equation}

\begin{equation}
\begin{aligned}
\lambda_{\rm Fez}
  (F_{\rm reveal}^{\rm corr})
&\simeq
(0.588,\,0.496,\,0.307,\\
&\qquad 0.273,\,0.245,\,0.233).
\end{aligned}
\end{equation}
Thus the sixth direction is comparable in scale to the fourth and fifth
directions rather than a weak spectral tail.

The change in the $q_2$ direction supplies an internal control for the
mechanism. In the original family $q_2$ is an exact global-phase direction,
whereas in the corrected family the replacement
\[
 R_z^{(2)}(q_2)\longrightarrow R_y^{(2)}(q_2)
\]
makes the same parameter physically observable. Across the three processors
the bootstrap median corrected-to-original ratios of the diagonal Fisher
contribution $F_{q_2q_2}$ were approximately
\[
 85.8,\qquad 88.0,\qquad 67.1
\]
for Kingston, Marrakesh and Fez, respectively. 
The appearance of the sixth direction is therefore consistent with, and
quantitatively tracks, the deliberate removal of the original global-phase
redundancy.
The cross-backend paired result is consequently
\begin{equation}
\boxed{
\begin{aligned}
 \text{original family:}\quad &4\to5,\\
 \text{corrected family:}\quad &4\to6,
\end{aligned}}
\label{eq:hardware-4to5-4to6}
\end{equation}
on all three hardware backends, under the same frozen working point and
measurement design.

This paired result strengthens Eq.~\eqref{eq:hardware-negative}. A restricted
Fisher rank of four does not merely fail to exclude one hidden direction: the
same restricted spectrum can arise from families with different larger
locally resolvable ranks. In the present controlled example, the identical
restricted rank is compatible with rank five and with rank six. Therefore
neither the value four nor the stability of the fourth restricted eigenvalue
determines the full local rank without an independent completeness argument.

The logical scope is unchanged. These are controlled statements about
measurement completeness in parametrized quantum families. They do not
identify either family with the primary operational structure, and they do not
establish PR0, PI, closure-faithfulness, or Hypothesis H.

The paired hardware results are summarized in Table~\ref{tab:hardware-paired}.

\begin{table*}[htbp]
\centering
\caption{
    Cross-backend paired IBM Quantum hardware control. The working point,
    restricted and revealing protocols, parameter-shift rule, shot count, and
    statistical decision criterion were fixed before hardware acquisition and
    unchanged across the three processors
}
\label{tab:hardware-paired}
\renewcommand{\arraystretch}{1.15}
\begin{tabular}{lccc}
\toprule
\textbf{Quantity}
& \textbf{Kingston}
& \textbf{Marrakesh}
& \textbf{Fez}\\
\midrule
Physical qubits
& $(140,141,142)$
& $(53,54,55)$
& $(21,22,23)$\\
Original $\lambda_4^{ZZZ}$
& $0.56020$
& $0.40368$
& $0.53480$\\
Corrected $\lambda_4^{ZZZ}$
& $0.41341$
& $0.38581$
& $0.47794$\\
Original revealed $\lambda_5$
& $0.27956$
& $0.20496$
& $0.24712$\\
$99\%$ null for $\lambda_5$
& $0.02175$
& $0.01234$
& $0.01804$\\
Corrected revealed $\lambda_6$
& $0.23317$
& $0.20297$
& $0.23250$\\
$99\%$ null for $\lambda_6$
& $0.008281$
& $0.008533$
& $0.01079$\\
Power, $4\to5$
& $1.000$
& $1.000$
& $1.000$\\
Power, $5\to6$
& $1.000$
& $1.000$
& $1.000$\\
QPU charge time
& $16\,\mathrm{s}$
& $16\,\mathrm{s}$
& $16\,\mathrm{s}$\\
\bottomrule
\end{tabular}
\end{table*}

\subsection{Shot-count robustness of the paired hardware result}

To test whether the paired $4\to5$ and $4\to6$ conclusions depend sensitively
on the chosen sampling budget, the complete frozen paired workload was
repeated on \texttt{ibm\_marrakesh} over an eightfold range of shot counts.
The working point $q_*^{(4/5/6)}$, the physical qubit chain $(53,54,55)$, the
restricted and revealing measurement protocols, the parameter-shift rule, the
transpilation constraints and the bootstrap decision criterion were kept
fixed. Four independent hardware jobs were executed with
\begin{equation}
 N_{\rm shots}\in\{256,512,1024,2048\}
\end{equation}
shots per circuit.

The purpose of this test was not to require the individual observed Fisher
eigenvalues to vary monotonically with the shot count: the four datasets were
acquired in independent hardware jobs and therefore contain independent
statistical fluctuations. Rather, the pre-specified criterion was that the
fifth and sixth revealing eigenvalues remain separated from their
corresponding empirical null spectra, and that the restricted fourth direction
remain resolved, throughout the shot-count range.

For the original family the observed fifth revealing eigenvalues and their
$99\%$ null thresholds were
\begin{equation}
\begin{array}{c|cccc}
N_{\rm shots}
&256&512&1024&2048\\
\hline
\lambda_5^{\rm orig,reveal}
&0.2811&0.1930&0.2103&0.2136\\
\lambda_{{\rm null},5}^{99\%}
&0.07405&0.02684&0.01111&0.00704
\end{array}
\label{eq:shotscaling5}
\end{equation}
and the bootstrap detection power was $1.000$ at every shot count.

For the corrected family the sixth revealing eigenvalues remained similarly
resolved:
\begin{equation}
\begin{array}{c|cccc}
N_{\rm shots}
&256&512&1024&2048\\
\hline
\lambda_6^{\rm corr,reveal}
&0.1786&0.2055&0.2173&0.2256\\
\lambda_{{\rm null},6}^{99\%}
&0.05195&0.01771&0.01690&0.00381
\end{array}
\label{eq:shotscaling6}
\end{equation}
with bootstrap detection power again equal to $1.000$ throughout.

The lowest-statistics case already provides a conservative robustness check.
At $256$ shots per circuit,
\begin{equation}
 Q_{0.05}\!\left(\lambda_5^{\rm orig,reveal}\right)=0.2045
 >0.07405
 =\lambda_{{\rm null},5}^{99\%},
\end{equation}
while
\begin{equation}
 Q_{0.05}\!\left(\lambda_6^{\rm corr,reveal}\right)=0.1313
 >0.05195
 =\lambda_{{\rm null},6}^{99\%}.
\end{equation}
The restricted rank-four guard also remained satisfied at this lowest shot
count:
\begin{equation}
\begin{aligned}
 Q_{0.05}\!\left(\lambda_4^{\rm orig,ZZZ}\right)
 &=0.3225,\\
 Q_{0.05}\!\left(\lambda_4^{\rm corr,ZZZ}\right)
 &=0.2211.
\end{aligned}
\end{equation}
both well above the common fifth-eigenvalue null threshold.

As the shot count increased, the empirical null spectra generally contracted
while the additional revealed directions remained of order $10^{-1}$. The
median signal-to-null ratios for the $4\to5$ test were approximately
\begin{equation}
 10.6,\quad21.6,\quad57.2,\quad86.5,
\end{equation}
for $256$, $512$, $1024$ and $2048$ shots, respectively. For the $5\to6$ test
they were
\begin{equation}
 8.05,\quad25.7,\quad24.7,\quad138.4.
\end{equation}
The small non-monotonicity between individual runs is expected because the
four shot-count datasets are independent hardware acquisitions rather than
nested resamplings of a single raw dataset.

Thus the paired $4\to5$ and $4\to6$ distinction is not tied to the reference
budget of $1024$ shots per circuit. Both additional directions remain resolved
throughout the tested range from $256$ to $2048$ shots, with
bootstrap-estimated detection power $1.000$ at every sampling budget. The
negative-control conclusion is therefore robust against the tested level of
finite-shot variation.

\subsection{Temporal replication across calibration epochs}

To test whether the paired $4\to5$ and $4\to6$ result depends on a particular
hardware calibration snapshot, the same frozen paired workload was repeated on
\texttt{ibm\_kingston} approximately two days after the original Kingston
acquisition. No working-point, protocol, shot-count or statistical criterion
was re-optimized.

The original Kingston paired run used the physical chain \[ (140,141,142), \]
whereas the later calibration snapshot selected a different optimal connected
chain, \[ (122,123,124). \] The repeated job again contained the same $52$
circuits with $1024$ shots per circuit, corresponding to $53248$ shot
executions. As before, all compiled circuits were confined to the selected
three-qubit chain and used exactly two native two-qubit gates, with no routing
SWAPs. The reported QPU charge time was $16\,\mathrm{s}$.

The original circuit family again exhibited a robust fifth revealing
direction,
\begin{equation}
 \lambda_5^{\rm orig,reveal}=0.24028,
\end{equation}
while the conservative $99\%$ fifth-direction null threshold was
\begin{equation}
 \lambda_{{\rm null},5}^{99\%}=0.01345.
\end{equation}
The corresponding bootstrap detection power was $1.000$.

The corrected family simultaneously reproduced the sixth revealing direction,
\begin{equation}
 \lambda_6^{\rm corr,reveal}=0.19021,
\end{equation}
against a sixth-direction $99\%$ null threshold
\begin{equation}
 \lambda_{{\rm null},6}^{99\%}=0.00762,
\end{equation}
again with bootstrap-estimated detection power $1.000$.

The restricted rank-four structure remained well separated from the empirical
null spectrum:
\begin{equation}
 \lambda_4^{\rm orig,ZZZ}=0.48124,
 \qquad
 \lambda_4^{\rm corr,ZZZ}=0.34895.
\end{equation}
The corresponding fifth-percentile bootstrap values,
\begin{equation}
\begin{aligned}
 Q_{0.05}\!\left(\lambda_4^{\rm orig,ZZZ}\right)
 &=0.3893,\\
 Q_{0.05}\!\left(\lambda_4^{\rm corr,ZZZ}\right)
 &=0.2813.
\end{aligned}
\end{equation}
remained far above the common fifth-eigenvalue null floor.

The observed spectra were
\begin{equation}
\begin{aligned}
&\lambda(F_{ZZZ}^{\rm orig})\\
&\qquad\simeq
(0.944,\,
 0.806,\,
 0.506,\\
&\qquad\phantom{\simeq(}
 0.481,\,
 0.00163,\,
 0.00052).
\end{aligned}
\end{equation}

\begin{equation}
\begin{aligned}
&\lambda(F_{\rm reveal}^{\rm orig})\\
&\qquad\simeq
(0.606,\,
 0.476,\,
 0.326,\\
&\qquad\phantom{\simeq(}
 0.255,\,
 0.240,\,
 0.00120).
\end{aligned}
\end{equation}
and
\begin{equation}
\begin{aligned}
&\lambda(F_{ZZZ}^{\rm corr})\\
&\qquad\simeq
(0.755,\,
 0.700,\,
 0.467,\\
&\qquad\phantom{\simeq(}
 0.349,\,
 0.00301,\,
 0.00044).
\end{aligned}
\end{equation}
\begin{equation}
\begin{aligned}
&\lambda(F_{\rm reveal}^{\rm corr})\\
&\qquad\simeq
(0.591,\,
 0.531,\,
 0.301,\\
&\qquad\phantom{\simeq(}
 0.254,\,
 0.232,\,
 0.190).
\end{aligned}
\end{equation}

The internal $q_2$ control was also reproduced. In the original family the
revealing diagonal Fisher contribution remained at the empirical null level,
\begin{equation}
 F_{q_2q_2}^{\rm orig}\simeq0.00292,
\end{equation}
whereas in the corrected family it was
\begin{equation}
 F_{q_2q_2}^{\rm corr}\simeq0.497.
\end{equation}
The bootstrap median corrected-to-original ratio was approximately $82.8$.

Thus the frozen paired distinction
\begin{equation}
 4\to5
 \qquad\hbox{versus}\qquad
 4\to6
\end{equation}
was reproduced on the same processor in a later calibration epoch and on a
different physical three-qubit chain. Together with the three-backend
replication and the shot-count study, this
supports robustness of the negative-control result across the tested processor
identities, sampling budgets, and calibration epochs.

Figure~\ref{fig:ibm-robustness} summarizes the three robustness checks and the
internal $q_2$ control. Across processor identity, sampling budget, and
calibration epoch, the additional fifth and sixth Fisher directions remain
separated from their empirical null spectra, while the $q_2$ control tracks
the deliberate removal of the global-phase redundancy.

\begin{figure*}[t!]
    \centering
    \includegraphics[width=\textwidth]{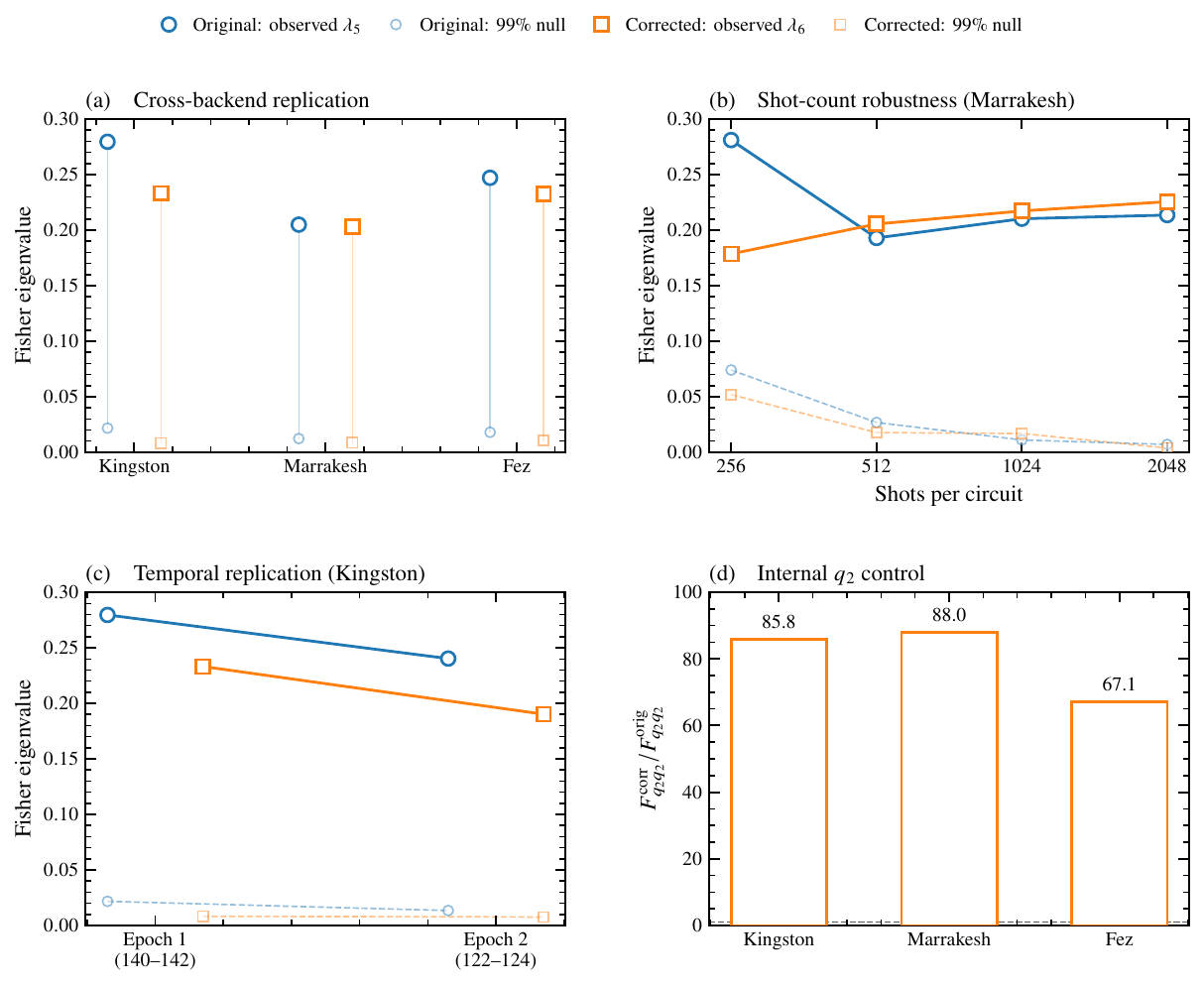}
    \caption{
        Hardware robustness of the paired Fisher-rank negative control.
        (a) Cross-backend replication at 1024 shots per circuit.
        (b) Shot-count robustness on \texttt{ibm\_marrakesh}.
        (c) Temporal replication on \texttt{ibm\_kingston} under two calibration
        snapshots and different physical-qubit chains.
        (d) Bootstrap-median internal $q_2$ control.
        Across all tests, the paired $4\to5$ and $4\to6$ distinctions remain resolved
    }
    \label{fig:ibm-robustness}
\end{figure*}

\section{Requirements for an experimental test of primary operational rank}
\label{sec:experiment}

The hardware negative controls of Section~\ref{sec:calibration} address only
the measurement-completeness part of the rank-preserving chain. They show that
the same restricted Fisher rank four can conceal larger locally resolvable
ranks, including different larger ranks in closely related parametrized
families, and that this conclusion is robust across the tested processors,
shot counts, and calibration epochs. These experiments therefore rule out the
inference of full local rank from a restricted Fisher spectrum alone.

They do not, however, constitute a measurement of primary operational rank.
The stronger question is what additional structure must be fixed and verified
before an observed Fisher rank can be interpreted as a test of the primary
operational sector.

A test of primary operational rank must reverse the logic of a naive
rank measurement. The primary operational structure must first be specified
independently of the measured spectrum and satisfy PR0; when an equality rather
than a lower bound is claimed from the compositional construction, Hypothesis
H must also be established. A physically motivated observer map $\Piobs$ and
the relevant scale must then be fixed independently. The resulting smooth
realization must satisfy B2-R and RB1--RB4, and PI must identify the observer
image with the primary integral distribution. Only after these steps may a
measurement protocol be used to test the rank. For a positive determination,
the admissible protocol family must additionally resolve the primary leaf and
satisfy the support and finite-mixture assumptions required by the Fisher
bridge.

The two directions of the test are not symmetric, and separating them is what
makes the programme experimentally usable. Refuting an operational-rank claim
requires strictly weaker measurement conditions than establishing one.

\paragraph{Negative test.}
A claimed primary operational rank $r$ can be falsified without establishing
global measurement completeness. One must first identify, independently of
the measured Fisher spectrum, at least $r+1$ candidate primary directions
within the same operational realization and verify the part of the bridge
needed to associate their smooth images with that realization. The primary
sector, observer map $\Piobs$, physical scale, and relevant
closure-faithfulness relations must therefore be fixed in advance.

If an admissible measurement protocol resolves the corresponding $r+1$
smooth directions as Fisher-independent, then the primary smooth rank is at
least $r+1$ and the claimed operational rank $r$ is incompatible with that
closure-faithful realization. Global informational completeness is not
required, because measurement incompleteness can remove resolved directions
but cannot create additional independent ones.

Corollary~\ref{cor:onesided} provides a complementary one-sided
diagnostic. For the exact Fisher metric,
\begin{equation}
    \deff(F)>r
    \quad\Longrightarrow\quad
    \rank F\ge r+1.
\end{equation}
Thus a participation ratio above the claimed rank can also falsify
that rank, provided the measured directions have been independently
tied to the same primary operational realization. For an experimental
estimate, the excess above $r$ must remain robust under the corresponding
finite-sample uncertainty analysis or null model. No assumption of
global informational completeness is required for this one-sided
inference.

This is stronger than merely observing $\rank F>r$ somewhere in a
larger parametrized or observer space. As
Remark~\ref{rem:threeprimary} shows, $\Xprimary$ may contain tangent
directions outside the primary integral leaf unless PI, or an equivalent 
independently justified membership condition, is supplied. A negative test 
therefore requires the measured directions to be tied to the same primary
operational realization before their Fisher independence can falsify
its claimed rank.

\paragraph{Positive test.}
Establishing that the primary operational rank equals $r$ is more
demanding. The complete bridge is required: the operational closure
must possess a real, closure-faithful smooth realization of rank $r$;
RB1--RB4 must hold; and PI must identify the observer image locally
with the primary leaf. The admissible protocol family must then locally
resolve that image. Finite labeled-mixture closure and locally constant
outcome support are additionally required to represent the available
resolving power by a finite Fisher matrix of rank $r$.

Without these completeness conditions, an observed Fisher rank $r$
remains compatible with unresolved additional directions, exactly as
demonstrated by the negative controls of
Section~\ref{sec:calibration}.

The asymmetry is therefore fundamental: rank excess can provide a
one-sided falsification once operational membership is established,
whereas rank equality requires completeness of the entire
operational-to-smooth-to-Fisher chain.

Existing bounds on extra-dimensional physics do not by themselves
determine $\rank F_{\rm primary}$; translating such observations into
an operational-rank statement requires precisely the sector
identification and bridge conditions specified above.

For quantum-gate implementations, derivative estimation should use exact
parameter-shift relations where available rather than small finite
differences, which amplify shot noise as $1/h^2$ \cite{schuld,mari2021}. 
Small Fisher eigenvalues should be assessed with bootstrap or an explicit 
null model, not a universal fixed cutoff. Zero-probability outcomes require
support-aware treatment, and the Akaike information criterion (AIC) or the 
Bayesian information criterion (BIC) should not be applied directly to the 
number of nonzero Fisher eigenvalues without explicit competing likelihood 
models \cite{burnham2004}.

\section{Discussion}

\subsection{What is established}

The central result of this work is a conditional bridge between three
a priori distinct notions of dimension: compositional operational rank,
smooth tangent rank, and measurable Fisher rank. The bridge is formulated for
an arbitrary finite rank $r$ and does not rely on a four-dimensional input.
Its non-trivial structural step is the smooth realization: preservation of the
total rank is insufficient, and the operational closure must instead be
represented closure-faithfully. Under real representability, constant rank,
involutivity, and observer/scale compatibility, the resulting primary
distribution integrates locally to a leaf whose dimension equals the
operational rank.

The passage from this smooth rank to Fisher information requires an additional
physical identification. The observer map must be fixed independently of the
measured spectrum, its tangent image must coincide with the primary
distribution through PI, and the admissible protocol family must locally
resolve that sector. Closure under finite labeled mixtures then allows the
available resolving power to be represented by a finite composite protocol,
while locally constant support ensures that the differential and Fisher ranks
coincide.

For the rank-four specialization, the complete implication is therefore
conditional:
\begin{equation}
\begin{aligned}
&\mathrm{PR0}+H+\text{rank-preserving bridge}+\mathrm{PI}\\
&\quad+\text{local resolution}
+\text{finite labeled mixtures}\\
&\quad+\text{locally constant support}
\Longrightarrow
\rank F_{\rm primary}=4.
\end{aligned}
\end{equation}
The value four is supplied by the rank-four construction of
Ref.~\cite{before}; it is not selected by the bridge itself. The resulting
rank belongs to the specified primary observer sector and should not be
identified with the dimension of every microscopic, quantum, or mathematical
representation of the same system.

The dimensional-reduction application adds a further structural conclusion.
Producing four effective directions is not sufficient: the reduction must
also induce or independently specify the operational structure whose rank is
being represented and preserve the relevant dependency relations. The
examples of Section~\ref{sec:higher} show three distinct possibilities.
Deletion or contraction in a freely generated commuting structure is benign;
compact-sector observables that survive the reduction can contribute
additional operational directions; and non-Abelian compositional generation
can obstruct closure-faithfulness altogether.

The hardware experiments establish a complementary, deliberately narrower
result at the measurement end of the chain. A restricted Fisher rank can
remain numerically stable while failing to reveal additional locally
distinguishable directions. The observed $4\to5$ and $4\to6$ paired controls
therefore support the need to treat measurement completeness as an
independent assumption rather than infer it from the measured rank itself.

\subsection{Scope and limitations}
\label{sec:not-established}

The results above are intentionally conditional, and several boundaries of
their interpretation are important.

Hypothesis H remains an additional physical hypothesis. The
rank-preserving smooth bridge neither proves H nor derives the rank-four input
from first principles. Likewise, the observer map $\Piobs$ is not derived
uniquely. Its role is to represent the physically specified passage from a
microscopic description to the primary observer sector, but a concrete theory
must supply that map independently of the Fisher spectrum.

No universal upper bound on microscopic Fisher or quantum Fisher rank follows.
A microscopic quantum family may possess more than four locally
distinguishable directions without contradiction. Nor is operational rank
four, by itself, a theorem about the topological dimension or Lorentzian
signature of spacetime. The present construction fixes a number of primary
operational directions. The separation into three spatial and one temporal
direction requires additional structure. In Ref.~\cite{before}, this role is
assigned to the separate condition C5 of dynamic causality, which selects the
Lorentzian real form $SO(3,1)$ from the pre-causal $SO(4)$ structure; C5 is
not used in the bridge developed here.

Existing experimental bounds on extra dimensions do not by themselves
constitute evidence for the bridge. Short-range gravity experiments constrain
specific scenarios \cite{murata}, while collider searches provide
complementary model-dependent limits \cite{pdgextra2023}. Proposed quantum
probes offer additional routes by which extra dimensions could leave
observable signatures \cite{feng2024extra,mazharimousavi2025}.

Such measurements make a rank-four primary sector physically relevant.
However, standard models with extra dimensions already recover the observed
macroscopic regime. The observed count therefore does not distinguish the
present criterion from conventional reduction mechanisms.

The dimensional-reduction criterion has also not yet excluded a concrete
existing construction. This limitation is structural rather than merely the
absence of an example. Proposition~\ref{prop:free} shows that for freely
generated commuting accessibility structures, compactness itself does not
lower operational rank, while deletion and contraction preserve the free
structure. Consequently, a broad class of simple reductions is necessarily
benign with respect to RB3. The circle reduction of
Section~\ref{sec:kkcircle} belongs to this class.

At the opposite extreme, some non-Abelian realizations violate
closure-faithfulness directly: finite compositions can make an elementary
transition operationally dependent while its tangent representative remains
outside the linear span of the generating directions. Such systems fall
outside the rank-preserving bridge rather than furnishing a failure of the
criterion from within its domain. A fully discriminating example would
instead satisfy the bridge's admissibility assumptions and the required
reduced count while failing the dependency-preservation test. No such concrete
higher-dimensional reduction has been exhibited here.

A separate limitation concerns locality. The bridge describes
\emph{differential} operational distinguishability. Sections~\ref{sec:smooth}
and \ref{sec:observer} represent accessible directions through differentials,
so genuinely global distinctions can escape the construction. A discrete
holonomy sector, topological label, or superselection charge may distinguish
states even though its differential vanishes on every connected local leaf.
Such information contributes neither to $\dop$ nor to a Fisher matrix,
however complete the local protocol family may be. The discrete Wilson-line
case of Section~\ref{sec:wilsonlines} provides an explicit example.

The operational structure may therefore contain two qualitatively different
types of information: continuous distinctions represented by local tangent
directions, and global labels distinguishing disconnected components or
sectors. The present rank-preserving bridge addresses only the first. A
framework incorporating both would require a global extension of the local
rank used here.

Finally, the criterion does not derive a unique compactification geometry,
Calabi--Yau manifold, brane construction, or other specific reduction
mechanism. Its role is instead to supply an additional operational condition
that candidate mechanisms may satisfy or fail once their observer sector,
admissible procedures, and physical scale have been specified.

\subsection{Relation to scale-dependent distinguishability geometry}
\label{sec:scale}

The companion distinguishability-geometry framework \cite{dg} provides a
natural language for the scale dependence already encountered in
Section~\ref{sec:higher}. It defines a smooth local distinguishability metric
through
\begin{equation}
 D(x,x+dx)=G_{AB}(x)\,dq^A dq^B+O(\|dq\|^3),
 \label{eq:Dmetric}
\end{equation}
and allows the spectrum of distinguishable directions to vary with
observational scale. A microscopic description can therefore contain more
locally distinguishable directions than a coarse primary observer sector
without requiring the microscopic state space itself to become
four-dimensional.

Two elementary results clarify the relation between this scale-dependent
geometry and the rank used in the present work. The first compares the
participation-ratio effective dimension with matrix rank; the second shows that
admissible coarse-graining cannot increase operational rank.

\begin{proposition}[Effective dimension and rank]
\label{prop:deffrank}

Let $G(\ell)$ be a non-zero local distinguishability metric at observational
scale $\ell$, with eigenvalues $\lambda_i(\ell)\ge0$, and define
\begin{equation}
 \deff(\ell)
 =
 \frac{(\sum_i\lambda_i)^2}{\sum_i\lambda_i^2}
 =
 \frac{(\operatorname{Tr}G(\ell))^2}
      {\operatorname{Tr}(G(\ell)^2)}.
 \label{eq:deffG}
\end{equation}
Then
\begin{equation}
 \deff(\ell)\le\rank G(\ell),
 \label{eq:deffbound}
\end{equation}
with equality if and only if all non-zero eigenvalues coincide.

\end{proposition}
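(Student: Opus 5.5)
The plan is to reduce the statement to the Cauchy--Schwarz inequality applied to the non-zero part of the spectrum. Since $G(\ell)$ is a local distinguishability metric, it is symmetric positive semidefinite. It therefore diagonalizes orthogonally with eigenvalues $\lambda_i(\ell)\ge0$, and $\rank G(\ell)$ equals the number $k$ of strictly positive eigenvalues. Because $G(\ell)\neq0$, we have $k\ge1$ and $\sum_i\lambda_i^2>0$, so $\deff(\ell)$ is well defined. The trace identities in Eq.~\eqref{eq:deffG} follow from $\operatorname{Tr}G=\sum_i\lambda_i$ and $\operatorname{Tr}G^2=\sum_i\lambda_i^2$, both invariant under the orthogonal diagonalization.

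For the bound, let $\lambda_1,\ldots,\lambda_k>0$ be the non-zero eigenvalues. Zero eigenvalues contribute to neither sum, so both sums may be restricted to these $k$ terms. Applying Cauchy--Schwarz to the vectors $(1,\ldots,1)$ and $(\lambda_1,\ldots,\lambda_k)$ in $\R^k$ gives
\[
\Bigl(\sum_{i=1}^k\lambda_i\Bigr)^2\le k\sum_{i=1}^k\lambda_i^2 .
\]
Dividing by $\sum_i\lambda_i^2>0$ yields $\deff(\ell)\le k=\rank G(\ell)$, which is Eq.~\eqref{eq:deffbound}.

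For the equality case, Cauchy--Schwarz is an equality exactly when the two vectors are proportional. Here that means $\lambda_1=\cdots=\lambda_k$, i.e.\ all non-zero eigenvalues coincide, and conversely equal non-zero eigenvalues give $\deff=k^2\lambda^2/(k\lambda^2)=k$.

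The only point needing care is restricting to the positive eigenvalues before applying Cauchy--Schwarz. Applied to the full $N$-vector, the same inequality gives only $\deff\le N$, and its equality condition would wrongly demand that all $N$ eigenvalues be equal. Nothing else requires attention: the statement is pointwise in $\ell$, so no scale-dependence or regularity assumption on $G(\ell)$ enters.
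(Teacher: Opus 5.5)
Your proof is correct and follows the paper's argument essentially verbatim: apply Cauchy--Schwarz to the strictly positive eigenvalues against the constant vector of length $\rank G(\ell)$, divide by $\sum_i\lambda_i^2>0$, and obtain the equality case from proportionality of the two vectors. Your extra remarks do not change the route. The remark on symmetry is what makes the rank equal the number of non-zero eigenvalues, and the other explains why restricting to the positive spectrum matters; both are accurate.
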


\begin{proof}

Let
\[
S=\{i:\lambda_i>0\},
\qquad
r=|S|=\rank G(\ell).
\]
Applying the Cauchy--Schwarz inequality to
$(\lambda_i)_{i\in S}$ and the constant vector of length $r$ gives
\begin{equation}
 \left(\sum_{i\in S}\lambda_i\right)^2
 \le
 r\sum_{i\in S}\lambda_i^2.
\end{equation}
Eigenvalues outside $S$ vanish, so the sums may be extended to all $i$.
Division by $\sum_i\lambda_i^2>0$ yields
Eq.~\eqref{eq:deffbound}. Equality in Cauchy--Schwarz holds exactly when all
non-zero $\lambda_i$ are equal.

\end{proof}

For a parametrized statistical family, $G$ may be taken to be the classical
Fisher matrix of the chosen protocol. For a pure-state quantum family equipped
with the Fubini--Study metric, the corresponding metric is one quarter of the
quantum Fisher information matrix in the convention used here.

Unlike rank, however, the participation ratio is not invariant under an
arbitrary reparametrization of the state space. Under $q\to q'$,
\[
G\longrightarrow J^{\mathsf T}GJ,
\]
so the spectrum and hence $\deff$ generally change while the rank does not.
The coordinate choice must therefore be included in the specification of the
sector in which $\deff$ is interpreted. Table~\ref{tab:fisher}, for example,
uses the natural gate-angle coordinates of circuit~\eqref{eq:circuit}.

Rank and $\deff$ consequently answer different questions. Rank counts the
number of directions that are locally distinguishable at all, whereas
$\deff$ is a weighted spectral diagnostic that decreases as those directions
become anisotropic. For a rank-four sector with four strictly positive
eigenvalues,
\begin{equation}
 1<\deff\le4.
\end{equation}
Values arbitrarily close to one are compatible with rank four under strong
anisotropy, whereas $\deff=4$ requires equality of the four non-zero
eigenvalues. Thus an observed $\deff<4$ is not evidence against a
four-dimensional local rank.

\begin{corollary}[A one-sided rank bound that needs no completeness]
\label{cor:onesided}

Let $G=F^{(M)}$ be the Fisher metric of any admissible protocol, or of any
finite labeled composite of admissible protocols, on a region where
Lemma~\ref{lem:fisher} applies. Then
\begin{equation}
 \deff(G)
 \le
 \rank G
 \le
 \dop(q).
 \label{eq:onesided}
\end{equation}
Consequently, $\deff(G)>4$ for the exact pulled-back Fisher metric implies
$\dop(q)\ge5$ for the independently specified sector being probed. For an
experimental estimate of $\deff$, the excess above four must remain robust
under the corresponding finite-sample uncertainty analysis.

\end{corollary}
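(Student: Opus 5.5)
The plan is to prove the two inequalities in Eq.~\eqref{eq:onesided} separately and then read off the consequence. The first inequality, $\deff(G)\le\rank G$, is Proposition~\ref{prop:deffrank} applied to $G=F^{(M)}$. Its hypotheses are that $G$ is a non-zero positive semidefinite matrix. If $G=0$, then $\deff$ is undefined, and the statement is read vacuously or with the convention $\deff=0$; in that case the bound holds trivially. Positive semidefiniteness is immediate from $F=J^{\mathsf T}WJ$ with $W$ positive definite on the support. For a labeled composite, it follows from Eq.~\eqref{eq:mixture-fisher} as a positive combination of such terms.

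For the second inequality, $\rank G\le\dop(q)$, I would use Lemma~\ref{lem:fisher} on the region where it applies. That lemma gives $\Ker F^{(M)}(q)=\Ker d\Phi_{M,q}$. By the definition \eqref{eq:vnull}, $\Vnull(q)\subseteq\Ker d\Phi_{M,q}$ for every $M\in\Madm$. Hence
\[
\rank F^{(M)}=N-\dim\Ker d\Phi_{M,q}\le N-\dim\Vnull(q)=\dop(q).
\]
For a finite labeled composite $M_{\rm comp}$ of admissible protocols, Eq.~\eqref{eq:mixture-kernel} gives $\Ker F^{(M_{\rm comp})}=\bigcap_i\Ker F^{(M_i)}$. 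Each term of this intersection contains $\Vnull(q)$, so the same codimension count applies. This argument does not need closure of $\Madm$ under mixtures, which is why no completeness or closure assumption is required. Alternatively, when $\Madm$ is closed under mixtures, the bound is the supremum statement of Lemma~\ref{lem:composite}.

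The consequence then follows by integrality. If $\deff(G)>4$, then $\rank G>4$, and since the rank is an integer, $\rank G\ge5$. Therefore $\dop(q)\ge5$.

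The only delicate point, and the step I would flag as the main obstacle, is making sure the exact pulled-back metric lives on the same parameter manifold $Q$ as the $\dop$ being bounded. Otherwise the inequality compares ranks of different objects, as warned in Remark~\ref{rem:threeprimary}. I would handle this by stating that $G$ and $\Vnull$ are both taken in the same instantiation of Section~\ref{sec:smooth}. The experimental clause about finite-sample robustness is an interpretive caveat, not part of the proof. It only says that the hypothesis $\deff>4$ must hold for the exact metric, which a robust estimate is meant to certify.
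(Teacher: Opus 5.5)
Your proposal is correct and follows essentially the paper's own argument. The paper leaves the proof implicit: the first inequality is Proposition~\ref{prop:deffrank} applied to the positive semidefinite matrix $G$; the second comes from the inclusion $\Vnull(q)\subseteq\Ker d\Phi_{M,q}\subseteq\Ker F^{(M)}(q)$ noted after Lemma~\ref{lem:composite}, extended to composites via Eq.~\eqref{eq:mixture-kernel}; and the consequence follows from integrality of the rank. Your remark that the composite need not itself lie in $\Madm$ is exactly why no completeness or mixture-closure assumption is needed, and the inclusion $\Ker J\subseteq\Ker J^{\mathsf T}WJ$ holds even without the non-degeneracy condition of Lemma~\ref{lem:fisher}.
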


This inequality has the same one-sided character as the experimental logic of
Section~\ref{sec:calibration}. Measurement incompleteness can reduce the
pulled-back rank but cannot create additional independent directions.
Therefore $\deff>4$, once statistically resolved for an independently
specified sector, supplies a lower bound requiring at least five local
directions. The converse does not hold:
$\deff\le4$ is compatible with a larger rank, and even
$\deff=4$ does not imply rank four. For example,
\[
(1,1,1,1/2,\varepsilon),
\qquad
\varepsilon\approx0.1126,
\]
has participation ratio four while having rank five. The participation ratio
is therefore useful as a lower-bound diagnostic of local smooth rank, not as a
standalone estimator of operational rank.

\begin{proposition}[Rank monotonicity under coarse-graining]
\label{prop:coarse}

Let $\mathcal M_{\rm coarse}$ and $\mathcal M_{\rm fine}$ be admissible
families such that for every $M_c\in\mathcal M_{\rm coarse}$ there exist
$M_f\in\mathcal M_{\rm fine}$ and a parameter-independent stochastic map
$\Lambda_c$ satisfying
\[
\Phi_{M_c}=\Lambda_c\circ\Phi_{M_f}.
\]
Then
\begin{equation}
\begin{aligned}
 \Vnull^{\rm coarse}(q)
 &\supseteq
 \Vnull^{\rm fine}(q),\\
 \dop^{\rm coarse}(q)
 &\le
 \dop^{\rm fine}(q).
\end{aligned}
\label{eq:coarse}
\end{equation}

\end{proposition}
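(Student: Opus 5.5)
The plan is to prove the kernel inclusion pointwise by the chain rule, and then obtain the rank inequality by taking codimensions in $T_qQ$, using Eq.~\eqref{eq:dop}. The only structural input is the factorization $\Phi_{M_c}=\Lambda_c\circ\Phi_{M_f}$ with $\Lambda_c$ independent of $q$.

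First I fix a regular point $q$ and a coarse protocol $M_c\in\mathcal M_{\rm coarse}$, and choose $M_f$ and $\Lambda_c$ as in the hypothesis. A stochastic map between finite outcome simplices is the restriction of a linear map (a column-stochastic matrix) to $\Delta_{M_f}$. Because $\Lambda_c$ does not depend on $q$, it is smooth, and the chain rule gives
\[
d\Phi_{M_c,q}=\Lambda_c\circ d\Phi_{M_f,q},
\]
where $\Lambda_c$ now denotes the linear map on the tangent space of the simplex. Hence $\Ker d\Phi_{M_f,q}\subseteq\Ker d\Phi_{M_c,q}$.

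Second, I take intersections. By Eq.~\eqref{eq:vnull},
\[
\Vnull^{\rm fine}(q)=\bigcap_{M\in\mathcal M_{\rm fine}}\Ker d\Phi_{M,q}.
\]
This intersection is contained in $\Ker d\Phi_{M_f,q}$ for the particular $M_f$ attached to $M_c$, and therefore in $\Ker d\Phi_{M_c,q}$. Since $M_c$ was arbitrary, intersecting over $\mathcal M_{\rm coarse}$ yields $\Vnull^{\rm fine}(q)\subseteq\Vnull^{\rm coarse}(q)$. A larger subspace gives a smaller quotient, so
\[
\dop^{\rm coarse}(q)=\dim T_qQ-\dim\Vnull^{\rm coarse}(q)\le\dop^{\rm fine}(q).
\]
If $T_qQ$ is infinite-dimensional, I would instead use the surjection $T_qQ/\Vnull^{\rm fine}\to T_qQ/\Vnull^{\rm coarse}$.

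The only delicate point is that the factorization is stated at the level of outcome distributions, not Fisher matrices. Working with kernels of differentials, as the definition of $\Vnull$ does, avoids any need for support or non-degeneracy assumptions. If one wanted the Fisher-matrix version, Lemma~\ref{lem:fisher} on a locally constant support region converts kernels of $d\Phi$ into kernels of $F$, so the same inclusion holds there as well. I expect no further obstacle beyond noting that the parameter-independence of $\Lambda_c$ is what makes the chain rule apply without an extra term.
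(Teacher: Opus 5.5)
Your proposal is correct and follows the paper's proof essentially step for step. You use the chain rule with the parameter-independent $\Lambda_c$ to get $\Ker d\Phi_{M_f,q}\subseteq\Ker d\Phi_{M_c,q}$, intersect over $\mathcal M_{\rm coarse}$ to get the null-space inclusion, take codimensions via Eq.~\eqref{eq:dop}, and invoke Lemma~\ref{lem:fisher} for the Fisher version; your extra remarks on the linearity of $\Lambda_c$ and the infinite-dimensional case are harmless refinements that the paper does not need.
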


\begin{proof}

Fix $v\in\Vnull^{\rm fine}(q)$ and
$M_c\in\mathcal M_{\rm coarse}$, and choose $M_f$ and $\Lambda_c$ as in the
hypothesis. Since $\Lambda_c$ is independent of the parameters, the chain rule
gives
\[
d\Phi_{M_c,q}
=
\Lambda_c\,d\Phi_{M_f,q}.
\]
Because $v$ is annihilated by $d\Phi_{M_f,q}$, it is also annihilated by
$d\Phi_{M_c,q}$. As $M_c$ was arbitrary,
$v\in\Vnull^{\rm coarse}(q)$, proving the inclusion of null spaces.
Equation~\eqref{eq:dop} then gives the inequality of operational dimensions.
The corresponding Fisher-rank statement follows from
Lemma~\ref{lem:fisher}.

\end{proof}

Proposition~\ref{prop:coarse} formalizes the direction of the scale flow used
throughout the dimensional-reduction discussion. A parameter-independent
coarse-graining can remove locally distinguishable directions but cannot
create new ones. Hence a transition from a higher microscopic rank to a
lower-rank macroscopic observer sector is structurally natural. What the
monotonicity does not determine is the endpoint: nothing in
Proposition~\ref{prop:coarse} selects four rather than three, five, or another
rank. The value four must enter through the independently specified primary
structure or through empirical input.

A final notational distinction is important. The symbol $d_{\rm eff}$ has
different meanings in the two companion constructions. In
Ref.~\cite{before}, it denotes the emergent effective dimension associated
with the minimum self-consistent operational construction, whose physical
value is four under Hypothesis H. In Ref.~\cite{dg}, it denotes the generally
non-integer participation ratio of the eigenvalues of a local
distinguishability metric at observational scale. The present paper reserves
$\dop$ for the protocol-independent local operational dimension of
Section~\ref{sec:smooth} and uses $\deff$ only for the participation-ratio
diagnostic. Equality between these quantities requires additional spectral
and protocol assumptions and should never be inferred from notation alone.

\section{Conclusion}

We have established sufficient conditions under which a compositional
operational rank can be transferred consistently across two changes of
description: first to a smooth tangent realization and then to measurable
Fisher information. The central structural requirement is
closure-faithfulness, which demands preservation of the operational
dependency relations rather than agreement of total ranks alone. Under the
corresponding smooth-realization and observer-identification assumptions,
operational rank, tangent rank, and Fisher rank can therefore be related
without being identified by definition. The construction holds for arbitrary
finite rank $r$.

At $r=4$, this bridge provides a conditional operational criterion for
dimensional reduction in higher-dimensional theories. The value four is 
not selected by the bridge itself: it enters through the rank-four 
construction of Ref.~\cite{before}, where C1--C4 select a representation 
dimension $n=4$. Applying that construction to the primary sector requires 
PR0, while identifying $n$ with the primary operational rank additionally 
requires Hypothesis H. Critical microscopic dimension and observer-accessible 
operational rank therefore remain distinct invariants. In particular,
$\rank F_Q^{\rm micro}>4$ is fully compatible with the framework; the result
places no universal four-dimensional bound on microscopic state spaces.

The explicit higher-dimensional examples separate three qualitatively
different regimes. In freely generated commuting reductions, removal of
compact directions by deletion or contraction can preserve the operational
dependency structure. If compact-sector information survives, as for an
independently accessible Wilson-line holonomy, it can contribute an additional
primary operational direction rather than disappear with the compact
coordinate. In non-Abelian compositional realizations, the bridge can fail
more fundamentally because finite operational generation need not agree with
linear dependence of tangent representatives. Thus obtaining the desired
reduced dimension is weaker than preserving the operational structure that
gives that dimension physical meaning.

The quantum-circuit experiments provide an independent negative control at
the measurement end of the construction. Two closely related six-parameter
families exhibit the same restricted Fisher rank four, while the same
pre-specified enlarged protocol reveals rank five in one family and rank six
in the other. This distinction is reproduced across three IBM Quantum
processors and remains robust under changes of sampling budget and calibration
epoch. The experiments therefore demonstrate that a stable restricted Fisher
rank does not determine the full locally resolvable rank. They support the
need for measurement completeness as an independent assumption, while neither
measuring primary operational rank nor providing evidence for fundamental
four-dimensionality.

The next step is to apply the criterion to a concrete higher-dimensional
model in which the primary sector, elementary transitions, compositional
closure, and observer map are specified independently of the measured
spectrum. The most discriminating case would be a reduction that achieves the
required dimensional count while satisfying the admissibility conditions of
the bridge but nevertheless fails dependency preservation. Such an example
would turn the present structural criterion into a direct test of a specific
higher-dimensional reduction mechanism.

\section*{Statements and Declarations}

\paragraph{Funding.}
The authors received no specific funding for this work.

\paragraph{Competing interests.}
The authors declare no competing interests.

\paragraph{Generative AI and AI-assisted technologies.}
The authors used generative AI tools, including Claude (Anthropic), ChatGPT
(OpenAI), and Gemini (Google), to assist with language editing, stylistic
refinement, consistency of notation and terminology, reference cross-checking,
and the development and debugging of Python code used in the numerical
analysis. All AI-assisted code, calculations, references, and scientific
claims were independently checked and, where necessary, revised by the
authors. The authors take full responsibility for the content of the
manuscript.

\section*{Data Availability Statement}

The raw IBM Quantum counts, calibration snapshots, transpilation metadata,
numerical benchmarks, reconstructed Fisher matrices, bootstrap outputs, and
machine-readable analysis results used in this work are publicly available
in the Zenodo reproducibility package associated with this article
\url{https://doi.org/10.5281/zenodo.22144236}~\cite{repro}.

\section*{Code Availability Statement}

The source code used for the exact Fisher analysis, hardware workloads,
shot-count study, temporal replication, and reconstruction of the reported
results is available in the same Zenodo reproducibility package. Detailed
runtime identifiers, physical-qubit selections, shot budgets, and related
reproducibility metadata are listed in Appendix~\ref{app:reproducibility}.

\appendix

\section{Logical status of the main implications}

For clarity, the complete logical chain used in the paper is
\begin{equation}
\mathrm{C1\text{--}C4}
\Longrightarrow n=4.
\label{eq:app1}
\end{equation}

\begin{equation}
\mathrm{PR0}+H
\Longrightarrow
\rank\Apr([s])=4.
\label{eq:app2}
\end{equation}

\begin{equation}
\begin{aligned}
&\text{Eq.~\eqref{eq:app2}}
+\text{physical realization}
+\mathrm{B2\text{-}R}\\
&\quad+\mathrm{RB3}\\
&\qquad\Longrightarrow
\rank\Dprimary=4.
\end{aligned}
\label{eq:app3}
\end{equation}

\begin{equation}
\begin{aligned}
&\text{Eq.~\eqref{eq:app3}}
+\mathrm{RB1}
+\mathrm{RB2}
+\mathrm{RB4}\\
&\qquad\Longrightarrow
\dim\Lprimary=4.
\end{aligned}
\label{eq:app4}
\end{equation}

\begin{equation}
\begin{aligned}
&\text{Eq.~\eqref{eq:app4}}
+\mathrm{PI}
+\text{local resolution}\\
&\quad+\text{finite-mixture closure}\\
&\quad+\text{locally constant support}\\
&\qquad\Longrightarrow
\rank F_{\rm primary}^{(M_{\rm comp})}=4.
\end{aligned}
\label{eq:app5}
\end{equation}
No implication above yields $\rank F_Q^{\rm arbitrary}\le4$. Each implication
also holds with $4$ replaced throughout by an arbitrary primary rank $r$,
except for the specific input from Ref.~\cite{before} in Eq.~\eqref{eq:app1}.

\section{Compact proof of the composite-protocol lemma}

For a labeled mixture $M_*=\sum_i\lambda_iM_i$ with $\lambda_i>0$ and
$\sum_i\lambda_i=1$, the outcome probability is $p^{(*)}_{ij}=\lambda_i
p^{(i)}_j$. Therefore Fisher information is additive:
\begin{equation}
 F^{(*)}=\sum_i\lambda_iF^{(i)}.
 \label{eq:additiveF}
\end{equation}
For positive-semidefinite matrices,
\begin{equation}
 \Ker\!\left(\sum_i\lambda_iF^{(i)}\right)
 =\bigcap_i\Ker F^{(i)}.
 \label{eq:kernelintersection}
\end{equation}
Starting with $K_0=T_qQ$, whenever $K_m$ strictly contains $\Vnull$ one can
choose an admissible protocol whose kernel does not contain all of $K_m$.
Intersecting with that kernel lowers the dimension by at least one. The
procedure terminates after at most $N-\dim\Vnull$ steps.

\section{A no-go example for generator counting}

Let $\mathfrak h_3=\operatorname{span}\{X,Y,Z\}$ with
\begin{equation}
 [X,Y]=Z,
 \qquad
 [X,Z]=[Y,Z]=0.
 \label{eq:heisenberg}
\end{equation}
The algebra is generated by two elements $X,Y$, so $\mu(\mathfrak h_3)=2$. If
the corresponding Heisenberg group acts on itself by left translations, the
stabilizer is trivial and the orbit dimension is three. Hence
\begin{equation}
 \mu(\mathfrak h_3)=2\neq3=\dim(\mathfrak h_3/\mathfrak h_x).
 \label{eq:heiscount}
\end{equation}
The number of primitive generators therefore does not determine orbit
dimension.

\section{Reproducibility details}
\label{app:reproducibility}

The exact Fisher spectra of Section~\ref{sec:calibration} are reproduced by
the accompanying analysis code, which constructs the circuit of
Eq.~\eqref{eq:circuit}, evaluates analytic state derivatives, and computes the
restricted, tomographically complete product-Pauli, and quantum Fisher
matrices. The 27-setting Pauli protocol is equally weighted, the numerical
support and rank tolerances are those specified in
Eqs.~\eqref{eq:FICweight}--\eqref{eq:ranktol}, and the random-point benchmark
uses seed \texttt{20260815}.

The hardware protocols were fixed before acquisition by local scripts that
identified the rank-revealing Pauli extension, selected the working points,
tested the finite-shot null calibration, and performed device-noise rehearsals.

\paragraph{Initial cross-backend negative control.}
The pre-specified 26-circuit workload of Section~\ref{sec:calibration} was
executed independently on three 156-qubit IBM Quantum processors with
$1024$ shots per circuit:
\begin{center}
\small
\begin{tabular}{@{}ll@{}}
\textbf{Backend / qubits} & \textbf{IBM Runtime job ID}\\
\texttt{ibm\_marrakesh} $(13,14,15)$ &
\texttt{da46dh61vhnc73fkk76g}\\
\texttt{ibm\_fez} $(97,107,108)$ &
\texttt{da46sqbotlns739afuh0}\\
\texttt{ibm\_kingston} $(140,141,142)$ &
\texttt{da46uu61vhnc73fkkra0}
\end{tabular}
\end{center}

\paragraph{Paired $4\to5$ and $4\to6$ control.}
The frozen 52-circuit workload at the working point of
Eq.~\eqref{eq:qstar456}, using the restricted protocol $ZZZ$ and revealing
extension $YXX$, was executed with $1024$ shots per circuit:
\begin{center}
\small
\begin{tabular}{@{}ll@{}}
\textbf{Backend / qubits} & \textbf{IBM Runtime job ID}\\
\texttt{ibm\_kingston} $(140,141,142)$ &
\texttt{da5h84s3jnrc73aha0hg}\\
\texttt{ibm\_marrakesh} $(53,54,55)$ &
\texttt{da5hjle1vhnc73fm3hv0}\\
\texttt{ibm\_fez} $(21,22,23)$ &
\texttt{da5ifmeaa69c739kti60}
\end{tabular}
\end{center}
Each job contained two circuit families, two measurement settings, and
thirteen parameter configurations per setting, for $52$ circuits and
$53248$ shot executions. The IBM-reported QPU charge time was
$16\,\mathrm{s}$ for each backend.

\paragraph{Shot-count robustness.}
The same paired workload was repeated on \texttt{ibm\_marrakesh} using
physical qubits $(53,54,55)$:
\begin{center}
\small
\begin{tabular}{@{}rll@{}}
\textbf{Shots} & \textbf{IBM Runtime job ID} & \textbf{QPU time}\\
256  & \texttt{da5j8pbotlns739c09hg} & $6\,\mathrm{s}$\\
512  & \texttt{da5j92maa69c739kuebg} & $9\,\mathrm{s}$\\
1024 & \texttt{da5j9es3jnrc73ahc7og} & $16\,\mathrm{s}$\\
2048 & \texttt{da5j9s6aa69c739kuf5g} & $30\,\mathrm{s}$
\end{tabular}
\end{center}

\paragraph{Temporal replication.}
The frozen paired workload was repeated on \texttt{ibm\_kingston} under a
later calibration snapshot. The selected physical chain changed from
$(140,141,142)$ to $(122,123,124)$. The run again used $52$ circuits with
$1024$ shots per circuit. The IBM Runtime job identifier was
\texttt{da6mrrc6l22c73dmhoag}, and the reported QPU charge time was
$16\,\mathrm{s}$.

For every hardware dataset, the reproducibility archive contains the frozen
working point and protocol definition, logical and transpiled circuits,
calibration snapshots, transpilation metadata, raw counts, IBM Runtime job
metadata, reconstructed Fisher matrices, bootstrap null and signal
distributions, and machine-readable analysis reports. The complete archived
package is available at
\url{https://doi.org/10.5281/zenodo.22144236}~\cite{repro}.

\bibliographystyle{spphys}
\bibliography{refs_epjc}

\end{document}